\PassOptionsToPackage{dvipsnames,svgnames,x11names}{xcolor}

\documentclass{lmcs}
\usepackage[utf8]{inputenc}
\pdfoutput=1

% LMCS Layouting Macros
\usepackage{lastpage}
\lmcsdoi{19}{3}{2}
\lmcsheading{}{\pageref{LastPage}}{}{}%
{Dec.~01,~2021}{Jul.~07,~2023}{}

\usepackage{algpseudocode}
\usepackage{subcaption}
\captionsetup[figure]{font=small,labelfont=small}
\usepackage{dsfont}
\usepackage{url}
\usepackage{tikz}
\usetikzlibrary{calc,arrows,positioning}
\usepackage{stmaryrd}
\usepackage{wasysym}
\usepackage{xfrac}

\allowdisplaybreaks%

%% MACROS PER PROC E SYS

\newcommand{\pstep}{\mathsf{pstep}}
\newcommand{\cstep}{\mathsf{cstep}}

 % chktex 26

\newcommand{\nats}{\mathbb{N}}
\newcommand{\borel}{{\mathcal{B}}}
\newcommand{\reals}{\mathbb{R}}

\newcommand{\rv}{\textbf{X}}

%% MACROS PER CONTROLLORI %%

\newcommand{\Proc}{\mathcal{P}}
\newcommand{\proc}{P}
\newcommand{\D}{\mathcal{D}}
\newcommand{\ds}{\mathbf{d}}

\newcommand{\system}{\mathbf{s}}

\newcommand{\mesA}{\mathbb{A}}
\newcommand{\mesB}{\mathbb{B}}
\newcommand{\mesP}{\mathbb{P}}
\newcommand{\mesD}{\mathbb{D}}
\newcommand{\mesC}{\mathbb{C}}

\newcommand{\distrib}{\Pi}

\newcommand{\PP}{\mathit{Pr}}

\newcommand{\config}[3]{\langle#1,#2\rangle_{#3}}

\newcommand{\Var}{\mathrm{Var}}

\DeclareMathOperator{\Wasserstein}{\mathbf{W}}

\newcommand{\cmerge}{~|~}

\newcommand{\wact}[2]{(#1\rightarrow #2)}

\newcommand{\dirac}{\mathbf{\delta}}

\newcommand{\OT}{\mathrm{OT}}

\newcommand{\traccione}{evolution sequence}
\newcommand{\tracciones}{evolution sequences}
\newcommand{\Traccione}{Evolution sequence}

\newcommand{\spell}{evolution }

\newcommand{\uno}{\boldsymbol{1}}

%snapshot
%data assignment
%situation
%data space
\newcommand{\dataspace}{data space}
\newcommand{\dataspaces}{data spaces}
\newcommand{\Dataspace}{Data space}

\newcommand{\datastate}{data state}
\newcommand{\datastates}{data states}
\newcommand{\Datastate}{Data state}

\newcommand{\ES}{\mathcal{S}}

%% LETTERE %%

%\newcommand{\cc}{\mathrm{c}}
\newcommand{\C}{\mathcal C}

\newcommand{\dd}{\mathrm{d}}

\newcommand{\E}{\mathcal E}

\newcommand{\F}{\Sigma}

\newcommand{\m}{\mathfrak{m}}

\newcommand{\W}{\mathfrak{W}}
\newcommand{\w}{\mathfrak{w}}

\newcommand{\nat}{\mathbb{N}} % Naturals
\newcommand{\real}{\mathbb{R}} % Reals

%%%%%%%%% Macros for PTSs and PTSSs

\hyphenation{PGSOS}

\newcommand{\support}{\mathsf{supp}}

\newcommand{\var}[1]{\mathrm{var}(#1)}

%%%%% Macros for QAPL16

%%%%%%%%%%% Macros for TOCL

%%%%% Macros for CONCUR16 and CONCUR 2.0

%%%%%% Macros for QAPL17

%\newcommand{\val}[2]{[#1](#2)}

%%%%%%%%%%% Macros for CONCUR17

%%%%%%%%%%%%% Macros for ETAPS18

%\makeatletter
%\newcommand{\xRightarrow}[2][]{\ext@arrow 0359\Rightarrowfill@{#1}{#2}}
%\makeatother

%%%%%%%%%%%%% Macros for MetricLinda

%\newcommand{\ball}{\mathcal{B}_{\tempdist}^{\temp,\varepsilon}}
%\newcommand{\ballInd}[1]{\mathcal{B}_{\tempdist}^{\temp_{#1},\varepsilon}}

%\newcommand{\balltemppr}[1]{\pi^{\temp,\varepsilon}_{#1 \tempdist}}
%\newcommand{\tempball}{\varolessthan_{\tempdist}^{\varepsilon}}
%\newcommand{\Gball}{\G^{\temp,\varepsilon}_{\tempdist}}
%\newcommand{\Rball}{\R^{\temp,\varepsilon}_{\tempdist}}

\newcommand{\ropen}[1]{[#1)} % chktex 9
\newcommand{\lopen}[1]{(#1]} % chktex 9

%%%% Add possibility to continue examples %%%%%%%%

%\newenvironment{contexample}{
%   \addtocounter{example}{-1} \begin{example}[continued]}{
%   \end{example}}

%%%%%%%%%%%%%%%%%%%%%%%%%%%%%%%%%%%%%%%%%%%%%%

\begin{document}

\title[A framework to measure the robustness of programs]
{A framework to measure the robustness of programs in the unpredictable environment{\rsuper*}}
\titlecomment{{\lsuper*}A preliminary version of this paper appeared as~\cite{CLT21}.}

\author[V.~Castiglioni]{Valentina Castiglioni\lmcsorcid{0000-0002-8112-6523}}[a]
\author[M.~Loreti]{Michele Loreti\lmcsorcid{0000-0003-3061-863X}}[b]
\author[S.~Tini]{Simone Tini\lmcsorcid{0000-0002-3991-5123}}[c]

\address{Reykjavik University, Iceland}
\email{valentinac@ru.is, vale.castiglioni@gmail.com}

\address{University of Camerino, Italy}
\email{michele.loreti@unicam.it}

\address{University of Insubria, Italy}
\email{simone.tini@uninsubria.it}

\keywords{Robustness, Adaptability, Reliability, Uncertain environment}

\begin{abstract}
Due to the diffusion of IoT, modern software systems are often thought to control and coordinate smart devices in order to manage assets and resources, and to guarantee efficient behaviours.
For this class of systems, which interact extensively with humans and with their environment, it is thus crucial to guarantee their ``correct'' behaviour in order to avoid unexpected and possibly dangerous situations.
In this paper we will present a framework that allows us to measure the robustness of systems.
This is the ability of a program to tolerate changes in the environmental conditions and preserving the original behaviour.
In the proposed framework, the interaction of a program with its environment is represented as a sequence of random variables describing how both evolve in time.
For this reason, the considered measures will be defined among probability distributions of observed data.
The proposed framework will be then used to define the notions of adaptability and reliability.
The former indicates the ability of a program to absorb perturbation on environmental conditions after a given amount of time. The latter expresses the ability of a program to maintain its intended behaviour (up-to some reasonable tolerance) despite the presence of perturbations in the environment.
Moreover, an algorithm, based on statistical inference, is proposed to evaluate the proposed metric and the aforementioned properties.
We use two case studies to the describe and evaluate the proposed approach.
\end{abstract}

\maketitle

%==============================================
% sec. intro
%===============================================

\section{Introduction}%
\label{sec:introduction}

With the advent of \emph{IoT era}~\cite{K11} we witnessed to the diffusion of a plethora of \emph{smart interconnected devices} that permit
\emph{sensing data} from the \emph{environment} and to \emph{operate} on it.
This revolution stimulated the design of increasingly pervasive systems, often named \emph{Cyber-Physical Systems} (CPSs)~\cite{RLSS10}, that are thought to control and coordinate these devices in order to manage assets and resources, and to guarantee efficient behaviours.
In particular, it is crucial to guarantee that these devices always behave ``correctly'' in their interaction with humans and/or other devices, in order to avoid unexpected and potentially dangerous situations.

Considering that the behaviour of a program controlling the device is largely determined by its extensive interplay with the surrounding environment, this means that the design of CPSs must include some formal tool allowing us to describe how the environment evolves and, if needed, how it reacts to the actions of the program.
For instance, while programming an unmanned vehicle (UV) that has to autonomously set its trajectory to avoid obstacles, we have to consider the possible wind (in case of an aerial UV), or stream (for an underwater UV) scenarios describing the conditions under which the UV will operate.
However, given the dynamical and, sometimes, unpredictable behaviour of the phenomena that constitute the environment, these scenarios cannot be fully deterministic.
Instead, the \emph{evolution of the environment} should be described in terms of a \emph{stochastic process} that gives, at each step, a probabilistic measure of the \emph{observed dynamics} of the environment, including the likelihood of events that could constitute a hazard to our device, like, e.g., a gust of wind in the case of an aerial UV\@.
Being the dynamics continuous, it is natural for the stochastic process to be defined over a \emph{continuous state space}.
At the same time, the program controlling the system might incorporate \emph{discrete random behaviours} to mitigate the uncertainty in the environmental conditions.
As a consequence, the interaction between the program and the environment will be represented in terms of a stochastic process whose states describe how the environmental conditions and the state of the program evolve along a computation.

In the literature, we can find a wealth of proposals of stochastic and probabilistic models, as, e.g., Probabilistic Automata~\cite{S95}, Stochastic Process Algebras~\cite{HHHMR94,CH08,BNL13}, Labelled Markov Chains and Stochastic Hybrid Systems~\cite{CL07,HLS00}, \emph{Markov Decision Processes}~\cite{Put05}, and \emph{ad hoc} solutions for specific application contexts, as, e.g., establishing safety guarantees for drones flying under particular circumstances~\cite{VNGV16,HJSLVBAO14}.
Yet, in these studies, either the environment is not explicitly taken into account, or it is modelled only deterministically.
In addition to that, due to the variety of applications and heterogeneity of systems, no general formal framework to deal with these challenges has been proposed so far.
The lack of concise abstractions and of an automatic support makes the analysis and verification of the considered systems difficult, laborious, and error prone.
Hence, as a first contribution, in this paper we propose a semantic framework allowing us to model the behaviour of CPSs taking into account the continuous dynamics of the environment.

We can then focus on the analysis of the obtained behaviour.
One of the main challenges here is to quantify how \emph{variations in the environmental conditions} can affect the \emph{expected behaviour} of the system.
In this paper we will present a framework that allows us to \emph{measure the robustness} of systems.
This is the ability of a program to \emph{tolerate} changes in the environmental conditions and preserving the original behaviour.
Since the interaction of the program with the enclosing environment is represented as a sequence of random variables, the considered measures will be defined among probability distributions of observed data.
The proposed framework will then be used to define the robustness notions of \emph{adaptability} and \emph{reliability}.
As an example, if we consider the above mentioned aerial UV scenario, a perturbation can be caused by a gust of wind that moves the UV out of its trajectory.
We will say that the program controlling the UV is \emph{adaptable} if it can retrieve the initial trajectory within a suitable amount of time.
In other words, we say that a program is adaptable if no matter how much its behaviour is initially affected by the perturbations, it is able to react to them and regain its intended behaviour within a given amount of time.
On the other hand, it may be the case that the UV is able to detect the presence of a gust of wind and can oppose to it, being only slightly moved from its initial trajectory.
In this case, we say that the program is \emph{reliable}.
Hence, \emph{reliability} expresses the ability of a program to maintain its intended behaviour (up-to some reasonable tolerance) despite the presence of perturbations in the environment.

%================================================================

\subsection*{Detailed outline of paper contributions}

In this paper we first propose a semantic framework that permits modelling the behaviour of programs operating in an evolving environment.
In the proposed approach, the interaction among these two elements is represented in a purely \emph{data-driven} fashion.
In fact, while the environmental conditions are (partially) available to the program as a set of data, allowing it to adjust its behaviour to the current situation, the program is also able to use data to (partially) control the environment and fulfil its tasks.
The program-environment interplay is modelled in terms of the changes they induce on a set of application-relevant data, henceforth called \emph{\dataspace}.
This approach will allow for a significant simplification in modelling the behaviour of the program, which can be \emph{isolated} from that of the environment.
Moreover, as common to favour \emph{computational tractability}~\cite{AIB11,AKLP10}, a \emph{discrete time} approach is adopted.

In our model, a \emph{system} consists in \emph{three distinct components}:
\begin{enumerate}
\item a \emph{process} $\proc$ describing the behaviour of the program,
\item a \emph{\datastate{}} $\ds$ describing the current state of the \dataspace, and
\item an \emph{environment evolution} $\E$ describing the effect of the environment on $\ds$.
\end{enumerate}
As we focus on the interaction with the environment, we abstract from the internal computation of the program and model only its activity on $\ds$.
At each step, a process can \emph{read/update} values in $\ds$ and $\E$ applies on the resulting \datastate{}, providing a new \datastate{} at the next step.
To deal with the uncertainties, we introduce \emph{probability} at two levels:
\begin{enumerate}[(i)]
\item we use the \textbf{\emph{discrete}} \emph{generative probabilistic model}~\cite{vGSS95} to define processes, and
\item $\E$ induces a \textbf{\emph{continuous}} \emph{distribution} over \datastates.
\end{enumerate}
The behaviour of the system is then entirely expressed by its \emph{\traccione}, i.e., the sequence of distributions over \datastates{} obtained at each step.
Given the novelties of our model, as a side contribution we show that this behaviour defines a \emph{Markov process}.

\begin{rem}%
\label{rmk:continuity}
We remark that it is not possible to replace the continuous distribution induced by $\E$ with a discrete one without introducing limiting assumptions on the behaviour of the environment.
In fact, in the continuous setting, each \datastate{} in the support of the distribution is potentially reachable.
Conversely, when a discrete (or finite state) setting is considered, it is necessary to partition the \dataspace, so that some \datastates{} have to be identified in a single state, or they simply become unreachable.
Hence, a simplification to the discrete setting would result in a loss of information on the behaviour of the environment, thus making the model unfeasible to deal with the slight modifications on data induced by the uncertainties.
\end{rem}

The second contribution is the definition of a \emph{metric semantics} in terms of a (time-dependent) distance on the \tracciones{} of systems, which we call the \emph{\spell metric}.
The \emph{\spell metric} will allow us to:
\begin{enumerate}
\item verify how well a program is fulfilling its tasks by comparing it with its specification,
\item\label{item:two_programs} compare the activity of different programs in the same environment,
\item\label{item:two_environments} compare the behaviour of one program with respect to different environments and changes in the initial conditions.
\end{enumerate}
The \spell metric will consist of two components: a \emph{metric on \datastates{}} and the \emph{Wasserstein metric}~\cite{W69}.
The former is defined in terms of a (time-dependent) \emph{penalty function} allowing us to compare two \datastates{} only on the base of the objectives of the program.
The latter lifts the metric on \datastates{} to a metric on distributions on \datastates.
We then obtain a metric on \tracciones{} by considering the maximal of the Wasserstein distances over time.
The proposed metric is then used to formalise the notions of \emph{robustness}, \emph{adaptability}, and \emph{reliability}.

\begin{rem}
Notice that the analysis of behaviour presented in items (\ref{item:two_programs}) and (\ref{item:two_environments}) above are made possible by our choice of deriving the Markov process from the specification of the program $P$ and the environment $\E$, instead of taking it as given.
Hence, it is enough for us to substitute the specification of a process (respectively, the environment) with another one, and consider the \traccione{} induced by it to obtain the new behaviour.
This is what we do, for instance, in the case study in Section~\ref{sec:engine}, where we compare a genuine program with a compromised one working in the same environment.
\end{rem}

A \emph{randomised algorithm} that permits estimating the \tracciones{} of systems and thus for the evaluation of the \spell metric is then introduced.
Following~\cite{TK09}, the Wasserstein metric is evaluated in time $O(N \log N)$, where $N$ is the (maximum) number of samples.
We already adopted this approach in~\cite{CLT20a} in the context of finite-states self-organising collective systems, without any notion of environment or data space.
Moreover, we will also show how the proposed algorithm can be used to estimate the \emph{robustness}, \emph{adaptability}, and \emph{reliability}, of a system.

%========================================

\subsection*{Organisation of contents}

After reviewing some background mathematical notions in Section~\ref{sec:background}, we devote Section~\ref{sec:calcolo} to the presentation of our \emph{model}.
In Section~\ref{sec:metriche} we discuss the \emph{\spell metric} and all the ingredients necessary to define it.
Then, in Section~\ref{sec:computing} we provide the \emph{algorithm}, based on statistical inference, for the evaluation of the metric.
The notions of \emph{adaptability} and \emph{reliability} of programs are formally introduced in Section~\ref{sec:properties} together with an example of application of our algorithm to their evaluation. In these sections a running scenario is used to clarify the proposed approach. This consists of a stochastic variant of the three-tanks laboratory experiment described in~\cite{RKOMI97}.
In Section~\ref{sec:engine} our methodology is used to analyse the \emph{engine system} proposed in~\cite{LMMT21}. This consists of two supervised self-coordinating refrigerated engine systems that are subject to cyber-physical attacks aiming to inflict \emph{overstress of equipment}~\cite{GGIKLW2015}.
Finally, Section~\ref{sec:conclusion} concludes the paper with a discussion on related work and future research directions.

%===============================================
% sec- back
%===============================================

\section{Background}%
\label{sec:background}

In this section we introduce the mathematical background on which we build our contribution.
We remark that we present in detail only the notions that are fundamental to allow any reader to understand our work, like, e.g., those of probability measure and metric.
Other notions that are needed exclusively to guarantee that the Mathematical constructs we use are well-defined (e.g., the notion of Radon measure for Wasserstein hemimetrics) are only mentioned.
The interested reader can find all the formal definitions in any Analysis textbook.

%====================================

\subsection*{Measurable spaces and measurable functions}

A \emph{$\sigma$-algebra} over a set $\Omega$ is a family $\F$ of subsets of $\Omega$ such that:
\begin{enumerate*}
\item $\Omega \in \F$,
\item $\F$ is closed under complementation, and
\item $\F$ is closed under countable union.
\end{enumerate*}
The pair $(\Omega, \Sigma)$ is called a \emph{measurable space} and the sets in $\Sigma$ are called \emph{measurable sets}, ranged over by $\mesA,\mesB,\dots$.

For an arbitrary family $\Phi$ of subsets of $\Omega$, the $\sigma$-algebra \emph{generated} by $\Phi$ is the smallest $\sigma$-algebra over $\Omega$ containing $\Phi$.
In particular, we recall that given a topology $T$ over $\Omega$, the \emph{Borel $\sigma$-algebra} over $\Omega$, denoted $\borel(\Omega)$, is the $\sigma$-algebra generated by the open sets in $T$.
%\footnote{\textcolor{blue}{It is a common convention to identify a topological space $(\Omega,T)$ with the set $\Omega$ when $T$ is fixed, or unspecified, and thus to write $\borel(\Omega)$ in place of $\borel(\Omega,T)$ (see, e.g.,~\cite{Bo07}, page 1).}}.
For instance, given $n\in \nats^+$, $\borel(\reals ^n)$ is the $\sigma$-algebra generated by the open intervals in $\reals^n$.

Given two measurable spaces $(\Omega_i,\Sigma_i)$, $i = 1,2$, the \emph{product $\sigma$-algebra} $\Sigma_1 \otimes \Sigma_2$ is the $\sigma$-algebra on $\Omega_1 \times \Omega_2$ generated by the sets $\{\mesA_1 \times \mesA_2 \mid \mesA_i \in \Sigma_i\}$.

Given measurable spaces $(\Omega_1,\Sigma_1), (\Omega_2,\Sigma_2)$,
a function $f \colon \Omega_1 \to \Omega_2$ is said to be $\Sigma_1$-\emph{measurable} if $f^{-1}\!(\mesA_2) \!\in\! \Sigma_1$ for all $\mesA_2 \!\in\! \Sigma_2$, with $f^{-1}(\mesA_2) \! = \! \{\omega \!\in\! \Omega_1 \mid f(\omega) \in \mesA_2\}$.

%====================================

\subsection*{Probability spaces and random variables}

A \emph{probability measure} on a measurable space $(\Omega,\Sigma)$ is a function $\mu \colon \Sigma \to [0,1]$ such that $\mu(\Omega) = 1$, $\mu(\mesA) \ge 0$ for all $\mesA \in \Sigma$, and $\mu( \bigcup_{i \in I} \mesA_i) = \sum_{i \in I}\mu(\mesA_i)$ for every countable family of pairwise disjoint measurable sets $\{\mesA_i\}_{i\in I} \subseteq \Sigma$.
Then $(\Omega,\Sigma, \mu)$ is called a \emph{probability space}.
We let $\distrib{(\Omega,\F)}$ denote the set of all probability measures over $(\Omega,\F)$.

For $\omega \in \Omega$, the \emph{Dirac measure} $\dirac_{\omega}$ is defined by $\dirac_\omega (\mesA) = 1$, if $\omega \in \mesA$, and $\dirac_{\omega}(\mesA) = 0$, otherwise, for all $\mesA \in \F$.
Given a countable set of reals $(p_i)_{i \in I}$ with $p_i \ge 0$ and $\sum_{i \in I}p_i = 1$, the \emph{convex combination} of the probability measures $\{\mu_i\}_{i \in I} \subseteq \distrib{(\Omega,\F)}$ is the probability measure $\sum_{i \in I} p_i \cdot \mu_i$ in $\distrib{(\Omega,\F)}$ defined by $(\sum_{i \in I} p_i \cdot \mu_i)(\mesA) = \sum_{i \in I} p_i \mu_i(\mesA)$, for all $\mesA \in \F$.
A probability measure $\mu \in \distrib{(\Omega,\F)}$ is called \emph{discrete} if $\mu=\sum_{i \in I}p_i \cdot \dirac_{\omega_i}$, with $\omega_i \in \Omega$, for some countable set of indexes $I$.
The \emph{support} of a discrete probability measure $\mu$ is defined as $\support(\mu) = \{\omega \in \Omega \mid \mu(\omega) > 0\}$.

Assume a probability space $(\Omega, \F, \mu)$ and a measurable space $(\Omega', \F')$.
A function $X \colon \Omega \to \Omega'$ is called a \emph{random variable} if it is $\Sigma$-measurable.
The \emph{distribution measure}, or \emph{cumulative distribution function} (cdf), of $X$ is the probability measure $\mu_X$ on $(\Omega',\F')$ defined by $\mu_X(\mesA)=\mu(X^{-1}(\mesA))$ for all $\mesA \in \F'$.
Given random variables $X_i$ from $(\Omega_i,\F_i, \mu_i)$ to $(\Omega'_i,\F'_i)$, $i=1,\ldots,n$, the collection $\rv = [X_1,\ldots,X_n]$ is called a \emph{random vector}.
The cdf of a random vector $\rv$ is given by the \emph{joint} distribution of the random variables in it.

\begin{nota}
With a slight abuse of notation, in the examples and explanations throughout the paper, we will sometimes use directly the cdf of a random variable rather than formally introducing the probability measure defined on the domain space.
In fact, since we will consider Borel sets over $\real^n$, for some $n \ge 1$, it is usually easier to define the cdfs than the probability measures on them.
Similarly, when the cdf is absolutely continuous with respect to the Lebesgue measure, then we shall reason directly on the \emph{probability density function} (pdf) of the random variable, namely the Radon-Nikodym derivative of the cdf with respect to the Lebesgue measure.

Consequently, we shall also henceforth use the more suggestive term \emph{\bf distribution} in place of the terms probability measure, cdf and pdf.
\end{nota}

%==================================

\subsection*{The Wasserstein hemimetric}

A \emph{metric} on a set $\Omega$ is a function $m \colon \Omega \times \Omega \to \real^{\ge0}$ s.t.\ $m(\omega_1,\omega_2) = 0$ if{f} $\omega_1 = \omega_2$, $m(\omega_1,\omega_2) = m(\omega_2,\omega_1)$, and $m(\omega_1,\omega_2) \le m(\omega_1,\omega_3) + m(\omega_3,\omega_2)$, for all $\omega_1,\omega_2,\omega_3 \in \Omega$.
We obtain a \emph{pseudometric} by relaxing the first property to $m(\omega_1,\omega_2) = 0$ if $\omega_1=\omega_2$.
Then, a \emph{hemimetric} is a pseudometric that is not necessarily symmetric.
A (pseudo-, hemi-)metric $m$ is $l$-\emph{bounded} if $m(\omega_1,\omega_2) \le l$ for all $\omega_1,\omega_2 \in\Omega$. % chktex 36
For a (pseudo-, hemi-)metric on $\Omega$, the pair $(\Omega,m)$ is a (\emph{pseudo}-, \emph{hemi-})\emph{metric space}. % chktex 36

Given a (pseudo-, hemi-)metric space $(\Omega,m)$, the (pseudo-, hemi-)metric $m$ induces a natural topology over $\Omega$, namely the topology generated by the open $\varepsilon$-balls, for $\varepsilon > 0$, $B_{m}(\omega,\varepsilon) = \{\omega' \in \Omega \mid m(\omega,\omega') < \varepsilon\}$. % chktex 36
We can then naturally obtain the Borel $\sigma$-algebra over $\Omega$ from this topology.

In this paper we are interested in defining a \emph{hemimetric on distributions}.
To this end we will make use of the Wasserstein lifting~\cite{W69}, which evaluates the infimum, with respect to the joint distributions, of the expected value of the distance over the two distributions, and whose definition is based on the following notions and results.
Given a set $\Omega$ and a topology $T$ on $\Omega$, the topological space $(\Omega,T)$ is said to be \emph{completely metrisable} if there exists at least one metric $m$ on $\Omega$ such that $(\Omega, m)$ is a complete metric space and $m$ induces the topology $T$.
A \emph{Polish space} is a separable completely metrisable topological space.
In particular, we recall that:
\begin{enumerate*}[(i)]
\item $\real$ is a Polish space; and
\item every closed subset of a Polish space is in turn a Polish space.
\end{enumerate*}
Moreover, for any $n \in N$, if $\Omega_1, \dots, \Omega_n$ are Polish spaces, then the Borel $\sigma$-algebra on their product coincides with the product $\sigma$-algebra generated by their Borel $\sigma$-algebras, namely
\[
\borel( \bigtimes_{i=1}^n \Omega_i) = \bigotimes_{i=1}^n \borel(\Omega_i).
\]
(This is proved, e.g., in~\cite{Bo07} as Lemma 6.4.2 whose hypothesis are satisfied by Polish spaces since they are second countable.)
These properties of Polish spaces are interesting for us since they guarantee that all the distributions we consider in this paper are Radon measures and, thus, the Wasserstein lifting is well-defined on them.
For this reason, we also directly present the Wasserstein hemimetric by considering only distributions on Borel sets.

\begin{defi}
[Wasserstein hemimetric]%
\label{def:Wasserstein}
Consider a Polish space $\Omega$ and let $m$ be a hemimetric on $\Omega$.
For any two distributions $\mu$ and $\nu$ on $(\Omega,\borel(\Omega))$, the \emph{Wasserstein lifting} of $m$ to a distance between $\mu$ and $\nu$ is defined by
\[
\Wasserstein(m)(\mu,\nu) = \inf_{\w \in \W(\mu,\nu)} \int_{\Omega \times \Omega} m(\omega,\omega') \dd\w(\omega,\omega')
\]
where $\W(\mu,\nu)$ is the set of the \emph{couplings of $\mu$ and $\nu$}, namely the set of joint distributions $\w$ over the product space $(\Omega \times \Omega, \borel(\Omega \times \Omega))$ having $\mu$ and $\nu$ as left and right marginal, respectively, namely $\w(\mesA \times \Omega) = \mu(\mesA)$ and $\w(\Omega \times \mesA) = \nu(\mesA)$, for all $\mesA \in \borel(\Omega)$.
\end{defi}

Despite the original version of the Wasserstein distance being defined on a metric on $\Omega$, the Wasserstein hemimetric given above is well-defined.
We refer the interested reader to~\cite{FR18} and the references therein for a formal proof of this fact.
In particular, the Wasserstein hemimetric is given in~\cite{FR18} as Definition 7 (considering the compound risk excess metric defined in Equation (31) of that paper), and Proposition 4 in~\cite{FR18} guarantees that it is indeed a well-defined hemimetric on $\distrib(\Omega,\borel(\Omega))$.
Moreover, Proposition 6 in~\cite{FR18} guarantees that the same result holds for the hemimetric $m(x,y) = \max\{y-x,0\}$ which, as we will see, plays an important role in our work (cf.\ Definition~\ref{def:metric_DS} below).

\begin{rem}
As elsewhere in the literature, for simplicity and brevity, we shall henceforth use the term \emph{metric} in place of the term hemimetric.
\end{rem}

%==========================================
% sec - Calculus
%==========================================

\section{The model}%
\label{sec:calcolo}

We devote this section to introduce the three components of our systems, namely the \emph{\dataspace} $\D$, the \emph{process} $\proc$ describing the behaviour of the program, and the \emph{environment evolution} $\E$ describing the effects of the environment.
As already discussed in the Introduction, our choice of modelling program and environment separately will allow us to isolate programs from the environment and favour their analysis.

In order to help the reader grasp the details of our approach, we introduce a stochastic variant of the three-tanks laboratory experiment, that embodies the kind of program-environment interactions we are interested in.
Several variants (with different number of tanks) of the $n$-tanks experiment have been widely used in control program education
(see, among the others,~\cite{AL92,RKOMI97,Jo00,ALGAC06}).
Moreover, some recently proposed cyber-physical security testbeds, like \emph{SWaT}~\cite{MT16,AGAOT17}, can be considered as an evolution of the tanks experiment.
Here, we consider a variant of the three-tanks laboratory experiment
described in~\cite{RKOMI97}.

Please notice that this example is meant to serve as a toy example allowing us to showcase the various features of our tool, without carrying out a heavy mathematical formulation.
We delay to Section~\ref{sec:engine} the presentation of a case-study showing how our framework can be applied to the analysis of (more complex) real-world systems.

\begin{nota}
In the upcoming examples we will slightly abuse of notation and use a variable name $x$ to denote all: the variable $x$, the (possible) function describing the evolution in time of the values assumed by $x$, and the (possible) random variable describing the distribution of the values that can be assumed by $x$ at a given time.
The role of the variable name $x$ will always be clear from the context.
\end{nota}

\begin{figure}[tbp]
\centering
\begin{tikzpicture}
\draw[-](0,4)--(0,0.5);
\draw[-](0,0.5)--(9.7,0.5);
\draw[-](10.7,0.5)--(11,0.5);
\draw[-](2.5,4)--(2.5,1);
\draw[-](2.5,1)--(3.5,1);
\draw[-](3.5,1)--(3.5,4);
\draw[-](6,4)--(6,1);
\draw[-](6,1)--(7,1);
\draw[-](7,1)--(7,4);
\draw[-](9.5,4)--(9.5,1);
\draw[-](9.5,1)--(9.7,1);
\draw[-](10.7,1)--(11,1);
\node at (1.25,0){Tank 1};
\node at (4.75,0){Tank 2};
\node at (8.25,0){Tank 3};
\draw[-latex](2.75,0.6)--(3.25,0.6);
\node at (3,0.8){$q_{12}$};
\draw[-latex](6.25,0.6)--(6.75,0.6);
\node at (6.5,0.8){$q_{23}$};
\draw[-latex](10.8,0.6)--(11.3,0.6);
\node at (11.05,0.8){$q_3$};
\draw[dashed,thick,red](-0.1,3.5)--(2.5,3.5);
\node at (-0.35,3.7){\textcolor{red}{$l_{max}$}};
\draw[dashed,thick,red](-0.1,0.6)--(2.5,0.6);
\node at (-0.35,0.8){\textcolor{red}{$l_{min}$}};
\draw[dashed,thick,red](3.5,3.5)--(6,3.5);
\draw[dashed,thick,red](3.5,0.6)--(6,0.6);
\draw[dashed,thick,red](7,3.5)--(9.5,3.5);
\draw[dashed,thick,red](7,0.6)--(9.5,0.6);
\draw[dashed,thick,ForestGreen](0,2.5)--(2.6,2.5);
\node at (2.9,2.7){\textcolor{ForestGreen}{$l_{goal}$}};
\draw[dashed,thick,ForestGreen](3.5,2.5)--(6,2.5);
\draw[dashed,thick,ForestGreen](7,2.5)--(9.5,2.5);
\draw[blue](-0.1,3)--(2.5,3);
\node at(-0.2,3.2){\textcolor{blue}{$l_1$}};
\draw[blue](3.4,2)--(6,2);
\node at(3.2,2.2){\textcolor{blue}{$l_2$}};
\draw[blue](6.9,2.2)--(9.5,2.2);
\node at(6.8,2.4){\textcolor{blue}{$l_3$}};
\draw[-](0.8,6)--(0.8,5.5);
\draw[-](1.7,6)--(1.7,5.5);
\draw[-](0.2,5.5)--(2.3,5.5);
\draw[-](0.2,5.5)--(0.2,4.5);
\draw[-](2.3,5.5)--(2.3,4.5);
\draw[-](0.2,4.5)--(2.3,4.5);
\draw[-](0.8,4.5)--(0.8,4);
\draw[-](1.7,4.5)--(1.7,4);
\node at (1.25,5){pump};
\draw[-latex](1.25,4.25)--(1.25,3.75);
\node at (1.45,4.1){$q_1$};
\draw[->](3.5,5)--(2.5,5);
\node at (3.5,5.8){\scalebox{0.85}{Flow rate under}};
\node at (3.5,5.5){\scalebox{0.85}{the control of}};
\node at (3.5,5.2){\scalebox{0.85}{the software}};
\draw[-](7.8,6)--(7.8,4);
\draw[-](8.7,6)--(8.7,4);
\draw[-latex](8.25,4.25)--(8.25,3.75);
\node at (8.45,4.1){$q_2$};
\draw[->](6.5,5)--(7.5,5);
\node at (6.5,5.8){\scalebox{0.85}{Flow rate under}};
\node at (6.5,5.5){\scalebox{0.85}{the control of}};
\node at (6.5,5.2){\scalebox{0.85}{the environment}};
\draw[-](9.7,0.3)--(9.7,1.2);
\draw[-](10.7,0.3)--(10.7,1.2);
\draw[-](9.7,0.3)--(10.7,0.3);
\draw[-](9.7,1.2)--(10.7,1.2);
\node at (10.2,0.75){pump};
\draw[->](10.2,2.3)--(10.2,1.3);
\node at (10.8,3.1){\scalebox{0.85}{Flow rate under}};
\node at (10.8,2.8){\scalebox{0.85}{the control of}};
\node at (10.8,2.5){\scalebox{0.85}{the software}};
\draw[fill,cyan,opacity=0.2] (1,5.5) rectangle (1.5,6);
\draw[fill,cyan,opacity=0.2] (1.15,4.1) rectangle (1.35,4.5);
\draw[fill,cyan,opacity=0.2] (8,4.1) rectangle (8.5,6);
\draw[fill,cyan,opacity=0.2] (0,0.5) rectangle (2.5,3);
\draw[fill,cyan,opacity=0.2] (3.5,0.5) rectangle (6,2);
\draw[fill,cyan,opacity=0.2] (7,0.5) rectangle (9.5,2.2);
\draw[fill,cyan,opacity=0.2] (2.5,0.5) rectangle (3.5,1);
\draw[fill,cyan,opacity=0.2] (6,0.5) rectangle (7,1);
\draw[fill,cyan,opacity=0.2] (9.5,0.5) rectangle (9.7,1);
\draw[fill,cyan,opacity=0.2] (10.7,0.5) rectangle (11,0.8);
\end{tikzpicture}
\caption{Schema of the three-tanks scenario.}%
\label{fig:threetanks}
\end{figure}
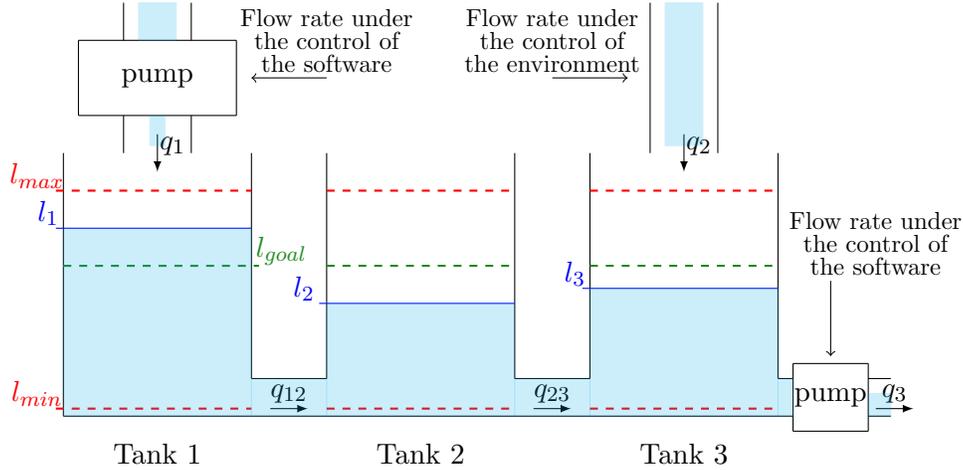

\begin{exa}%
\label{ex:tanks_I}
As schematised in Figure~\ref{fig:threetanks}, there are three identical tanks connected by two pipes.
Water enters in the first and in the last tank by means of a pump and an incoming pipe, respectively.
The last tank is equipped with an outlet pump.
We assume that water flows through the incoming pipe with a rate $q_2$ that is determined by the environment, whereas the flow rates through the two pumps (respectively, $q_1$ and $q_3$) are under the control of the program controlling the experiment.
The rates $q_1,q_2,q_3$ can assume values in the range $[0,q_{max}]$, for a given a maximal flow rate $q_{max}$.
The task of the program consists in guaranteeing that the levels of water in the three tanks fulfil some given requirements.

The level of water in tank $i$ at time $\tau$ is denoted by $l_i(\tau)$, for $i=1,2,3$, and is always in the range $[l_{min},l_{max}]$, for suitable $l_{min}$ and $l_{max}$ giving, respectively, the minimum and maximum level of water in the tanks.

The principal difference between our version of the three-tanks and its classical formulation is the flow rate $q_2$.
In fact, we assume that $q_2$ cannot be controlled by the program, that can only indirectly observe the effects it has on $l_3$ and react to those.
In our setting, this is equivalent to say that $q_2$ is under the control of the environment.
Moreover, to obtain a more complex scenario, we assume that the value of $q_2$ (at a given time) can be described only probabilistically.
This can be interpreted either as the effect of uncertainties (like, e.g., measurement errors of the sensors having to determine $q_2$), or as the consequence of an unpredictable behaviour of the environment.
Consequently, the exact value of $q_2(\tau)$ is unknown and will be evident only at execution time.
Still, we can consider different scenarios that render the assumptions we might have on the environment.
For instance, we can assume that the flow rate of the incoming pipe is normally distributed with mean $q_{med} \in [0,q_{max}]$ and variance $\Delta_q > 0$, expressed, assuming sampling time interval $\Delta\tau$, by:
\begin{equation}%
\label{eq:scenario1}
\begin{array}{rcl}
q_2(\tau+\Delta\tau) & \sim & N(q_{med},\Delta_q)
\enspace .
\end{array}
\end{equation}
In a more elaborated scenario, we could assume that $q_2$ varies at each step by a value $v$ that is normally distributed with mean $0$ and variance $1$.
In this case, we have:
\begin{equation}%
\label{eq:scenario2}
\begin{array}{rcl}
v(\tau) & \sim & N(0,1) \\
q_2(\tau+\Delta\tau) & = & \min\left\{ \max\left\{ 0 , q_2(\tau)+v(\tau) \right\} , q_{max} \right\}
\enspace .
\end{array}
\end{equation}

Conversely, the two pumps are controlled by the program that, by reading the values of $l_1(\tau)$ and $l_3(\tau)$, can select the value of $q_1(\tau+\Delta\tau)$ and $q_3(\tau+\Delta\tau)$.
In detail, the equations describing the behaviour of the program are the following:
\begin{equation}%
\label{eq:q1}
q_1(\tau+\Delta\tau) \! = \! \left\{ \!\!
\begin{array}{ll}
\max\{ 0 , q_1(\tau) - q_{step} \} & \! \text{if } l_1(\tau)>l_{goal}+\Delta_l \\
\min\{ q_{max} ,  q_1(\tau) + q_{step} \} & \! \text{if } l_1(\tau)<l_{goal}-\Delta_l\\
q_1(\tau) & \! \text{otherwise;}
\end{array}
\right.
\end{equation}
\begin{equation}%
\label{eq:q3}
q_3(\tau+\Delta\tau) \! = \! \left\{ \!\!
\begin{array}{ll}
\min\{ q_{max} , q_3(\tau) + q_{step} \} & \! \text{if } l_3(\tau)>l_{goal}+\Delta_l \\
\max\{ 0 ,  q_3(\tau) - q_{step} \} & \! \text{if } l_3(\tau)<l_{goal}-\Delta_l\\
q_3(\tau) & \! \text{otherwise.}
\end{array}
\right.
\end{equation}
Above, $l_{goal}$ is the desired level of water in the tanks, while $q_{step} \in \ropen{0,q_{max}}$ is the variation of the flow rate that is controllable by the pump in one time step $\Delta\tau$.
The idea behind Equation~\eqref{eq:q1} is that when $l_1$ is greater than $l_{goal}+\Delta_l$, the flow rate of the pump is decreased by $q_{step}$.
Similarly, when $l_1$ is less than $l_{goal}-\Delta_l$, that rate is increased by $q_{step}$.
The reasoning for Equation~\eqref{eq:q3} is symmetrical.
In both cases, we use the threshold $\Delta_l > 0$ to avoid continuous contrasting updates.

From now on, we will say that we are in \emph{scenario 1} if $q_2$ is determined by Equation (\ref{eq:scenario1}), and that we are in \emph{scenario 2} in case $q_2$ is determined by Equation (\ref{eq:scenario2}).
In both scenarios, $q_1$ is determined by Equation (\ref{eq:q1}) and $q_3$ by Equation (\ref{eq:q3}).

The dynamics of $l_i(\tau)$ can be modelled via the following set of stochastic difference equations, with sampling time interval $\Delta\tau$:
\begin{equation}%
\label{eq:tankstynamics}
\begin{array}{rcl}
l_1(\tau+\Delta\tau) & = & l_1(\tau) + \Delta\tau \cdot (q_1(\tau) - q_{12}(\tau))\\
l_2(\tau+\Delta\tau) & = & l_2(\tau) + \Delta\tau \cdot (q_{12}(\tau) - q_{23}(\tau))\\
l_3(\tau+\Delta\tau) & = & l_3(\tau) + \Delta\tau \cdot (q_2(\tau) + q_{23}(\tau) - q_3(\tau))
\end{array}
\end{equation}
where $q_1,q_2,q_3$ are as above (i.e., the flow rates through, respectively, the pump connected to the first tank, the incoming pipe, and the outlet pump), and $q_{ij}$ denotes the flow rate from tank $i$ to tank $j$.
The evaluation of $q_{ij}(\tau)$ depends on $l_i(\tau),l_j(\tau)$, and on the physical dimensions of the tanks, and it is obtained as follows.
Let $A$ be the cross sectional area of each tank, and $a$ be the cross sectional area of the connecting and outlet pipes.
The volume balance difference equations of each tank are then the following:
\begin{equation}%
\label{eq:volumes}
\begin{array}{rcl}
\displaystyle\frac{A}{\Delta\tau}\cdot(l_1(\tau+\Delta\tau)-l_1(\tau)) & = & q_1(\tau) - q_{12}(\tau) \\[.3cm]
\displaystyle\frac{A}{\Delta\tau}\cdot(l_2(\tau+\Delta\tau)-l_2(\tau)) & = & q_{12}(\tau) - q_{23}(\tau) \\[.3cm]
\displaystyle\frac{A}{\Delta\tau}\cdot(l_3(\tau+\Delta\tau)-l_3(\tau)) & = & q_2(\tau) + q_{23}(\tau) - q_{3}(\tau). \\
\end{array}
\end{equation}
We then apply Torricelli's law to the equations in~\eqref{eq:volumes} to obtain the flow rates $q_{12}$ and $q_{23}$:
\begin{equation}%
\label{eq:pipes}
\begin{split}
& q_{12}(\tau) =
\begin{cases}
a_{12}(\tau) \cdot a \cdot \sqrt{2g\big(l_1(\tau) - l_2(\tau)\big)} & \text{ if } l_1(\tau)\ge l_2(\tau) \\[.1cm]
-a_{12}(\tau) \cdot a \cdot \sqrt{2g\big(l_2(\tau) - l_1(\tau)\big)} & \text{ otherwise;}
\end{cases}
\\
& q_{23}(\tau) =
\begin{cases}
a_{23}(\tau) \cdot a \cdot \sqrt{2g\big(l_2(\tau) - l_3(\tau)\big)} & \text{ if } l_2(\tau)\ge l_3(\tau) \\[.1cm]
-a_{23}(\tau) \cdot a \cdot \sqrt{2g\big(l_3(\tau) - l_2(\tau)\big)} & \text{ otherwise;}
\end{cases}
\end{split}
\end{equation}
where $g$ is the gravitational constant, and $a_{12},a_{23}$ are the loss coefficients of the respective pipes.
These coefficients are represented as time dependent functions since they depend on the geometry of the pipes and on the water level in the tanks.
In our experiments, we use the approximation $g = 9.81$, and the following values for the aforementioned coefficients:
\begin{enumerate*}[(i)]
\item $a = 0.5$,
\item $a_{12} = 0.75$,
\item $a_{23} = 0.75$.
\end{enumerate*}
\hfill\scalebox{0.85}{$\LHD$}
\end{exa}

We now proceed to formally introduce the three components of our systems and their interaction.

%==================================
% data
%===================================

\subsection{Modelling the \dataspace}%
\label{sec:ds}

We define the \dataspace{} by means of a \emph{finite} set of \emph{variables} $\mathrm{Var}$ representing:
\begin{enumerate*}[(i)]
\item \emph{environmental conditions}, such as pressure, temperature, humidity, etc.,
\item \emph{values perceived by sensors}, which depend on the value of environmental conditions and are unavoidably affected by imprecision and approximations introduced by sensors, and
\item \emph{state of actuators}, which are usually elements in a discrete domain, like $\{\mathit{on},\mathit{off}\}$ (which can be easily mapped to reals as, e.g., $\{0,1\}$).
\end{enumerate*}
For each $x \in \mathrm{Var}$ we assume a measurable space $(\D_x,\borel_x)$, with $\D_x \subseteq \real$ the domain of $x$ and $\borel_x$ the Borel $\sigma$-algebra on $\D_x$.
Without loosing generality, we can assume that each domain $\D_x$ is either a \emph{finite set} or a \emph{compact} subset of $\real$.
In particular, this means that each $\D_x$ is a Polish space.

As $\Var$ is a finite set, we can always assume it to be ordered, namely $\Var=\{x_1,\dots,x_{n}\}$ for a suitable $n \in \nats$.

\begin{defi}
[\Dataspace]
We define the \emph{\dataspace{}} over $\mathrm{Var}$, notation $\D_{\Var}$, as the Cartesian product of the variables domains, namely $\D_{\Var} = \bigtimes_{i = 1}^n \D_{x_i}$.
Then, as a $\sigma$-algebra on $\D_\Var$ we consider the the product $\sigma$-algebra $\borel_{\D_\Var} = \bigotimes_{i=1}^n \borel_{x_i}$.
\end{defi}

When no confusion arises, we will use $\D$ and $\borel_\D$ in place of $\D_{\mathrm{Var}}$ and $\borel_{\D_{\mathrm{Var}}}$, respectively.

\begin{exa}%
\label{ex:tanks_II}
The \dataspace{} $\D_{3T}$ for the considered three-tanks system is defined on the set of variables $\Var=\{q_1,q_2,q_3,l_1,l_2,l_3\}$, with domains $q_i \in \D_q = [0,q_{max}]$ and $l_i \in \D_l = [l_{min},l_{max}]$, for $i =1,2,3$.
We remark that $\Delta_l$ is not included in the \dataspace.
This is due to the fact that, in our setting, $\Delta_l$ is never modified by the environment (it is also constant in time), and it is an inherent property of the program: different programs might have different values of $\Delta_l$, according to how capable they are of controlling and modifying the flow rates, or of reading the levels of water in the tanks.
\hfill\scalebox{0.85}{$\LHD$}
\end{exa}

Elements in $\D$ are $n$-ples of the form $(v_1,\dots,v_n)$, with $v_i \in \D_{x_i}$, which can be also identified by means of functions $\ds \colon \mathrm{Var} \to \real$ from variables to values, with $\ds(x) \in \D_x$ for all $x \in \mathrm{Var}$.
Each function $\ds$ identifies a particular configuration of the data in the \dataspace, and it is thus called a \emph{\datastate}.

\begin{defi}
[\Datastate]%
\label{def:datastate}
A \emph{\datastate{}} is a mapping $\ds \colon \mathrm{Var} \to \real $ from state variables to values, with $\ds(x) \in \D_x$ for all $x \in \mathrm{Var}$.
\end{defi}

For simplicity, we shall write $\ds \in \D$ in place of $(\ds(x_1),\dots,\ds(x_n)) \in \D$.
Since program and environment interact on the basis of the \emph{current} values of data, we have that at each step there is a \datastate{} $\ds$ that identifies the \emph{current state of the \dataspace{}} on which the next computation step is built.
Given a \datastate{} $\ds$, we let $\ds[x=v]$ denote the \datastate{} $\ds'$ associating $v$ with $x$, and $\ds(y)$ with $y$ for any $y \neq x$.
For $\overline{x}=(x_{i_1},\ldots,x_{i_k})$ and $\overline{v}=(v_{i_1},\ldots,v_{i_k})$, we let $\ds[\overline{x}=\overline{v}]$ denote $\ds[x_{i_1}=v_{i_1}]\cdots [x_{i_k}=v_{i_k}]$.

%===========================================================
% processes
%===========================================================

\subsection{Modelling processes}%
\label{sec:proc}

We introduce a simple process calculus allowing us to specify programs that interact with a \datastate{} $\ds$ in a given environment.
We assume that the action performed by a process at a given computation step is determined probabilistically, according to the \emph{generative} probabilistic model~\cite{vGSS95}.

\begin{defi}
[Syntax of processes]%
\label{def:syntax}
We let $\Proc$ be the set of \emph{processes} $\proc$ defined by:
\begin{align*}
\proc \;\;::=\;\; &
\wact{\overline{e}}{\overline{x}}.\proc' \;\;\big|\;\;
\mathrm{if}\; [e]\; \proc_1 \;\mathrm{else}\; \proc_2 \;\;\big|\;\;
\sum_{i \in I} p_i \cdot \proc_i \;\;\big|\;\;
\proc_1 \mathbin{\|}_p \proc_2 \;\;\big|\;\;
\proc_1 \cmerge \proc_2 \;\;\big|\;\;
A\\
e \;\;::=\;\; &
v \in \mathbb{R} \;\;\big|\;\;
x\in \mathrm{Var} \;\;\big|\;\;
op_{k}(e_1,\ldots,e_k)
\end{align*}
where $p,p_1,\ldots$ range over \emph{probability weights} in $[0,1]$, $I$ is finite, $A$ ranges over \emph{process variables}, $op_k$ indicates a \emph{measurable operator} $\mathbb{R}^{k}\rightarrow \mathbb{R}$, and $\overline{\cdot}$ denotes a finite sequence of elements.
We assume that for each process variable $A$ we have a single definition $A \stackrel{\mathit{def}}{=} \proc$.
Moreover, we require that $\sum_{i \in I}p_i =1$ for any process $\sum_{i \in I} p_i \cdot \proc_i$, and that $\var{\proc_1} \cap \var{\proc_2} = \emptyset$ for any process $\proc_1 \cmerge \proc_2$, where $\var{\proc} = \{x \in \Var \mid  x \in \overline{x} \text{ for any } \wact{\overline{e}}{\overline{x}} \text{ occurring in } \proc\}$.
\end{defi}

As usual, the expression $\wact{\overline{e}}{\overline{x}}$ in $\wact{\overline{e}}{\overline{x}}.\proc'$ is called a \emph{prefix}.
In a single action a process can read and update a set of state variables.
This is done by process $\wact{\overline{e}}{\overline{x}}.\proc$ that first evaluates the sequence of expressions $\overline{e}$ in the current \datastate{} and then assigns the results to sequence of variables $\overline{x}$, which are stored in the \datastate{} and evolve according to the environment evolution $\E$ (Definition~\ref{def:environment} below) before being available at next step, when the process will behave as $\proc$.
We use $\surd$ to denote the empty prefix, i.e., the prefix in which the sequences $\overline{e}$ and $\overline{x}$ are empty, meaning that, in the current time step, the process does not read or update variables.
Process $\mathrm{if}\; [e]\; \proc_1 \;\mathrm{else}\; \proc_2$ behaves either as $\proc_1$, if $e$ evaluates to $1$, or as $\proc_2$, if $e$ evaluates to $0$.
Then, $\sum_{i\in I}p_i \cdot \proc_i$ is the \emph{generative probabilistic choice}:
it has probability $p_i$ to behave as $P_i$.
We may write $\sum_{i=1}^{n} p_i \cdot \proc_i$ as $p_1 \cdot \proc_1 + \cdots + p_n \cdot \proc_n$.
The \emph{generative probabilistic interleaving} construct $\proc_1 \mathbin{\|}_p \proc_2$ lets the two argument processes to interleave their actions, where at each step the first process moves with probability $p$ and the second with probability $(1-p)$.
We also provide a \emph{synchronous parallel composition} construct $\proc_1 \cmerge \proc_2$ that lets the two arguments perform their actions in a synchronous fashion, provided $\proc_1$ and $\proc_2$ do not modify the same variables.
Finally, process variable $A$ allows us to specify recursive behaviours by equations of the form $A \stackrel{\mathit{def}}{=} \proc$.
To avoid Zeno behaviours we assume that all occurrences of process variables appear \emph{guarded} by prefixing constructs in $\proc$.
We assume the standard notion of \emph{free} and \emph{bound} process variable and we only consider \emph{closed process terms}, that is terms without \emph{free} variables.

Actions performed by a process can be abstracted in terms of the \emph{effects} they have on the \datastate{}, i.e., via \emph{substitutions} of the form $\theta = [x_{i_1}\leftarrow v_{i_1},\ldots, x_{i_k}\leftarrow v_{i_k}]$, also denoted by $\overline{x}\leftarrow \overline{v}$ for $\overline{x}=x_{i_1},\ldots,x_{i_k}$ and $\overline{v}=v_{i_1},\ldots,v_{i_k}$.
Since in Definition~\ref{def:syntax} process operations on data and expressions are assumed to be measurable, we can model the effects as $\borel_\D$-measurable functions $\theta \colon \D \to \D$ such that $\theta(\ds) = \ds[\overline{x}=\overline{v}]$ whenever $\theta = \overline{x} \leftarrow \overline{v}$.
We denote by $\Theta$ the set of effects.
The behaviour of a process can then be defined by means of a function $\pstep \colon \Proc \times \D \to \distrib(\Theta\times\Proc)$ that given a process $\proc$ and a \datastate{} $\ds$ yields a \emph{discrete} distribution over $\Theta \times \Proc$.
The distributions in $\distrib(\Theta\times\Proc)$ are ranged over by $\pi,\pi',\dots$.
Function $\pstep$ is formally defined in Table~\ref{tab:pstepfunction}.

\begin{table}[tbp]
\begin{displaymath}
\begin{array}{lcl}
(\scalebox{0.9}{PR1}) & \qquad &
\pstep(\wact{\overline{e}}{\overline{x}}.\proc',\ds)
=
\delta_{(\overline{x}\leftarrow \llbracket \overline{e}\rrbracket_{\ds},\proc')}
\\[.1cm]
(\scalebox{0.9}{PR2}) & \qquad &
\pstep(\mathrm{if}\; [e]\; \proc_1 \;\mathrm{else}\; \proc_2, \ds)
=
\begin{cases}
\pstep(\proc_1,\ds) & \text{ if } \llbracket e\rrbracket_{\ds} = 1 \\
\pstep(\proc_2,\ds) & \text{ if } \llbracket e\rrbracket_{\ds} = 0
\end{cases}
\\[.1cm]
(\scalebox{0.9}{PR3}) & \qquad &
\pstep(\sum_{i}p_i \cdot \proc_i,\ds)
=
\sum_{i} p_i\cdot \pstep(\proc_i,\ds)
\\[.1cm]
(\scalebox{0.9}{PR4}) & \qquad &
\pstep(\proc_1 \mathbin{\|}_p \proc_2,\ds)
=
p\cdot (\pstep(\proc_1,\ds)\mathbin{\|}_p\proc_2) +(1-p)\cdot(\proc_1\mathbin{\|}_p \pstep(\proc_2,\ds))
\\[.1cm]
(\scalebox{0.9}{PR5}) & \qquad & \pstep(\proc_1 \cmerge \proc_2, \ds) = \pstep(\proc_1,\ds) \cmerge \pstep(\proc_2,\ds)
\\[.1cm]
(\scalebox{0.9}{PR6}) & \qquad &
\pstep(A,\ds)
=
\pstep(\proc,\ds) \qquad (\text{if } A\stackrel{\mathit{def}}{=}\proc)\\
\end{array}
\end{displaymath}
\caption{\label{tab:pstepfunction} Process Semantics}
\end{table}

In detail, rule (PR1) expresses that in a single action a process can read and update a set of variables:
process $\wact{\overline{e}}{\overline{x}}.\proc$ first evaluates the sequence of expressions $\overline{e}$ in the \datastate{} $\ds$ and then
stores the results to $\overline{x}$.
These \emph{new} data will evolve according to $\E$ before being available at next step, when the process will behave as $\proc$.
The evaluation of an expression $e$ in $\ds$, denoted by $\llbracket e\rrbracket_{\ds}$, is obtained by replacing each variable $x$ in $e$ with $\ds(x)$ and by evaluating the resulting ground expression.
Then, process $\wact{\overline{e}}{\overline{x}}.\proc$ evolves to the Dirac's distribution on the pair $(\overline{x}\leftarrow \llbracket \overline{e}\rrbracket_{\ds},\proc)$.
Rule (PR2) models the behaviour of $\mathrm{if}\; [e]\; \proc_1 \;\mathrm{else}\; \proc_2$.
We assume that boolean operations, like those in the $\mathrm{if}$-guard, evaluate to $1$ or $0$, standing respectively for $\top$ and $\bot$.
Hence, $\mathrm{if}\; [e]\; \proc_1 \;\mathrm{else}\; \proc_2$ behaves as $\proc_1$ if $\llbracket e \rrbracket_{\ds} = 1$, and it behaves as $\proc_2$ otherwise.
Rule (PR3) follows the \emph{generative} approach to probabilistic choice: the behaviour of process $\proc_i$ is selected with probability $p_i$.
Rule (PR4) states that process $\proc_1 \mathbin{\|}_p \proc_2$ interleaves the moves by $\proc_1$ and $\proc_2$, where at each step $\proc_1$ moves with probability $p$ and $\proc_2$ with probability $(1-p)$.
Given $\pi \in \distrib(\Theta \times \Proc)$, we let $\pi \mathbin{\|}_p \proc$ (resp. $\proc \mathbin{\|}_p \pi$) denote the probability distribution $\pi'$ over $(\Theta \times \Proc)$ such that: $\pi'(\theta,\proc')=\pi(\theta,\proc'')$, whenever $\proc'=\proc'' \mathbin{\|}_p \proc$ (resp. $\proc'=\proc \mathbin{\|}_p \proc''$), and $0$, otherwise.
Rule (PR5) gives us the semantics of the \emph{synchronous} parallel composition: for $\pi_1,\pi_2 \in \distrib(\Theta \times \Proc)$, we let $\pi_1 \cmerge \pi_2$ denote the probability distribution defined, for all $\theta_i \in \Theta$ and $\proc_i \in \Proc$, $i = 1,2$, by:
\[
(\pi_1 \cmerge \pi_2) (\theta_1 \theta_2,\proc_1 \cmerge \proc_2) =
\pi_1(\theta_1,\proc_1) \cdot \pi_2(\theta_2,\proc_2)
\]
where the effect $\theta_1\theta_2$ is given by the concatenation of the effects $\theta_1$ and $\theta_2$.
Notice that the concatenation is well-defined and it does not introduce contrasting effects on variables since we are assuming that $\var{\proc_1} \cap \var{\proc_2} = \emptyset$.
Finally, rule (PR6) states that $A$ behaves like $\proc$ whenever $A\stackrel{\mathit{def}}{=}\proc$.

We can observe that $\pstep$ returns a discrete distribution over $\Theta \times \Proc$.

\begin{prop}
[Properties of process semantics]%
\label{prop:processes_are_generative}
Let $\proc \in \Proc$ and $\ds \in \D$.
Then $\pstep(\proc,\ds)$ is a distribution with finite support, namely the set $\support(\pstep(\proc,\ds))=\{ (\theta,\proc') \mid (\pstep(\proc,\ds))(\theta,\proc') >0\}$ is finite and $\sum_{(\theta,\proc')\in\support(\pstep(\proc,\ds))} (\pstep(\proc,\ds))(\theta,\proc')=1$.
\end{prop}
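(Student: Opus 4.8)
The plan is to proceed by structural induction on the process term $\proc$, proving simultaneously that $\pstep(\proc,\ds)$ is well-defined, has finite support, and is a genuine (sub)probability distribution summing to $1$. Since the grammar in Definition~\ref{def:syntax} is finitely branching and, by the guardedness assumption, every occurrence of a process variable is shielded by a prefix, this induction is well-founded; the case for $A$ is handled by unfolding the (unique) defining equation $A \stackrel{\mathit{def}}{=} \proc$ and appealing to the inductive hypothesis for the now-smaller proof obligation induced by the guard.

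First I would treat the base cases. For $\wact{\overline{e}}{\overline{x}}.\proc'$, rule (PR1) gives a Dirac distribution $\delta_{(\overline{x}\leftarrow \llbracket \overline{e}\rrbracket_{\ds},\proc')}$, whose support is the singleton $\{(\overline{x}\leftarrow \llbracket \overline{e}\rrbracket_{\ds},\proc')\}$ and whose total mass is $1$; here I would note that $\llbracket \overline{e}\rrbracket_{\ds}$ is well-defined because all operators $op_k$ are measurable (hence, in particular, total functions on $\real^k$), so the effect $\overline{x}\leftarrow \llbracket \overline{e}\rrbracket_{\ds}$ is a legitimate element of $\Theta$. Then I would handle the inductive cases. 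For $\mathrm{if}\;[e]\;\proc_1\;\mathrm{else}\;\proc_2$ (rule PR2), the result is literally $\pstep(\proc_1,\ds)$ or $\pstep(\proc_2,\ds)$ depending on whether $\llbracket e\rrbracket_\ds$ is $1$ or $0$, so the claim is inherited verbatim from the inductive hypothesis (using that boolean guards evaluate to $0$ or $1$). For the generative choice $\sum_{i\in I}p_i\cdot\proc_i$ (rule PR3), I would use that $I$ is finite, that each $\support(\pstep(\proc_i,\ds))$ is finite by IH, and hence $\support(\sum_i p_i\cdot\pstep(\proc_i,\ds))\subseteq\bigcup_{i\in I}\support(\pstep(\proc_i,\ds))$ is finite; the total mass is $\sum_i p_i\cdot 1 = 1$ by the side condition $\sum_{i\in I}p_i=1$.

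For the interleaving $\proc_1\mathbin{\|}_p\proc_2$ (rule PR4), I would observe that the operations $\pi\mathbin{\|}_p\proc$ and $\proc\mathbin{\|}_p\pi$ are just injective relabellings of the second component of each pair in the support (replacing $\proc''$ by $\proc''\mathbin{\|}_p\proc_2$, resp.\ $\proc_1\mathbin{\|}_p\proc''$), hence preserve both finiteness of support and total mass; then the convex combination with weights $p$ and $1-p$ again preserves these, exactly as in the PR3 case. For the synchronous composition $\proc_1\cmerge\proc_2$ (rule PR5), I would note that the support of $\pi_1\cmerge\pi_2$ is contained in the product $\support(\pi_1)\times\support(\pi_2)$ under the map $((\theta_1,\proc_1'),(\theta_2,\proc_2'))\mapsto(\theta_1\theta_2,\proc_1'\cmerge\proc_2')$, which is finite by IH; here I would invoke the side condition $\var{\proc_1}\cap\var{\proc_2}=\emptyset$ to conclude that the concatenation $\theta_1\theta_2$ is a well-defined effect (no variable receives two conflicting assignments), and a Fubini-style computation over the finite supports gives total mass $(\sum_{(\theta_1,\proc_1')}\pi_1(\theta_1,\proc_1'))\cdot(\sum_{(\theta_2,\proc_2')}\pi_2(\theta_2,\proc_2'))=1\cdot 1=1$.

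The only genuinely delicate point — and the one I would flag as the main obstacle — is the process-variable case $\pstep(A,\ds)=\pstep(\proc,\ds)$ with $A\stackrel{\mathit{def}}{=}\proc$, since naively this is not a structurally smaller term and the definition of $\pstep$ looks circular. The resolution is that $\pstep$ is really defined by recursion on a well-founded measure: the guardedness condition ensures that any unfolding of process variables needed to evaluate $\pstep(\proc,\ds)$ terminates, because each recursive call in rules (PR1)--(PR6) either strictly decreases the syntactic size of the term or, in case (PR6), passes to the body $\proc$ all of whose occurrences of process variables are strictly below a prefix and hence are never reached when computing $\pstep$ at the current step (rule (PR1) discards the continuation $\proc'$). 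Thus I would first argue that $\pstep(\proc,\ds)$ is well-defined by this termination argument, and then carry out the finiteness-and-normalisation induction above along the same well-founded order, so that in the $A$ case the inductive hypothesis applies to $\pstep(\proc,\ds)$ and transfers immediately. With that in place, all six cases close and the proposition follows.
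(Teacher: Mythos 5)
Your proof is correct and follows essentially the same route as the paper, whose entire argument is the one-line remark that the result ``follows by a simple induction on the construction of $\pstep(\proc,\ds)$'': your case analysis of (PR1)--(PR6) and your well-foundedness argument via guardedness are precisely the details that this induction on the construction of $\pstep(\proc,\ds)$ (rather than on the raw syntax of $\proc$) leaves implicit. In particular, your handling of the process-variable case (PR6) correctly identifies and resolves the only point where naive structural induction would break down.
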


%------------------------------------------------------------------------
\begin{proof}
The proof follows by a simple induction on the construction of $\pstep(\proc,\ds)$.
\end{proof}
%--------------------------------------------------------------------------

\begin{exa}%
\label{ex:tanks_III}
We proceed to define the program $\mathsf{P_{Tanks}}$ responsible for controlling the two pumps in order to achieve (and maintain) the desired level of water in the three tanks.
Intuitively, $\mathsf{P_{Tanks}}$ will consist of a process $\mathsf{P_{in}}$ controlling the flow rate through the pump attached to the first tank, and of a process $\mathsf{P_{out}}$ controlling the flow rate through the outlet pump.
The behaviour of $\mathsf{P_{in}}$ and $\mathsf{P_{out}}$ can be obtained as a straightforward translation of, respectively, Equation (\ref{eq:q1}) and Equation (\ref{eq:q3}) into our simple process calculus.
Formally:
\[
\begin{array}{lll}
\mathsf{P_{in}}
&
\stackrel{\mathit{def}}{=}
&
\mathrm{if}\; [l_1 > l_{goal} + \Delta_l]\;
(\max\{0, q_1 - q_{step}\} \rightarrow q_1).\mathsf{P_{in}}
\\[.1cm]
& &
\mathrm{else} \quad \mathrm{if}\; [l_1 < l_{goal} - \Delta_l]\;
(\min\{q_{max}, q_1 + q_{step}\} \rightarrow q_1).\mathsf{P_{in}}
\\[.1cm]
& &
\phantom{\mathrm{else}} \quad \mathrm{else}\; \surd.\mathsf{P_{in}}
\\[.3cm]
\mathsf{P_{out}}
&
\stackrel{\mathit{def}}{=}
&
\mathrm{if}\; [l_3 > l_{goal} + \Delta_l]\;
(\min\{q_{max}, q_3 + q_{step}\} \rightarrow q_3).\mathsf{P_{out}}
\\[.1cm]
& &
\mathrm{else} \quad \mathrm{if}\; [l_3 < l_{goal} - \Delta_l]\;
(\max\{0, q_3 - q_{step}\} \rightarrow q_3).\mathsf{P_{out}}
\\[.1cm]
& &
\phantom{\mathrm{else}} \quad \mathrm{else}\; \surd.\mathsf{P_{out}}
\end{array}
\]
However, for $\mathsf{P_{Tanks}}$ to work properly, we need to allow $\mathsf{P_{in}}$ and $\mathsf{P_{out}}$ to execute simultaneously.
Since the effects of read/update operators in $\mathsf{P_{in}}$ and $\mathsf{P_{out}}$ are applied to distinct variables (respectively, $q_1$ and $q_3$), we can obtain the desired execution by defining $\mathsf{P_{Tanks}}$ as the synchronous parallel composition of $\mathsf{P_{in}}$ and $\mathsf{P_{out}}$:
\[
\mathsf{P_{Tanks}}
\stackrel{\mathit{def}}{=}
\mathsf{P_{in}} \cmerge \mathsf{P_{out}}
\]
\hfill\scalebox{0.85}{$\LHD$}
\end{exa}

%=======================================
% environment
%========================================

\subsection{Modelling the environment}%
\label{sec:E}

To model the action of the environment on data we use a mapping $\E$, called \emph{environment evolution}, taking a \datastate{} to a distribution over \datastates{}.

\begin{defi}
[Environment evolution]%
\label{def:environment}
An \emph{environment evolution} is a function $\E \colon \D \to \distrib(\D,\borel_\D)$ s.t.\ for each $\mesD \in \borel_\D$ the mapping $\ds \mapsto \E(\ds)(\mesD)$ is $\borel_\D$-measurable.
\end{defi}

We notice that, due to the interaction with the program, in our model the probability induced by $\E$ at the next time step \emph{depends only} on the current state of the \dataspace.
It is then natural to assume that the behaviour of the environment is modelled as a discrete time Markov process, and to define function $\E$ we only need the initial distribution on $\D$ (which in our setting will be the Dirac's distribution on a \datastate{} in $\D$) and the random vector $\mathbf{X} = [X_1,\dots,X_n]$ in which each random variable $X_i$ models the changes induced by the environment on the value of variable $x_i$.

Since the environment is an ensemble of physical phenomena, it is represented by a system of equations of the form $x' = f(x)$, for $x \in \Var$, and thus with the obvious syntax.
Hence, for simplicity, we do not explicitly present a syntactic definition of the environment, but we model it directly as a function over \dataspaces.
The following example shows how the environment evolution is obtained from the system of equations.

\begin{exa}%
\label{ex:tanks_IV}
To conclude the encoding of the three-tanks experiment from Example~\ref{ex:tanks_I} into our model, we need to define a suitable environment evolution.
This can be derived from Equations~\eqref{eq:tankstynamics},~\eqref{eq:scenario1} (or~\eqref{eq:scenario2}), and~\eqref{eq:pipes} in the obvious way.
For sake of completeness, below we present
functions $\E_1$ and $\E_2$ modelling the evolution of the environment, respectively, in scenario 1 (with Equation (\ref{eq:scenario1})) and scenario 2 (with Equation (\ref{eq:scenario2})).
We remark that:
\begin{itemize}
\item The only stochastic variable in our setting is $q_2$, giving thus that the distribution induced by $\E_i$ is determined by the distribution of $q_2$, for $i = 1,2$.
\item The values of $q_1$ and $q_3$ are modified by $\mathsf{P_{in}}$ and $\mathsf{P_{out}}$, respectively.
In particular, according to Definition~\ref{def:cstep}, this means that the environment does not affect the values of $q_1$ and $q_3$ and, moreover, when we evaluate $\E_i(\ds)$, the effects of the two process over $q_1$ and $q_3$ have already been taken into account in $\ds$, for $i = 1,2$.
\end{itemize}
Define $f_{3T} \colon [0,q_{max}] \times \D_{3T} \to \D_{3T}$ as the function such that, for any $x \in [0,q_{max}]$ and \datastate{} $\ds$, gives us the \datastate{} $f_{3T}(x,\ds) = \ds'$ obtained from the solution of the following system of equations:
\[
\begin{cases}
q_{12} =
\begin{cases}
a_{12} \cdot a \cdot \sqrt{2g\big(\ds(l_1) - \ds(l_2)\big)} & \text{ if } \ds(l_1)\ge \ds(l_2) \\[.1cm]
-a_{12} \cdot a \cdot \sqrt{2g\big(\ds(l_2) - \ds(l_1)\big)} & \text{ otherwise;}
\end{cases}
\\
q_{23} =
\begin{cases}
a_{23} \cdot a \cdot \sqrt{2g\big(\ds(l_2) - \ds(l_3)\big)} & \text{ if } \ds(l_2)\ge \ds(l_3) \\[.1cm]
-a_{23} \cdot a \cdot \sqrt{2g\big(\ds(l_3) - \ds(l_2)\big)} & \text{ otherwise;}
\end{cases}
\\
\ds'(l_1) = \ds(l_1) + \Delta\tau \cdot (\ds(q_1) - q_{12})
\\
\ds'(l_2) = \ds(l_2) + \Delta\tau \cdot (q_{12} - q_{23})
\\
\ds'(l_3) = \ds(l_3) + \Delta\tau \cdot (x + q_{23} - \ds(q_3))
\\
\ds'(q_1) = \ds(q_1)
\\
\ds'(q_2) = x
\\
\ds'(q_3) = \ds(q_3)
\end{cases}
\]
where $a,a_{12},a_{23},g,\Delta\tau$ are constants.
Then, consider the random variables $X \sim N(q_{med},\Delta_q)$ and $Y \sim N(0,1)$.
From them, we define the random variables $Q_2^1(\ds) = f_{3T}(X, \ds)$ and $Q_2^2(\ds) = f_{3T}(\max\{q_{max},\ds(q_2)+Y\}, \ds)$.
Then, for $i =1,2$ and for each \datastate{} $\ds$, the distribution $\E_i(\ds)$ is obtained as the distribution of $Q_2^i$.
\hfill\scalebox{0.85}{$\LHD$}
\end{exa}

%=========================================
% configurations
%=======================================

\subsection{Modelling system's behaviour}%
\label{sec:config}

We use the notion of \emph{configuration} to model the state of the system at each time step.

\begin{defi}[Configuration]
A \emph{configuration} is a triple $c = \config{\proc}{\ds}{\E}$, where $\proc$ is a process, $\ds$ is a \datastate{} and $\E$ is an environment evolution.
We denote by $\C_{\Proc,\D,\E}$ the set of configurations defined over $\Proc,\D$ and $\E$.
\end{defi}
When no confusion shall arise, we shall write $\C$ in place of $\C_{\Proc,\D,\E}$.

Let $(\Proc,\Sigma_{\Proc})$ be the measurable space of processes, where $\Sigma_{\Proc}$ is the power set of $\Proc$, and $(\D,\borel_\D)$ be the measurable space of \datastates.
As $\E$ is fixed, we can identify $\C$ with $\Proc \times \D$ and equip it with the product $\sigma$-algebra $\Sigma_\C = \Sigma_{\Proc} \otimes \borel_\D$:
$\Sigma_{\C}$ is generated by the sets $\{\config{\mesP}{\mesD}{\E} \mid \mesP \in \Sigma_{\Proc}, \mesD \in \borel_\D\}$, where $\config{\mesP}{\mesD}{\E}=\{\config{\proc}{\ds}{\E} \mid \proc \in \mesP, \ds\in \mesD\}$.

\begin{nota}%
\label{notazione_distrib}
For $\mu_{\Proc} \in \distrib(\Proc,\Sigma_\Proc)$ and $\mu_{\D} \in \distrib(\D,\borel_\D)$ we let $\mu=\config{\mu_{\Proc}}{\mu_{\D}}{\E}$ denote the product distribution on $(\C,\Sigma_{\C})$, i.e., $\mu(\config{\mesP}{\mesD}{\E})=\mu_{\Proc}(\mesP)\cdot \mu_{\D}(\mesD)$ for all  $\mesP \in \Sigma_\Proc$ and $\mesD \in \borel_\D$.
If $\mu_{\Proc} = \delta_{\proc}$ for some $\proc\in\Proc$, we shall denote $\config{\delta_{\proc}}{\mu_{\D}}{\E}$ simply by $\config{\proc}{\mu_{\D}}{\E}$.
\end{nota}

Our aim is to express the behaviour of a system in terms of the changes on data.
We start with the \emph{one-step} behaviour of a configuration, in which we combine the effects on the \datastate{} induced by the activity of the process (given by $\pstep$) and the subsequent action by the environment.
Formally, we define a function $\cstep$ that, given a configuration $c$, yields a distribution on $(\C,\Sigma_{\C})$ (Definition~\ref{def:cstep} below).
Then, we use $\cstep$ to define the \emph{multi-step} behaviour of configuration $c$ as a sequence $\ES_{c,0}^{\C}, \ES_{c,1}^{\C},\ldots$ of distributions on $(\C,\Sigma_{\C})$.
To this end, we show that $\cstep$ is a Markov kernel (Proposition~\ref{prop:cstep} below).
Finally, to abstract from processes and focus only on data, from the sequence $\ES_{c,0}^{\C},\ES_{c,1}^{\C},\ldots$, we obtain a sequence of distributions $\ES_{c,0}^{\D}, \ES_{c,1}^{\D},\ldots$ on $(\D,\borel_\D)$ called the \emph{\traccione{}} of the system (Definition~\ref{def:traccione} below).

\begin{defi}
[One-step semantics of configurations]%
\label{def:cstep}
Function $\cstep \colon \C \to \distrib(\C,\Sigma_\C)$ is defined for all configurations
$\config{\proc}{\ds}{\E} \in \C$ by
\begin{equation}%
\label{ex:function_cstep}
\cstep(\config{\proc}{\ds}{\E}) =
\sum_{(\theta,\proc')\in \support(\pstep(\proc,\ds))} \pstep(\proc,\ds)(\theta,\proc')\cdot \config{\proc'}{\E(\theta(\ds))}{\E}.
\end{equation}
\end{defi}

The next result follows by $\E(\theta(\ds)) \in \distrib(\D,\borel_\D)$ and Proposition~\ref{prop:processes_are_generative}.

\begin{prop}
[Properties of one-step configuration semantics]%
\label{prop:configurations_are_generative}
Assume any configuration $\config{\proc}{\ds}{\E} \in \C$.
Then $\cstep(\config{\proc}{\ds}{\E})$ is a distribution on $(\C,\Sigma_{\C})$.
\end{prop}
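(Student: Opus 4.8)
The plan is to read off the right-hand side of Equation~\eqref{ex:function_cstep} as a \emph{finite convex combination of product distributions} and then appeal to the two elementary facts recalled in Section~\ref{sec:background}: that a product of probability measures on the two factor spaces is a probability measure on the product $\sigma$-algebra, and that a finite convex combination of probability measures is again a probability measure.

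First I would fix a configuration $\config{\proc}{\ds}{\E} \in \C$ and apply Proposition~\ref{prop:processes_are_generative}: the index set $J := \support(\pstep(\proc,\ds))$ is finite, the coefficients $p_{(\theta,\proc')} := \pstep(\proc,\ds)(\theta,\proc')$ are nonnegative, and $\sum_{(\theta,\proc') \in J} p_{(\theta,\proc')} = 1$. Thus $\cstep(\config{\proc}{\ds}{\E}) = \sum_{(\theta,\proc') \in J} p_{(\theta,\proc')} \cdot \config{\proc'}{\E(\theta(\ds))}{\E}$ is, by construction, a convex combination indexed by the finite set $J$.

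Next I would verify that each summand $\config{\proc'}{\E(\theta(\ds))}{\E}$ is a probability measure on $(\C,\Sigma_\C)$. Since each effect $\theta$ is a $\borel_\D$-measurable map $\D \to \D$, we have $\theta(\ds) \in \D$, so Definition~\ref{def:environment} gives $\E(\theta(\ds)) \in \distrib(\D,\borel_\D)$. Together with $\delta_{\proc'} \in \distrib(\Proc,\Sigma_\Proc)$, Notation~\ref{notazione_distrib} yields that $\config{\proc'}{\E(\theta(\ds))}{\E}$ is the well-defined product distribution $\delta_{\proc'} \otimes \E(\theta(\ds))$ on $(\C,\Sigma_\C) = (\Proc \times \D, \Sigma_\Proc \otimes \borel_\D)$; as a product of probability measures it assigns value $1$ to $\C$ and is $\sigma$-additive on $\Sigma_\C$. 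Finally, invoking closure of $\distrib(\C,\Sigma_\C)$ under finite convex combinations (Section~\ref{sec:background}), the whole sum $\cstep(\config{\proc}{\ds}{\E})$ lies in $\distrib(\C,\Sigma_\C)$, which is the claim.

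I do not expect a genuine obstacle here; the argument is a routine bookkeeping of definitions. The only points deserving a line of care are (i) that finiteness of $\support(\pstep(\proc,\ds))$ from Proposition~\ref{prop:processes_are_generative} is exactly what lets the convex-combination step go through without any $\sigma$-additivity subtleties in the index, and (ii) that measurability of each $\theta$ is what makes $\theta(\ds)$ a legitimate argument of $\E$, so that $\E(\theta(\ds))$ is a bona fide distribution on $(\D,\borel_\D)$.
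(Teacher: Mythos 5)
Your proof is correct and takes essentially the same route as the paper's: each summand $\config{\proc'}{\E(\theta(\ds))}{\E}$ is a product distribution on $(\C,\Sigma_\C)$ by Notation~\ref{notazione_distrib}, and Proposition~\ref{prop:processes_are_generative} turns the whole expression into a finite convex combination of distributions whose weights sum to $1$. One cosmetic remark: $\theta(\ds)\in\D$ follows already from $\theta$ being a map $\D\to\D$, so the $\borel_\D$-measurability of $\theta$ is not needed at this point (it becomes relevant only later, in the proof of Proposition~\ref{prop:cstep}).
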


%----------------------------------------------
\begin{proof}
Being $\ds$ a \datastate{} in $\D$ and $\theta$ an effect, $\theta(\ds)$ is a \datastate.
Then, $\E(\theta(\ds))$ is a distribution on $(\D,\borel_\D)$, thus implying that
$\config{\proc'}{\E(\theta(\ds))}{\E}$ is a distribution on $(\C,\Sigma_{\C})$ (see Notation~\ref{notazione_distrib}).
Finally, by Proposition~\ref{prop:processes_are_generative} we get that $\cstep(\config{\proc}{\ds}{\E})$ is a convex combination of distributions on $(\C,\Sigma_{\C})$ whose weights sum up to 1, which gives the thesis.
\end{proof}
%----------------------------------------------

Since $\cstep(c) \in \distrib(\C,\Sigma_{\C})$ for each $c \in \C$, we can rewrite $\cstep \colon \C \times \Sigma_{\C} \to [0,1]$, so that for each configuration $c \in \C$ and measurable set $\mesC \in \Sigma_{\C}$, $\cstep(c)(\mesC)$ denotes the probability of reaching in one step a configuration in $\mesC$ starting from $c$.
We can prove that $\cstep$ is the Markov kernel of the Markov process modelling our system.
This follows by Proposition~\ref{prop:configurations_are_generative} and by proving that for each $\mesC \in \Sigma_\C$, the mapping $c \mapsto \cstep(c)(\mesC)$ is $\Sigma_\C$-measurable.

\begin{prop}%
\label{prop:cstep}
The function $\cstep$ is a Markov kernel.
\end{prop}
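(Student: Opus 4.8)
The plan is to check, one at a time, the two conditions defining a Markov (probability) kernel $\cstep \colon \C \times \Sigma_\C \to [0,1]$. The first one — that $\cstep(c)$ is a probability measure on $(\C,\Sigma_\C)$ for every configuration $c$ — is exactly Proposition~\ref{prop:configurations_are_generative}, so nothing is left to do there. Hence the whole work lies in the second condition: for every $\mesC \in \Sigma_\C$ the map $c \mapsto \cstep(c)(\mesC)$ is $\Sigma_\C$-measurable. To handle an arbitrary $\mesC$ I would reduce to rectangles by a Dynkin ($\pi$-$\lambda$) argument: the family $\{\mesC \in \Sigma_\C \mid c \mapsto \cstep(c)(\mesC) \text{ is } \Sigma_\C\text{-measurable}\}$ contains $\C$, is closed under proper differences (since $\cstep(c)(\mesC') - \cstep(c)(\mesC)$ is a difference of measurable functions whenever $\mesC \subseteq \mesC'$) and under countable increasing unions (pointwise monotone limits of measurable functions are measurable), so it is a $\lambda$-system; as the rectangles $\config{\mesP}{\mesD}{\E}$ form a $\pi$-system generating $\Sigma_\C$, the family is all of $\Sigma_\C$.

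So fix a rectangle $\mesC = \config{\mesP}{\mesD}{\E}$. Unfolding Definition~\ref{def:cstep} and using that $\config{\proc'}{\E(\theta(\ds))}{\E}$ is the product distribution $\delta_{\proc'} \otimes \E(\theta(\ds))$, one gets
\[
\cstep(\config{\proc}{\ds}{\E})(\mesC) = \sum_{(\theta,\proc') \in \support(\pstep(\proc,\ds))} \pstep(\proc,\ds)(\theta,\proc') \cdot \mathbf{1}_{\mesP}(\proc') \cdot \E(\theta(\ds))(\mesD).
\]
Since the process coordinate carries the discrete $\sigma$-algebra $\Sigma_\Proc = \powset{\Proc}$, it suffices to prove that for each \emph{fixed} $\proc$ the section $\ds \mapsto \cstep(\config{\proc}{\ds}{\E})(\mesC)$ is $\borel_\D$-measurable; $\Sigma_\C$-measurability of the full map then follows from $\Sigma_\C = \Sigma_\Proc \otimes \borel_\D$. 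The auxiliary fact I would establish first, by structural induction on $\proc$ following the cases of Table~\ref{tab:pstepfunction}, is that $\pstep(\proc,\cdot)$ has a \emph{piecewise} form: there exist a finite $\borel_\D$-measurable partition $\{\D_j\}_{j \in J}$ of $\D$ and, for each $j$, finitely many residual processes $\proc_{j,k}$, weights $p_{j,k}$ with $\sum_k p_{j,k} = 1$, variable tuples $\overline{x}_{j,k}$ and $\borel_\D$-measurable maps $g_{j,k} \colon \D \to \real^{|\overline{x}_{j,k}|}$ such that $\pstep(\proc,\ds) = \sum_k p_{j,k} \cdot \delta_{(\overline{x}_{j,k} \leftarrow g_{j,k}(\ds),\, \proc_{j,k})}$ for all $\ds \in \D_j$. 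In the induction, each node $\mathrm{if}\,[e]$ splits the current block by the measurable sets $\{\ds \mid \llbracket e \rrbracket_\ds = 1\}$ and $\{\ds \mid \llbracket e \rrbracket_\ds = 0\}$; probabilistic choice and interleaving take common refinements and rescale/combine the weights; synchronous composition pairs branches, concatenates the (disjoint) variable tuples and multiplies the weights; and the $g_{j,k}$ are measurable because every operator $op_k$ is assumed measurable. For a process variable $A \stackrel{\mathit{def}}{=} \proc$ the guardedness hypothesis makes the one-step unfolding terminate, so the induction is well-founded.

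Granted this, on each block $\D_j$ the section rewrites as $\cstep(\config{\proc}{\ds}{\E})(\mesC) = \sum_k p_{j,k}\,\mathbf{1}_{\mesP}(\proc_{j,k}) \cdot \E(\ds[\overline{x}_{j,k} = g_{j,k}(\ds)])(\mesD)$, i.e.\ a finite combination with constant coefficients of the maps $\ds \mapsto \E(\ds[\overline{x}_{j,k} = g_{j,k}(\ds)])(\mesD)$. Each such map is $\borel_\D$-measurable: the update $\ds \mapsto \ds[\overline{x}_{j,k} = g_{j,k}(\ds)]$ is $\borel_\D$-$\borel_\D$-measurable (its components are coordinate projections or components of $g_{j,k}$), and $\ds' \mapsto \E(\ds')(\mesD)$ is $\borel_\D$-measurable by Definition~\ref{def:environment}; composition preserves measurability. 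Since a map that is measurable on each block of a finite measurable partition is measurable, the section is $\borel_\D$-measurable, which is what was needed.

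The step I expect to be the real obstacle is the piecewise-form lemma for $\pstep(\proc,\cdot)$: although it is conceptually just a careful reading of Table~\ref{tab:pstepfunction}, one has to keep track simultaneously of how nested $\mathrm{if}$-guards carve out the finite measurable partition, of how the weights compose through probabilistic choice, interleaving and synchronous composition, and of the termination of the one-step unfolding for guarded recursion. Everything downstream of this lemma — composition of measurable maps, the finite-partition argument, and the $\pi$-$\lambda$ reduction to rectangles — is routine measure theory.
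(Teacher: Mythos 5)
Your proof establishes the paper's statement, but by a genuinely different and more detailed route than the paper's own argument. The paper handles condition (1) exactly as you do, via Proposition~\ref{prop:configurations_are_generative}, and then disposes of condition (2) in a few lines: it writes $\cstep(\config{\proc}{\ds}{\E})(\mesC)$ as the finite sum over $\support(\pstep(\proc,\ds))$, notes that each effect $\theta$ and $\E$ are $\borel_\D$-measurable, hence so is the composition $\E(\theta(\cdot))$, argues that $\config{(\cdot)}{\E(\theta(\cdot))}{\E}$ is $\Sigma_\C$-measurable because every subset of $\Proc$ is measurable, and concludes that $\cstep(\cdot)(\mesC)$ is a finite linear combination of $\Sigma_\C$-measurable functions. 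There is no $\pi$--$\lambda$ reduction to rectangles, and, more importantly, the paper tacitly treats the pairs $(\theta,\proc')$ and the weights as fixed, even though $\support(\pstep(\proc,\ds))$ varies with $\ds$ through the $\mathrm{if}$-guards. Your piecewise-form lemma for $\pstep(\proc,\cdot)$, proved by structural induction on $\proc$, is exactly the bookkeeping that the published proof leaves implicit, so on that point your argument is tighter than the paper's; the $\pi$--$\lambda$ step is extra machinery the paper does without (for fixed $(\theta,\proc')$ the summand is measurable for an arbitrary $\mesC$, since its $\proc'$-section is in $\borel_\D$), but it is harmless.

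The one step you do not fully justify is the reduction to sections: from $\borel_\D$-measurability of $\ds \mapsto \cstep(\config{\proc}{\ds}{\E})(\mesC)$ for every fixed $\proc$ you infer $\Sigma_\C$-measurability of the joint map solely because $\Sigma_\Proc$ is the power set. That inference is automatic only when $\Proc$ is countable (the preimage of a Borel set is then a countable union of sets of the form $\{\proc\}\times\cdot$); here $\Proc$ is uncountable, since expressions carry arbitrary real constants, and in general sectionwise measurability does not imply measurability with respect to $\Sigma_\Proc \otimes \borel_\D$ when the discrete factor is uncountable. You are no worse off than the paper, whose proof makes essentially the same leap (``every subset of $\Proc$ is measurable''), but a fully airtight argument would also have to control the dependence on $\proc$, e.g.\ by exhibiting it through a measurable parametrisation of process terms (countably many syntactic shapes, with the finitely many real constants of each shape entering $\cstep$ measurably), or by restricting the constants allowed in expressions to a countable set.
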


%---------------------------------------------------------
\begin{proof}
We need to show that $\cstep$ satisfies the two properties of Markov kernels, namely
\begin{enumerate}
\item\label{kernel_1}
For each configuration $c \in \C$, the mapping $\mesC \mapsto \cstep(c)(\mesC)$ is a distribution on $(\C, \Sigma_\C)$.
\item\label{kernel_2}
For each measurable set $\mesC \in \Sigma_\C$, the mapping $c \mapsto \cstep(c)(\mesC)$ is $\Sigma_\C$-measurable.
\end{enumerate}
Item~\ref{kernel_1} follows directly by Proposition~\ref{prop:configurations_are_generative}.

Let us focus on item~\ref{kernel_2}.
By Definition~\ref{def:cstep}, for each $\config{\proc}{\ds}{\E} \in \C$ we have that
\[
\cstep(\config{\proc}{\ds}{\E})(\mesC) =
\sum_{(\theta,\proc')\in \support(\pstep(\proc,\ds))}
\pstep(\proc,\ds)(\theta,\proc') \cdot
\config{\proc'}{\E(\theta(\ds))}{\E}(\mesC)
\enspace .
\]
As, by definition, each $\theta \in \Theta$ and $\E$ are $\borel_\D$-measurable functions, we can infer that also their composition $\E(\theta(\cdot))$ is a $\borel_\D$-measurable function.
Since, moreover, $\Sigma_\C$ is the (smallest) $\sigma$-algebra generated by $\Sigma_{\Proc} \times \borel_\D$ and every subset of $\Proc$ is a measurable set in $\Sigma_{\Proc}$, we can also infer that $\config{(\cdot)}{\E(\theta(\cdot))}{\E}$ is a $\Sigma_\C$-measurable function.
Finally, we recall that by Proposition~\ref{prop:processes_are_generative} we have that $\support(\pstep(\proc,\ds))$ is finite.
Therefore, we can conclude that $\cstep$ is a $\Sigma_\C$-measurable function as linear combination of a finite collection of $\Sigma_\C$-measurable functions (see, e.g.,~\cite[Chapter 3.5, Proposition 19]{Ro88}).
\end{proof}
%--------------------------------------------------------

Hence, the multi-step behaviour of configuration $c$ can be defined as a time homogeneous Markov process having $\cstep$ as Markov kernel and $\delta_c$ as initial distribution.

\begin{defi}
[Multi-step semantics of configurations]%
\label{def:multi-step_cstep}
Let $c\in \C$ be a configuration.
The \emph{multi-step behaviour} of $c$ is the sequence of distributions $\ES_{c,0}^{\C},\ES_{c,1}^{\C}, \ldots$ on $(\C,\Sigma_{\C})$ defined inductively as follows:
\begin{align*}
\ES_{c,0}^{\C}(\mesC) = & \; \delta_{c}(\mesC) \text{, for all } \mesC \in \Sigma_{\C}\\
\ES_{c,i+1}^{\C}(\mesC) = & \int_{\C} \cstep(b)(\mesC) \;\dd(\ES_{c,i}^{\C}(b))  \text{, for all } \mesC \in \Sigma_{\C}.
\end{align*}
\end{defi}

We can prove that $\ES_{c,0}^{\C},\ES_{c,1}^{\C}, \ldots$ are well defined, namely they are distributions on $(\C,\Sigma_{\C})$.

\begin{prop}
[Properties of configuration multi-step semantics]%
\label{prop:configurations_are_super_generative}
For any $c \in \C$, all $\ES_{c,0}^{\C},\ES_{c,1}^{\C}, \ldots$ are distributions on $(\C,\Sigma_{\C})$.
\end{prop}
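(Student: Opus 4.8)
The plan is to prove the claim by induction on $i$, exploiting the fact that $\cstep$ is a Markov kernel (Proposition~\ref{prop:cstep}) and that the sequence is defined by integrating the kernel against the previous distribution.

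For the base case, $\ES_{c,0}^{\C} = \delta_c$ is trivially a distribution on $(\C,\Sigma_\C)$, since a Dirac measure on a measurable space is a probability measure.

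For the inductive step, assume $\ES_{c,i}^{\C}$ is a distribution on $(\C,\Sigma_\C)$. I would show that $\ES_{c,i+1}^{\C}(\mesC) = \int_{\C} \cstep(b)(\mesC)\,\dd(\ES_{c,i}^{\C}(b))$ defines a distribution. First, the integral is well-defined: by Proposition~\ref{prop:cstep}, item~\ref{kernel_2}, the integrand $b \mapsto \cstep(b)(\mesC)$ is $\Sigma_\C$-measurable, and it is bounded (valued in $[0,1]$), hence integrable with respect to the finite measure $\ES_{c,i}^{\C}$. Next, $\ES_{c,i+1}^{\C}(\C) = \int_\C \cstep(b)(\C)\,\dd(\ES_{c,i}^{\C}(b)) = \int_\C 1\,\dd(\ES_{c,i}^{\C}(b)) = \ES_{c,i}^{\C}(\C) = 1$, using that $\cstep(b)$ is a probability measure (Proposition~\ref{prop:cstep}, item~\ref{kernel_1}). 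Non-negativity is immediate since the integrand is non-negative. Finally, countable additivity: for a countable family $\{\mesC_j\}_{j}$ of pairwise disjoint sets in $\Sigma_\C$, we have $\cstep(b)(\bigcup_j \mesC_j) = \sum_j \cstep(b)(\mesC_j)$ for each $b$ (again by item~\ref{kernel_1}), so by the monotone convergence theorem (or Tonelli, swapping sum and integral for non-negative integrands) $\ES_{c,i+1}^{\C}(\bigcup_j \mesC_j) = \int_\C \sum_j \cstep(b)(\mesC_j)\,\dd(\ES_{c,i}^{\C}(b)) = \sum_j \int_\C \cstep(b)(\mesC_j)\,\dd(\ES_{c,i}^{\C}(b)) = \sum_j \ES_{c,i+1}^{\C}(\mesC_j)$.

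I expect no serious obstacle here: the statement is essentially the standard fact that integrating a Markov kernel against a probability measure produces a probability measure, and all the technical ingredients (measurability of the integrand, the kernel property) have already been established in Proposition~\ref{prop:cstep}. The only point requiring a little care is the interchange of countable summation and integration, which is justified by monotone convergence since the summands are non-negative. This is routine enough that one could also simply cite the general theory of Markov processes.
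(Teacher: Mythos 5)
Your proof is correct and follows essentially the same route as the paper's: induction on $i$, with the base case being the Dirac measure and the inductive step resting on Proposition~\ref{prop:cstep}. The paper leaves the inductive step as an immediate consequence of the Markov-kernel property, while you spell out the routine verification (measurability of the integrand, total mass one, and countable additivity via monotone convergence), which is exactly the content the paper takes for granted.
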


%-----------------------------------------------------------------
\begin{proof}
The proof follows by induction.
The base case for $\ES_{c,0}^{\C}$ is immediate.
The inductive step follows by Proposition~\ref{prop:cstep}.
\end{proof}
%---------------------------------------------------------------------

As the program-environment interplay can be observed only in the changes they induce on the \datastates, we define the \emph{\traccione{}} of a configuration as the sequence of distributions over \datastates{} that are reached by it, step-by-step.

\begin{defi}
[\Traccione]%
\label{def:traccione}
The \emph{\traccione{}} of a configuration $c = \config{\proc}{\ds}{\E}$ is a sequence $\ES^\D_c \in \distrib(\D,\borel_\D)^{\omega}$ of distributions on $(\D,\borel_{\D})$ s.t.\ $\ES^\D_c = \ES^\D_{c,0} \dots \ES^\D_{c,n} \dots$ if and only if for all $i \ge 0$ and for all $\mesD \in \borel_\D$,
\[
\ES^\D_{c,i}(\mesD) =  \ES_{c,i}^{\C}(\config{\Proc}{\mesD}{\E}).
\]
\end{defi}

%===============================================
% sec - metriche
%===============================================

\section{Towards a metric for systems}%
\label{sec:metriche}

As outlined in the Introduction, we aim at defining a \emph{distance} over the systems described in the previous section, called the \emph{\spell metric}, allowing us to do the following:
\begin{enumerate}
\item Measure how well a program is fulfilling its tasks.
\item Establish whether one program behaves better than another one in an environment.
\item Compare the interactions of a program with different environments.
\end{enumerate}
As we will see, these three objectives can be naturally obtained thanks to the possibility of modelling the program in isolation from the environment typical of our model, and to our purely data-driven system semantics.
Intuitively, since the behaviour of a system is entirely described by its \traccione, the \spell metric $\m$ will indeed be defined as a distance on the \tracciones{} of systems.
However, in order to obtain the proper technical definition of $\m$, some considerations are due.

Firstly, we notice that in most applications, not only the program-environment interplay, but also the tasks of the program can be expressed in a purely data-driven fashion.
For instance, if we are monitoring a particular feature of the system then we can identify a (time-dependent) \emph{specific value of interest} of a state variable, like an upper bound on the energy consumption at a given time step.
Similarly, if we aim at controlling several interacting features or there is a range of parameters that can be considered acceptable, like in the case of our running example, then we can specify a (time-dependent) \emph{set of values of interest}.
At any time step, any difference between them and the data actually obtained can be interpreted as a flaw in systems behaviour.
We use a \emph{penalty function} $\rho$ to quantify these differences.
From the penalty function we can obtain a \emph{distance on \datastates}, namely a $1$-bounded \emph{hemimetric} $m^\D$ expressing how much a \datastate{} $\ds_2$ is worse than a \datastate{} $\ds_1$ according to parameters of interests.
The idea is that if $\ds_1$ is obtained by the interaction of $\proc_1$ with $\E$ and $\ds_2$ is obtained from the one of $\proc_2$ with $\E$, $\ds_2$ being worse than $\ds_1$ means that $\proc_2$ is performing worse than $\proc_1$ in this case.
Similarly, if $\ds_1$ and $\ds_2$ are obtained from the interaction of $\proc$ with $\E_1$ and $\E_2$, respectively, we can express whether in this particular configuration and time step $\proc$ behaves worse in one environment with respect to the other.

Secondly, we recall that the \traccione{} of a system consists in a sequence of \emph{distributions} over \datastates.
Hence, we use the \emph{Wasserstein metric} to lift $m^\D$ to a distance $\Wasserstein(m^\D)$ over distributions over \datastates.
Informally, the Wasserstein metric gives the expected value of the ground distance between the elements in the support of the distributions.
Thus, in our case, the lifting expresses how much worse a configuration is expected to behave with respect to another one at a given time.

Finally, we need to lift $\Wasserstein(m^\D)$ to a distance on the entire \tracciones{} of systems.
For our purposes, a reasonable choice is to take the maximum over time of the pointwise (with respect to time) Wasserstein distances (see Remark~\ref{rmk:why_max} below for further details on this choice).
Actually, to favour computational tractability, our metric will \emph{not} be evaluated on \emph{all} the distributions in the \tracciones, but \emph{only} on those that are reached at certain time steps, called the \emph{observation times} ($\OT$).

We devote the remaining of this section to a formalisation of the intuitions above.

%===========================================

\subsection{A metric on \datastates}%
\label{sec:metric_on_data}

We start by proposing a metric on \datastates, seen as \emph{static components} in isolation from processes and environment.
To this end, we introduce a \emph{penalty function} $\rho \colon \D \to [0,1]$, a continuous function that assigns to each \datastate{} $\ds$ a penalty in $[0,1]$ expressing how far the values of the parameters of interest in $\ds$ are from their desired ones (hence $\rho(\ds) = 0$ if $\ds$ respects all the parameters).
Since sometimes the parameters can be time-dependent, we also introduce a time-dependent version of $\rho$: at any time step $\tau$, the $\tau$-penalty function $\rho_\tau$ compares the \datastates{} with respect to the values of the parameters expected at time $\tau$.
When the value of the penalty function is independent from the time step, we simply omit the subscript $\tau$.

\begin{exa}%
\label{ex:tanks_V}
A requirement on the behaviour of the three-tanks system from Example~\ref{ex:tanks_I} can be that the level of water $l_{i}$ should be at the level $l_{goal}$, for $i = 1,2,3$.
This can be easily rendered in terms of the penalty functions $\rho^{l_i}$, for $i=1,2,3$, defined as the normalised distance between the current level of water $\ds(l_i)$ and $l_{goal}$, namely:
\begin{equation}%
\label{eq:single_penalty_function}
\rho^{l_i}(\ds) =
\frac{|\ds(l_i) - l_{goal}|}{\max\{l_{max}-l_{goal},l_{goal}-l_{min}\}}
\enspace .
\end{equation}
Clearly, we can also consider as a requirement that \emph{all} the $l_i$ are at level $l_{goal}$.
The penalty function thus becomes
\[
\rho^{\max}(\ds)=\max_{i =1,2,3}\rho^{l_i}(\ds)
\enspace .
\]
The ones proposed above are just a few simple examples of requirements, and related penalty functions, for the three-tanks experiment.
Yet, in their simplicity, they will allow us to showcase a number of different applications of our framework while remaining in a setting where the reader can easily foresee the expected results.
\hfill\scalebox{0.85}{$\LHD$}
\end{exa}

A formal definition of the penalty function is beyond the purpose of this paper, also due to its context-dependent nature.
Besides, notice that we can assume that $\rho$ already includes some tolerances with respect to the exact values of the parameters in its evaluation, and thus we do not consider them.
The (\emph{timed}) \emph{metric on \datastates{}} is then defined as the asymmetric difference between the penalties assigned to them by the penalty function.

\begin{defi}
[Metric on \datastates{}]%
\label{def:metric_DS}
For any time step $\tau$, let $\rho_{\tau} \colon \D \rightarrow [0,1]$ be the $\tau$-penalty function on $\D$.
The $\tau$-\emph{metric on \datastates{}} in $\D$, $m_{\rho,\tau}^{\D} \colon \D \times \D \to [0,1]$, is defined, for all $\ds_1,\ds_2 \in \D$, by
\[
m_{\rho,\tau}^{\D}(\ds_1,\ds_2) = \max\{\rho_{\tau}(\ds_2)-\rho_{\tau}(\ds_1), 0 \}.
\]
\end{defi}

Notice that $m_{\rho,\tau}^{\D}(\ds_1,\ds_2) > 0$ if{f} $\rho_\tau(\ds_2) > \rho_\tau(\ds_1)$, i.e., the penalty assigned to $\ds_2$ is higher than that assigned to $\ds_1$.
For this reason, we say that $m_{\rho,\tau}^{\D}(\ds_1,\ds_2)$ expresses \emph{how worse} $\ds_2$ is than $\ds_1$ with respect to the objectives of the system.
It is not hard to see that for all $\ds_1,\ds_2,\ds_3 \in \D$ we have $m_{\rho,\tau}^{\D}(\ds_1,\ds_2)\le 1$, $m_{\rho,\tau}^{\D}(\ds_1,\ds_1)=0$, and $m_{\rho,\tau}^{\D}(\ds_1,\ds_2) \le m_{\rho,\tau}^{\D}(\ds_1,\ds_3) + m_{\rho,\tau}^{\D}(\ds_3,\ds_2)$, thus ensuring that $m_{\rho,\tau}^{\D}$ is a 1-bounded hemimetric.

\begin{prop}%
\label{prop:m_metrica}
Function $m_{\rho,\tau}^{\D}$ is a 1-bounded hemimetric on $\D$.
\end{prop}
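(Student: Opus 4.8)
The plan is to verify directly the three defining properties of a $1$-bounded hemimetric for the function $m_{\rho,\tau}^{\D}(\ds_1,\ds_2) = \max\{\rho_\tau(\ds_2)-\rho_\tau(\ds_1),0\}$, relying only on the fact that $\rho_\tau \colon \D \to [0,1]$ takes values in the unit interval. Recall that a hemimetric is a pseudometric without the symmetry requirement, so there are exactly three things to check: boundedness by $1$, the identity-of-indiscernibles weakening $m_{\rho,\tau}^{\D}(\ds_1,\ds_1) = 0$, and the triangle inequality. Since the excerpt already sketches these observations in the paragraph preceding the statement, the proof is essentially a matter of recording the elementary arithmetic cleanly.

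First I would observe that $m_{\rho,\tau}^{\D}$ is well-defined and non-negative, as it is a maximum of $0$ with a real number. For the $1$-bound: since $\rho_\tau(\ds_1), \rho_\tau(\ds_2) \in [0,1]$, we have $\rho_\tau(\ds_2) - \rho_\tau(\ds_1) \le 1 - 0 = 1$, hence $m_{\rho,\tau}^{\D}(\ds_1,\ds_2) \le 1$. For reflexivity: $m_{\rho,\tau}^{\D}(\ds_1,\ds_1) = \max\{\rho_\tau(\ds_1)-\rho_\tau(\ds_1),0\} = \max\{0,0\} = 0$. Note we do \emph{not} claim the converse (that $m_{\rho,\tau}^{\D}(\ds_1,\ds_2)=0$ implies $\ds_1 = \ds_2$), which is exactly why this is only a hemimetric and not a metric: distinct data states with equal or non-increasing penalty are identified.

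The only step with any content is the triangle inequality $m_{\rho,\tau}^{\D}(\ds_1,\ds_2) \le m_{\rho,\tau}^{\D}(\ds_1,\ds_3) + m_{\rho,\tau}^{\D}(\ds_3,\ds_2)$. Writing $a = \rho_\tau(\ds_1)$, $b = \rho_\tau(\ds_3)$, $c = \rho_\tau(\ds_2)$, this reduces to the purely numerical claim $\max\{c-a,0\} \le \max\{b-a,0\} + \max\{c-b,0\}$ for all reals $a,b,c$. This follows because $\max\{\cdot,0\}$ is subadditive and monotone: $c - a = (b-a) + (c-b) \le \max\{b-a,0\} + \max\{c-b,0\}$, and since the right-hand side is already non-negative, taking $\max\{\cdot,0\}$ on the left preserves the inequality. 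I would spell out these two lines explicitly. I do not anticipate any real obstacle here — the result is routine; the only thing to be careful about is not accidentally asserting symmetry or full identity-of-indiscernibles, since the whole point of using a hemimetric (as emphasized in the surrounding text and justified by the reference to \cite{FR18}, Proposition 6) is that $m_{\rho,\tau}^{\D}$ genuinely lacks those properties.

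Putting it together, the proof is: (i) $0 \le m_{\rho,\tau}^{\D}(\ds_1,\ds_2) \le 1$ since $\rho_\tau$ is $[0,1]$-valued; (ii) $m_{\rho,\tau}^{\D}(\ds_1,\ds_1)=0$ by direct computation; (iii) the triangle inequality from subadditivity and monotonicity of $x \mapsto \max\{x,0\}$ applied to the telescoped difference $\rho_\tau(\ds_2)-\rho_\tau(\ds_1) = (\rho_\tau(\ds_3)-\rho_\tau(\ds_1)) + (\rho_\tau(\ds_2)-\rho_\tau(\ds_3))$. This is exactly the hemimetric $m(x,y)=\max\{y-x,0\}$ from the background section, pulled back along $\rho_\tau$, so one could alternatively just cite Proposition 6 of \cite{FR18} together with the fact that precomposition of a hemimetric with a function yields a hemimetric — but given the triviality, writing out the three checks is cleaner and self-contained.
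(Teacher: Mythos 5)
Your proof is correct and coincides with the argument the paper itself gives (the sentence preceding the proposition, which notes the three properties $m_{\rho,\tau}^{\D}(\ds_1,\ds_2)\le 1$, $m_{\rho,\tau}^{\D}(\ds_1,\ds_1)=0$, and the triangle inequality): you simply spell out the elementary verification, with the triangle inequality obtained from subadditivity and monotonicity of $x \mapsto \max\{x,0\}$ applied to the telescoped difference. No gaps; nothing further is needed.
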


When no confusion shall arise, we shall drop the $\rho,\tau$ subscripts.
We remark that penalty functions allow us to define the distance between two \datastates, which are elements of $\real^n$, in terms of a distance on $\real$.
As we discuss in Section~\ref{sec:computing}, this feature significantly lowers the complexity of the evaluation the \spell metric.

%=================================================

\subsection{Lifting \texorpdfstring{$m^\D$}{mD} to distributions}%
\label{sec:Wass}

The second step to obtain the \spell metric consists in lifting $m^\D$ to a metric on distributions on \datastates.
In the literature, we can find a wealth of notions of distances on distributions (see~\cite{RKSF13} for a survey).
For our purposes, the most suitable one is the \emph{Wasserstein metric}~\cite{W69}.

According to Definition~\ref{def:Wasserstein} (given in Section~\ref{sec:background}), for any two distributions $\mu,\nu$ on $(\D,\borel_\D)$, the Wasserstein lifting of $m^{\D}$ to a distance between $\mu$ and $\nu$ is defined by
\[
\Wasserstein(m^{\D})(\mu,\nu) = \inf_{\w \in \W(\mu,\nu)} \int_{\D \times \D} m^{\D}(\ds,\ds') \;\dd\w(\ds,\ds')
\]
where $\W(\mu,\nu)$ is the set of the \emph{couplings of $\mu$ and $\nu$}, namely the set of joint distributions $\w$ over the product space $(\D \times \D, \borel_\D \otimes \borel_\D)$ having $\mu$ and $\nu$ as left and right marginal.

As outlined in Section~\ref{sec:background},~\cite[Proposition 4, Proposition 6]{FR18} guarantee that $\Wasserstein(m^\D)$ is a well-defined hemimetric on $\distrib(\D,\borel_\D)$.

\begin{prop}%
\label{prop:W_metrica}
Function $\Wasserstein(m^\D_{\rho,\tau})$ is a $1$-bounded hemimetric on $\distrib(\D,\borel_\D)$.
\end{prop}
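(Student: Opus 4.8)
The plan is to invoke the general results about the Wasserstein lifting that were already recalled in Section~\ref{sec:background}, applied to the specific hemimetric $m^\D_{\rho,\tau}$. First I would observe that by Proposition~\ref{prop:m_metrica}, $m^\D_{\rho,\tau}$ is a $1$-bounded hemimetric on $\D$, and that $\D = \bigtimes_{i=1}^n \D_{x_i}$ is a Polish space, being a finite product of Polish spaces (each $\D_{x_i}$ is either finite or a compact subset of $\real$). Moreover, $m^\D_{\rho,\tau}(\ds_1,\ds_2) = \max\{\rho_\tau(\ds_2)-\rho_\tau(\ds_1),0\}$ is precisely of the form $\max\{y-x,0\}$ composed with the continuous map $\rho_\tau$, so it falls within the scope of the results cited from \cite{FR18}.

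The key step is then to apply \cite[Proposition 6]{FR18}, which guarantees that the Wasserstein lifting of the hemimetric $m(x,y)=\max\{y-x,0\}$ is a well-defined hemimetric on the space of distributions; since $\rho_\tau$ maps $\D$ into $[0,1]$ and is continuous (hence Borel measurable), $m^\D_{\rho,\tau}$ inherits this structure, and $\Wasserstein(m^\D_{\rho,\tau})$ is a well-defined hemimetric on $\distrib(\D,\borel_\D)$. The Polish-space hypothesis ensures all distributions involved are Radon measures, so that the infimum in Definition~\ref{def:Wasserstein} is over a nonempty set of couplings and the lifting is meaningful. For the $1$-boundedness, I would note that for any coupling $\w \in \W(\mu,\nu)$ we have $\int_{\D\times\D} m^\D_{\rho,\tau}(\ds,\ds')\,\dd\w(\ds,\ds') \le \int_{\D\times\D} 1\,\dd\w = 1$ since $m^\D_{\rho,\tau}$ is $1$-bounded by Proposition~\ref{prop:m_metrica}; taking the infimum preserves this bound, so $\Wasserstein(m^\D_{\rho,\tau})(\mu,\nu) \le 1$.

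I do not expect a serious obstacle here: the statement is essentially a transfer of the abstract well-definedness results to the concrete instance at hand, and the only points requiring care are (i) checking that $\D$ is Polish — handled by the assumptions on the variable domains in Section~\ref{sec:ds} together with the fact that finite products of Polish spaces are Polish — and (ii) confirming that $m^\D_{\rho,\tau}$ is of the form covered by Proposition~6 of \cite{FR18}, which it is by construction from the continuous penalty function $\rho_\tau$. The $1$-boundedness is the one small computation, and it is immediate from monotonicity of the integral. One might alternatively give a direct proof of the triangle inequality via gluing of couplings, but since the cited propositions already establish the hemimetric axioms (identity of indiscernibles relaxed, triangle inequality), the cleanest route is simply to cite them and add the boundedness remark.
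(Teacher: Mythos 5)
Your proposal is correct and follows essentially the same route as the paper, which justifies the proposition simply by appealing to \cite[Proposition 4, Proposition 6]{FR18} together with the Polish-space facts recalled in Section~\ref{sec:background} that make the lifting well defined on $\distrib(\D,\borel_\D)$. Your added integral estimate for the $1$-bound is a harmless (and correct) explicit supplement to what the paper leaves implicit via Proposition~\ref{prop:m_metrica}.
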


%==========================================================

\subsection{The \spell metric}%
\label{sec:spell_metric}

We now need to lift $\Wasserstein(m^{\D})$ to a distance on \tracciones{}.
To this end, we observe that the \traccione{} of a configuration includes the distributions over \datastates{} induced after \emph{each} computation step.
However, it could be the case that the changes on data induced by the environment can be appreciated only along wider time intervals.
To deal with this kind of situations, we introduce the notion of \emph{observation times}, namely a \emph{discrete} set $\OT$ of time steps at which the modifications induced by the program-environment interplay give us useful information on the evolution of the system (with respect to the considered objectives).
Hence, a comparison of the \tracciones{} based on the differences in the distributions reached at the times in $\OT$ can be considered meaningful.
Moreover, considering only the differences at the observation times will favour the computational tractability of the \spell metric.

To compare \tracciones, we propose a sort of \emph{weighted infinity norm} of the tuple of the Wasserstein distances between the distributions in them.
As weight we consider a non-increasing function $\lambda \colon \OT \to \lopen{0,1}$ allowing us to express how much the distance at time $\tau$, namely $\Wasserstein(m^\D_{\rho,\tau})(\ES^\D_{c_1,\tau},\ES^\D_{c_2,\tau})$, affects the overall distance between configurations $c_1$ and $c_2$.
The idea is that, in certain application contexts, the differences in the behaviour of systems that are detected after wide time intervals, can be less relevant than the differences in the initial behaviour.
Hence, we can use the weight $\lambda(\tau)$ to \emph{mitigate} the distance of events at time $\tau$.
Following the terminology introduced in the context of behavioural metrics~\cite{AHM03,DGJP04}, we refer to $\lambda$ as to the \emph{discount function}, and to $\lambda(\tau)$ as to the \emph{discount factor at time} $\tau$.
Clearly, the constant function $\lambda(\tau) = 1$, for all $\tau \in \OT$, means that no discount is applied.

\begin{defi}
[Evolution metric]%
\label{def:spell_metric}
Assume a set $\OT$ of observation times and a discount function $\lambda$.
For each $\tau \in \OT$, let $\rho_{\tau}$ be a $\tau$-penalty function and let $m^\D_{\rho,\tau}$ be the $\tau$-metric on \datastates{} defined on it.
Then, the $\lambda$-\emph{\spell metric} over $\rho$ and $\OT$ is the mapping $\m^{\lambda}_{\rho,\OT} \colon \C \times \C \to [0,1]$ defined, for all configurations $c_1,c_2 \in \C$, by
\[
\m^{\lambda}_{\rho,\OT}(c_1,c_2) = \sup_{\tau \in \OT}\, \lambda(\tau) \cdot \Wasserstein(m^\D_{\rho,\tau})\Big( \ES^\D_{c_1,\tau}, \ES^\D_{c_2,\tau} \Big).
\]
\end{defi}

Since $\Wasserstein(m^{\D}_{\rho,\tau})$ is a 1-bounded hemimetric (Proposition~\ref{prop:W_metrica}) we can easily derive the same property for $\m^{\lambda}_{\rho,\OT}$.

\begin{prop}
Function $\m^{\lambda}_{\rho,\OT}$ is a 1-bounded hemimetric on $\C$.
\end{prop}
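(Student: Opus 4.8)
The plan is to verify directly that $\m^{\lambda}_{\rho,\OT}$ satisfies the three defining conditions of a $1$-bounded hemimetric — non-negativity together with the $1$-bound, the identity condition $\m^{\lambda}_{\rho,\OT}(c,c) = 0$, and the triangle inequality — by lifting, pointwise in $\tau \in \OT$, the corresponding properties of the Wasserstein hemimetric $\Wasserstein(m^\D_{\rho,\tau})$ established in Proposition~\ref{prop:W_metrica}. A preliminary remark, which I would record first, is that each $\ES^\D_{c,\tau}$ is a genuine distribution on $(\D,\borel_\D)$ (by Proposition~\ref{prop:configurations_are_super_generative} and Definition~\ref{def:traccione}), so $\Wasserstein(m^\D_{\rho,\tau})$ is indeed applicable to the pairs appearing in Definition~\ref{def:spell_metric}; moreover, since $\lambda(\tau)\in(0,1]$ and each $\Wasserstein(m^\D_{\rho,\tau})$ is $1$-bounded, every term $\lambda(\tau)\cdot\Wasserstein(m^\D_{\rho,\tau})(\ES^\D_{c_1,\tau},\ES^\D_{c_2,\tau})$ lies in $[0,1]$, so the supremum over $\OT$ is a well-defined element of $[0,1]$. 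This already gives non-negativity and the $1$-bound.

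For the identity condition I would use that $\Wasserstein(m^\D_{\rho,\tau})$ is a hemimetric, hence $\Wasserstein(m^\D_{\rho,\tau})(\mu,\mu)=0$ for every distribution $\mu$; applying this with $\mu = \ES^\D_{c,\tau}$ for each $\tau$ yields $\m^{\lambda}_{\rho,\OT}(c,c) = \sup_{\tau\in\OT} \lambda(\tau)\cdot 0 = 0$.

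For the triangle inequality, fix $c_1,c_2,c_3 \in \C$. For each $\tau \in \OT$, the triangle inequality for $\Wasserstein(m^\D_{\rho,\tau})$ gives
\[
\Wasserstein(m^\D_{\rho,\tau})(\ES^\D_{c_1,\tau},\ES^\D_{c_2,\tau}) \le \Wasserstein(m^\D_{\rho,\tau})(\ES^\D_{c_1,\tau},\ES^\D_{c_3,\tau}) + \Wasserstein(m^\D_{\rho,\tau})(\ES^\D_{c_3,\tau},\ES^\D_{c_2,\tau}).
\]
Multiplying by $\lambda(\tau)\ge 0$ and bounding each of the two terms on the right by the corresponding supremum over $\OT$, I obtain $\lambda(\tau)\cdot\Wasserstein(m^\D_{\rho,\tau})(\ES^\D_{c_1,\tau},\ES^\D_{c_2,\tau}) \le \m^{\lambda}_{\rho,\OT}(c_1,c_3) + \m^{\lambda}_{\rho,\OT}(c_3,c_2)$ for every $\tau \in \OT$; taking the supremum over $\tau$ on the left-hand side gives $\m^{\lambda}_{\rho,\OT}(c_1,c_2)\le \m^{\lambda}_{\rho,\OT}(c_1,c_3)+\m^{\lambda}_{\rho,\OT}(c_3,c_2)$.

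There is essentially no hard step here: this is the standard ``a discounted supremum of hemimetrics is again a hemimetric'' pattern, and beyond elementary arithmetic the only inputs are Proposition~\ref{prop:W_metrica} and the well-definedness of the \tracciones. The one point I would be careful to state explicitly — rather than a genuine obstacle — is that in the triangle-inequality step the outer supremum must be kept on the outside: one does not bound $\m^{\lambda}_{\rho,\OT}(c_1,c_3)$ and $\m^{\lambda}_{\rho,\OT}(c_3,c_2)$ by a single common $\tau$, but splits the pointwise estimate into two $\tau$-independent suprema before taking the outer supremum.
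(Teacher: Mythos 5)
Your proof is correct and follows exactly the route the paper intends: the paper simply remarks that the property is ``easily derived'' from Proposition~\ref{prop:W_metrica}, and your pointwise lifting of the $1$-bound, the identity condition, and the triangle inequality through the discounted supremum is precisely that derivation, carried out carefully (including the correct handling of the supremum in the triangle-inequality step).
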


Notice that if $\lambda$ is a \emph{strictly} non-increasing function, then to obtain upper bounds on the \spell metric only a \emph{finite} number of observations is needed.

\begin{rem}%
\label{rmk:why_max}
Usually, due to the presence of uncertainties, the behaviour of a system can be considered acceptable even if it differs from its intended one \emph{up-to a certain tolerance}.
Similarly, the properties of adaptability, reliability, and robustness that we aim to study will check whether a program is able to perform well in a perturbed environment \emph{up-to a given tolerance}.
In this setting, the choice of defining the \spell metric as the pointwise maximal distance in time between the \tracciones{} of systems is natural and reasonable: if in the worst case (the maximal distance) the program keeps the parameters of interest within the given tolerance, then its entire behaviour can be considered acceptable.
However, with this approach we have that a program is only as good as its worst performance, and one could argue that there are application contexts in which our \spell metric would be less meaningful.
For these reasons, we remark that we could have given a \emph{parametric} version of Definition~\ref{def:spell_metric} and defining the \spell metric in terms of a generic \emph{aggregation function} $f$ over the tuple of Wasserstein distances.
Then, one could choose the best instance for $f$ according to the chosen application context.
The use of a parametric definition would have not affected the technical development of our paper.
However, in order to keep the notation and presentation as simple as possible, we opted to define $\m^{\lambda}_{\rho,\OT}$ directly in the weighted infinity norm form.
A similar reasoning applies to the definition of the penalty function that we gave in Example~\ref{ex:tanks_V}, and to the definition of the metric over \datastates{} (Definition~\ref{def:metric_DS}).
\end{rem}

%==============================================
% sec - computing
%===============================================

\section{Estimating the \spell metric}%
\label{sec:computing}

In this section we show how the \spell metric can be estimated via statistical techniques.
Firstly, in Section~\ref{sec:empirical_tracciones} we show how we can estimate the \traccione{} of a given configuration $c$.
Then, in Section~\ref{sec:computing_distance} we evaluate the distance between two configurations $c_1$ and $c_2$ on their estimated \tracciones.

%===============================================

\subsection{Computing empirical \tracciones}%
\label{sec:empirical_tracciones}

Given a configuration $\config{\proc}{\ds}{\E}$ and an integer $k$ we can use the function $\Call{Simulate}{}$, defined in Figure~\ref{alg:simstep}, to sample a sequence of configurations of the form
\[
\config{\proc_0}{\ds_0}{\E},
\config{\proc_1}{\ds_1}{\E},
\ldots ,
\config{\proc_k}{\ds_k}{\E}.
\]
This sequence represents $k$-steps of a computation starting from $\config{\proc}{\ds}{\E}=\config{\proc_0}{\ds_0}{\E}$.
Each step of the sequence is computed by using function $\Call{SimStep}{}$, also defined in Figure~\ref{alg:simstep}.
There we let $\Call{rand}{}$ be a function that allows us to get a uniform random number in $\lopen{0,1}$ while $\Call{sample}{\E,\ds}$ is used to sample a \datastate{} in the distribution $\E(\ds)$.
We assume that for any measurable set of \datastates{} $\mesD \in \borel_\D$ the probability of $\Call{sample}{\E,\ds}$ to be in $\mesD$ is equal to the probability of $\mesD$ induced by $\E(\ds)$, namely:
\begin{equation}%
\label{eq:fsample_property}
\PP(\Call{sample}{\E,\ds}\in \mesD)=\E(\ds)(\mesD)
\enspace .
\end{equation}

\begin{exa}
If we consider the environment evolution from Examples~\ref{ex:tanks_I} and~\ref{ex:tanks_IV}, function $\Call{sample}{\E,\ds}$ would sample a random value $x$ for $q_2$ from the distribution chosen for it (according to the considered scenario), and then solve the system of equations identified by $f_{3T}(x,\ds)$.
\hfill\scalebox{0.85}{$\LHD$}
\end{exa}

\begin{figure}[tbp]
\begin{minipage}[t]{0.5\textwidth}
\small
\begin{algorithmic}[1]
\Function{Simulate}{$\config{\proc}{\ds}{\E},k$}
\State $i\gets 1$
\State $c\gets \config{\proc}{\ds}{\E}$
\State $a \gets c$
\While{$i\leq k$}
\State $c \gets \Call{SimStep}{c}$
\State $a \gets a,c$
\State $i \gets i+1$
\EndWhile
\State \Return $a$
\EndFunction
\end{algorithmic}
\end{minipage}\hfill
\begin{minipage}[t]{0.5\textwidth}
\small
\begin{algorithmic}[1]
\Function{SimStep}{$\config{\proc}{\ds}{\E}$}
\State $\sum_{i=1}^n p_i (\theta_i,\proc_i)
\gets\pstep(\proc,\ds)$
\State $u\gets \Call{rand}{ }$
\State let $i$ s.t. $\sum_{j=1}^{i-1} p_j< u\leq \sum_{j=1}^{i} p_j$
\State $\ds_i'\gets \Call{sample}{\E,\theta_i(\ds_i)}$
\State \Return $\config{\proc_i}{\ds_i'}{\E}$
\EndFunction
\end{algorithmic}
\end{minipage}\hfill
\caption{Functions used to simulate behaviour of a configuration.}%
\label{alg:simstep}
\end{figure}

We can then extend this property to configurations and function $\Call{SimStep}{}$:

\begin{lem}%
\label{lem:SimStep}
For any configuration $\config{\proc}{\ds}{\E}\in \C$, and for any measurable set $\mesC \in \Sigma_{\C}$:
\[
\PP(\Call{SimStep}{\config{\proc}{\ds}{\E}}\in \mesC)=\cstep(\config{\proc}{\ds}{\E})(\mesC).
\]
\end{lem}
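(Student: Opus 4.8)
The plan is to analyse the two independent sources of randomness in $\Call{SimStep}{}$ separately: the discrete selection of an index $i$ driven by the uniform draw $u = \Call{rand}{}$, and the subsequent continuous draw $\Call{sample}{\E,\theta_i(\ds)}$ of the new \datastate. I would compute $\PP(\Call{SimStep}{\config{\proc}{\ds}{\E}}\in\mesC)$ by conditioning on the chosen index and then recognise the resulting sum as the right-hand side of Definition~\ref{def:cstep}.

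First, by Proposition~\ref{prop:processes_are_generative} the distribution $\pstep(\proc,\ds)$ has finite support, so we may write it as $\sum_{i=1}^{n} p_i\cdot\dirac_{(\theta_i,\proc_i)}$ with $p_i = \pstep(\proc,\ds)(\theta_i,\proc_i) > 0$ and $\sum_{i=1}^n p_i = 1$, exactly as line~2 of $\Call{SimStep}{}$ does. Since $u$ is uniformly distributed on $(0,1]$ and the half-open intervals $\big(\sum_{j<i}p_j,\ \sum_{j\le i}p_j\big]$, for $i = 1,\dots,n$, form a partition of $(0,1]$, the index $i$ selected in line~4 satisfies $\PP(\,i\text{ is selected}\,) = p_i$.

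Next, fix a measurable set $\mesC\in\Sigma_\C$ and, for each $i$, let $\mesD_i = \{\ds'\in\D \mid \config{\proc_i}{\ds'}{\E}\in\mesC\}$ be the $\proc_i$-section of $\mesC$. The family of sets $\mesC\in\Sigma_\C$ all of whose sections of this form are Borel is a $\sigma$-algebra on $\C$ that contains every generator $\config{\mesP}{\mesD}{\E}$ (whose $\proc_i$-section is $\mesD$ if $\proc_i\in\mesP$ and $\emptyset$ otherwise), hence coincides with $\Sigma_\C$; thus $\mesD_i\in\borel_\D$. Conditioned on $i$ being selected, $\Call{SimStep}{}$ returns $\config{\proc_i}{\ds'}{\E}$ with $\ds' = \Call{sample}{\E,\theta_i(\ds)}$, so using property~\eqref{eq:fsample_property} and Notation~\ref{notazione_distrib},
\[
\PP\big(\Call{SimStep}{\config{\proc}{\ds}{\E}}\in\mesC \,\big|\, i\text{ selected}\big)
= \PP\big(\ds'\in\mesD_i\big)
= \E(\theta_i(\ds))(\mesD_i)
= \config{\proc_i}{\E(\theta_i(\ds))}{\E}(\mesC).
\]

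Finally, since the draw $u$ and the draw $\Call{sample}{\E,\theta_i(\ds)}$ are independent, the law of total probability gives
\[
\PP\big(\Call{SimStep}{\config{\proc}{\ds}{\E}}\in\mesC\big)
= \sum_{i=1}^{n} p_i\cdot\config{\proc_i}{\E(\theta_i(\ds))}{\E}(\mesC),
\]
and regrouping this sum over $\support(\pstep(\proc,\ds))$ yields precisely $\cstep(\config{\proc}{\ds}{\E})(\mesC)$ by Definition~\ref{def:cstep}. The only genuinely delicate point is the measurability of the sections $\mesD_i$ together with the rigorous justification of the conditioning for an arbitrary $\mesC\in\Sigma_\C$ rather than just a product set; this is the standard good-sets argument sketched above, and everything else is routine. (We also tacitly read line~5 of $\Call{SimStep}{}$ as $\ds'\gets\Call{sample}{\E,\theta_i(\ds)}$, matching Definition~\ref{def:cstep}.)
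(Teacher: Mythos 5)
Your proof is correct and reaches the same core computation as the paper, but by a somewhat different measure-theoretic route. The paper fixes a generator $\mesC=\config{\mesP}{\mesD}{\E}$ and verifies the identity only there---expanding $\cstep$ by Definition~\ref{def:cstep}, writing $\pstep(\proc,\ds)$ as a finite convex combination, using the indicator $\mathds{1}_\mesP\{\proc_i\}$ to evaluate the product measure on the rectangle, and applying Equation~\eqref{eq:fsample_property}---leaving implicit the standard extension step (both sides are probability measures in $\mesC$ that agree on the $\pi$-system of rectangles generating $\Sigma_{\C}$, hence agree everywhere by the uniqueness theorem). You instead handle an arbitrary $\mesC\in\Sigma_{\C}$ directly: you establish measurability of the $\proc_i$-sections $\mesD_i$ via a good-sets argument, condition on the index selected by the uniform draw, apply the sample property to $\mesD_i$, use the identity $\config{\proc_i}{\E(\theta_i(\ds))}{\E}(\mesC)=\E(\theta_i(\ds))(\mesD_i)$, and regroup into Definition~\ref{def:cstep} by the law of total probability. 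What your route buys is a self-contained argument valid for every measurable set, with the conditioning structure of the algorithm made explicit; what the paper's buys is brevity, since on rectangles the product measure is given directly by Notation~\ref{notazione_distrib} and no section argument is needed. One small caveat: the Dirac-product identity you use for arbitrary $\mesC$ (not just rectangles) itself rests on the same uniqueness-of-product-measure fact, so the extension step is relocated rather than eliminated; it is a standard fact, though, and your proof stands.
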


%---------------------------------------------------
\begin{proof}
To prove the thesis it is enough to show that the property
\[
\PP(\Call{SimStep}{\config{\proc}{\ds}{\E}}\in \mesC)=\cstep(\config{\proc}{\ds}{\E})(\mesC)
\]
holds on the generators of the $\sigma$-algebra $\Sigma_\C$.
Hence, assume that $\mesC = \config{\mesP}{\mesD}{\E}$ for some $\mesP \in \Sigma_{\Proc}$ and $\mesD \in \borel_\D$.
Then we have
\begin{align*}
\cstep(\config{\proc}{\ds}{\E})(\mesC)
= \quad
&
\sum_{(\theta,\proc')\in \support(\pstep(\proc,\ds))} \left( \pstep(\proc,\ds)(\theta,\proc')\cdot \config{\proc'}{\E(\theta(\ds))}{\E} \right) (\mesC)
\\
= \quad
&
 \sum_{i=1}^{n} \left( p_i \cdot \config{\proc_i}{\E(\theta_i(\ds))}{\E} \right) (\mesC)
\\
= \quad
&
 \sum_{i=1}^{n} p_i \cdot \mathds{1}_\mesP\{\proc_i\}
 \cdot \E(\theta_i(\ds))(\mesD)
\\
= \quad
&
\sum_{\{i | \proc_i \in \mesP\}} p_i \cdot \E(\theta_i(\ds)) (\mesD)
\\
= \quad
&
\sum_{\{i | \proc_i \in \mesP\}} p_i \cdot \PP(\Call{Sample}{\E,\theta_i(\ds)}\in \mesD)
\\
= \quad
&
\PP(\Call{SimStep}{\config{\proc}{\ds}{\E}}\in \mesC)
\end{align*}
where:
\begin{itemize}
\item The first step follows by the definition of $\cstep$ (Definition~\ref{def:cstep}).
\item The second step follows by $\pstep(\proc,\ds) = p_1 \cdot (\theta_1,\proc_1) + \cdots + p_n \cdot (\theta_n,\proc_n)$, for some $p_i$, $\theta_i$ and $\proc_i$ (see Proposition~\ref{prop:processes_are_generative}).
\item The third  and fourth steps follow by the definition of the distribution $\config{\mu_{\Proc}}{\mu_\D}{\E}$ (Notation~\ref{notazione_distrib}), in which $\mathds{1}_\mesP$ denotes the characteristic function of the set $\mesP$, i.e., $\mathds{1}_\mesP\{\proc'\} = 1$ if $\proc' \in \mesP$ and $\mathds{1}_\mesP\{\proc'\} = 0$ otherwise.
\item The fifth step follows by the assumption on the function $\Call{Sample}{\_}$ in Equation~\eqref{eq:fsample_property}.
\item The last step follows by the definition of function $\Call{SimpStep}{\_}$.
\qedhere
\end{itemize}
\end{proof}
%-------------------------------------------------

\begin{figure}[tbp]
\small
\begin{algorithmic}[1]
\Function{Estimate}{$\config{\proc}{\ds}{\E},k,N$}
\State $\forall i:(0\leq i\leq k): E_i \gets \emptyset$
\State $counter\gets 0$
\While{$counter< N$}
\State $(c_0,\ldots,c_k) \gets \Call{Simulate}{\config{\proc}{\ds}{\E},k}$
\State $\forall i: E_i \gets E_i,c_i$
\State $counter \gets counter + 1$
\EndWhile
\State \Return $E_0,\ldots,E_k$
\EndFunction
\end{algorithmic}
\caption{\label{fig:estimate} Function used to obtain $N$ samples of the \traccione{} of a configuration.}
\end{figure}

\noindent
To compute the empirical \traccione{} of a configuration $c$ the function $\Call{Estimate}{}$ in Figure~\ref{fig:estimate} can be used.

The function $\Call{Estimate}{c,k,N}$ invokes $N$ times the function $\Call{Simulate}{}$ in Figure~\ref{alg:simstep} to sample a sequence of configurations $c_0,\dots,c_k$, modelling $k$ steps of a computation from $c=c_0$.
Then, a sequence of observations $E_0,\ldots,E_k$ is computed, where each $E_i$ is the tuple $c_i^{1},\ldots,c_i^{N}$ of configurations observed at time $i$ in each of the $N$ sampled computations.

Notice that, for each $i \in \{0,\dots,k\}$, the samples $c_i^1,\dots,c_i^N$ are independent and moreover, by Lemma~\ref{lem:SimStep}, they are identically distributed.
Each $E_i$ can be used to estimate the distribution $\ES^\C_{c,i}$.
For any $i$, with $0\leq i\leq k$, we let $\hat{\ES}_{c,i}^{\C,N}$ be the distribution such that for any measurable set of configurations $\mesC \in \Sigma_{\C}$ we have
\[
\hat{\ES}_{c,i}^{\C,N}(\mesC)=\frac{|E_i \cap \mesC|}{N}.
\]
Finally, we let $\hat{\ES}_{c}^{\D,N}=\hat{\ES}_{c,0}^{\D,N}\ldots \hat{\ES}_{c,k}^{\D,N}$ be the \emph{empirical evolution sequence} such that for any measurable set of \datastates{} $\mesD\in \borel_\D$ we have
\[
\hat{\ES}_{c,i}^{\D,N}(\mesD)=\hat{\ES}_{c,i}^{\C,N}(\config{\Proc}{\mesD}{\E}).
\]
Then, by applying the weak law of large numbers to the i.i.d samples, we get that when $N$ goes to infinite both $\hat{\ES}_{c,i}^{\C,N}$ and $\hat{\ES}_{c,i}^{\D,N}$ converge weakly to $\ES^\C_{c,i}$ and $\ES_{c,i}^{\D}$ respectively:
\begin{equation}%
\label{eq:weak_convergence}
\lim_{N\rightarrow \infty}\hat{\ES}_{c,i}^{\C,N} = \ES_{c,i}^\C
\qquad\qquad
\lim_{N\rightarrow \infty}\hat{\ES}_{c,i}^{\D,N} = \ES_{c,i}^{\D}.
\end{equation}

The algorithms in Figures~\ref{alg:simstep} and~\ref{fig:estimate} have been implemented in Python to support analysis of systems.
The tool is available at \url{https://github.com/quasylab/spear}.

\begin{figure}[t]
\includegraphics[scale=0.45]{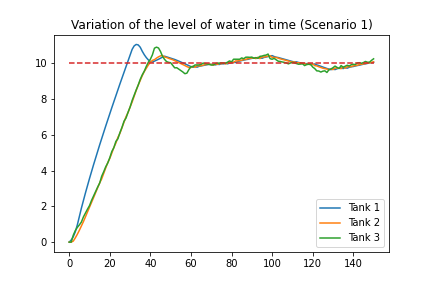}
\includegraphics[scale=0.45]{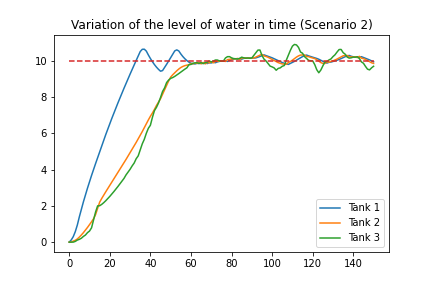}
\caption{\label{fig:simulation_tanks} Simulation results.
The \textcolor{red}{dashed red} line corresponds to $l_{goal} = 10$.}
\end{figure}

\begin{figure}[t]
\includegraphics[scale=0.32]{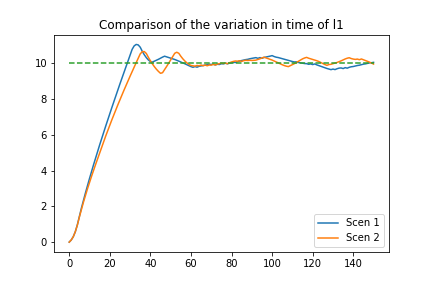}
\includegraphics[scale=0.32]{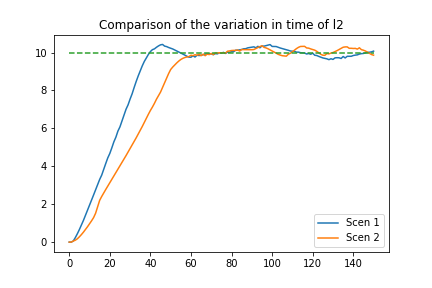}
\includegraphics[scale=0.32]{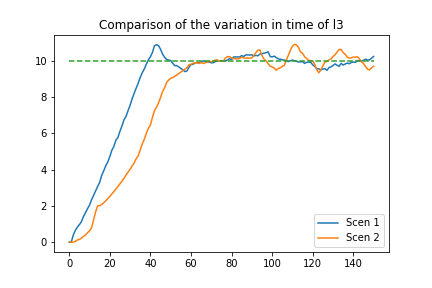}
\caption{\label{fig:simulations_tank_by_tank} Simulation of the variation of the level of water in each tank in the two scenarios. The \textcolor{ForestGreen}{dashed green} line corresponds to $l_{goal} = 10$.}
\end{figure}

\begin{figure}[t]
\begin{subfigure}{1\textwidth}
\includegraphics[scale=0.3]{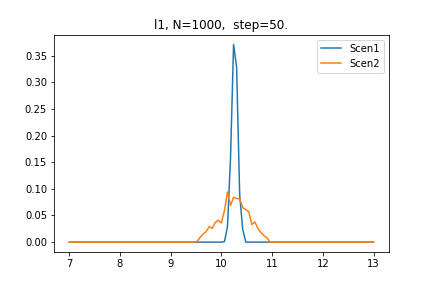}
\includegraphics[scale=0.3]{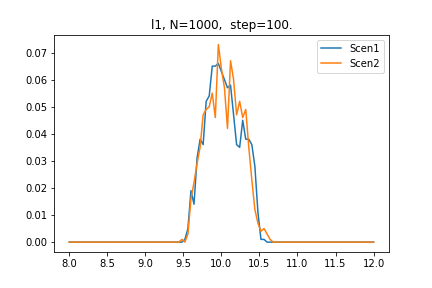}
\includegraphics[scale=0.3]{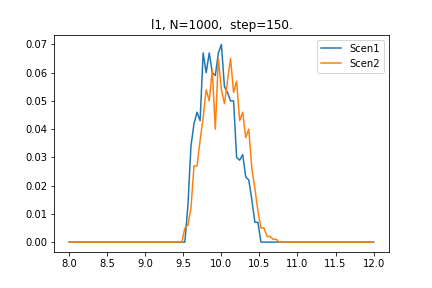}
\end{subfigure}
\hfill
\begin{subfigure}{1\textwidth}
\includegraphics[scale=0.3]{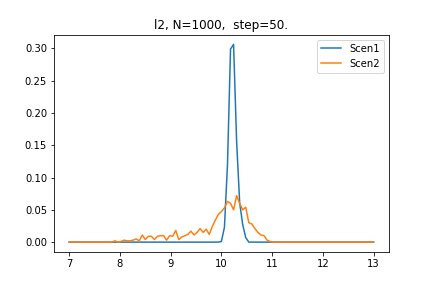}
\includegraphics[scale=0.3]{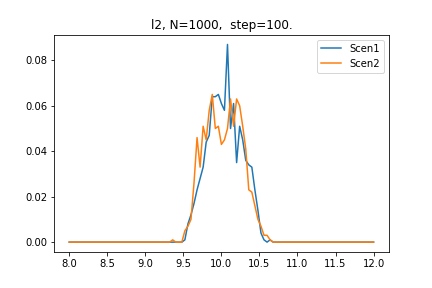}
\includegraphics[scale=0.3]{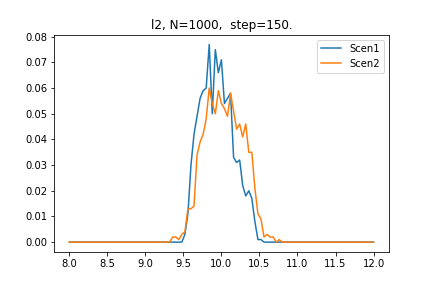}
\end{subfigure}
\hfill
\begin{subfigure}{1\textwidth}
\includegraphics[scale=0.3]{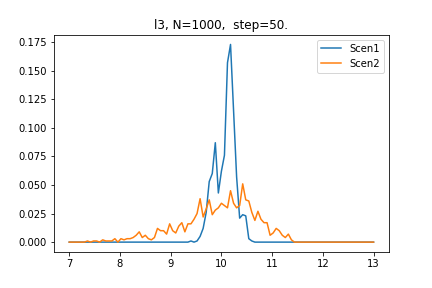}
\includegraphics[scale=0.3]{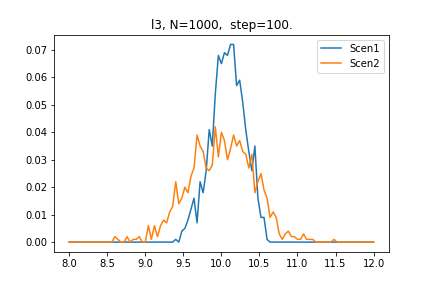}
\includegraphics[scale=0.3]{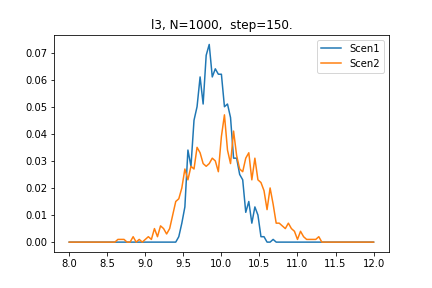}
\end{subfigure}
\caption{\label{fig:probestimation} Estimated distributions of the water levels in the tanks in the two scenarios.}
\end{figure}

\begin{exa}%
\label{ex:simulation}
A simulation of the three-tanks laboratory experiment is given in Figure~\ref{fig:simulation_tanks}, with the following setting:
\begin{enumerate*}[(i)]
\item $l_{min} = 0$,
\item $l_{max} = 20$,
\item $l_{goal} = 10$,
\item $\Delta_l = 0.5$,
\item $q_{max} = 6$,
\item $q_{step} = q_{max} / 5$,
\item $\Delta_q = 0.5$,
\item $\Delta\tau = 0.1$.
\end{enumerate*}
Moreover, we assume an initial \datastate{} $\ds_{0}$ with $l_i = l_{min}$ and $q_i = 0$, for $i = 1,2,3$.
The plot on the left hand side of Figure~\ref{fig:simulation_tanks} depicts a simulation run of the system $\system_1$, namely the system in which the environment evolution follows scenario 1 and having $c_1 = \config{\mathsf{P_{Tanks}}}{\ds_0}{\E_1}$ as initial configuration.
On the right hand side we have the corresponding plot for system $\system_2$, whose environment evolution follows scenario 2 and the initial configuration is thus $c_2 = \config{\mathsf{P_{Tanks}}}{\ds_0}{\E_2}$.

In Figure~\ref{fig:simulations_tank_by_tank} we consider the same simulations, but we compare the variations of the level of water, in time, in the same tank in the two scenarios.
This can help us to highlight the potential differences in the evolution of the system with respect to the two environments.

In Figure~\ref{fig:probestimation} we give an estimation of the distributions of $l_1,l_2,l_3$ obtained from our simulations of $\system_1$ and $\system_2$.
In detail, the three plots in the top row report the comparison of the distributions over $l_1$ obtained in the two scenarios at time step, respectively, $50, 100$ and $150$ when using $N=1000$ samples.
The three plots in the central row are the corresponding ones for $l_2$, and in the bottom row we have the plots related to $l_3$.
In both scenarios, we obtain Gaussian-like distributions, as expected given the probabilistic behaviour of the environment defined in Examples~\ref{ex:tanks_I} and~\ref{ex:tanks_IV}.
However, we observe that while in both cases the mean of the distributions is close to $l_{goal} = 10$, there is a significant difference in their variance: the variance of the distributions obtained in scenario 1 is generally smaller than that of those in scenario 2 (the curves for scenario 1 are ``slimmer'' than those for scenario 2).
We will see how this disparity is captured by the \spell metric.
\hfill\scalebox{0.85}{$\LHD$}
\end{exa}

As usual in the related literature, we can apply the classic \emph{standard error approach} to analyse the approximation error of our statistical estimation of the evolution sequences.
Briefly, we let $x \in \{l_1,l_2,l_3\}$ and we focus on
the distribution of the means of our samples (in particular we consider the cases $N = 1000, 5000, 10000$) for each variable $x$.
In each case, for each $i \in \{0,\dots,k\}$ we compute the mean $\overline{E_i}(x) = \frac{1}{N}\sum_{j = 1}^N \ds_i^j(x)$ of the sampled data (recall that $E_i$ is the tuple of the sampled configurations $c_i^1,\dots,c_i^N$, with $c_i^j=\config{\proc_i^j}{\ds_i^j}{\E}$), and we evaluate their \emph{standard deviation} $\tilde{\sigma}_{i,N}(x) = \sqrt{\frac{\sum_{j = 1}^{N} (\ds_i^j(x) - \overline{E_i}(x))^2}{N - 1}}$ (see Figure~\ref{fig:standard_deviation} for the variation in time of the standard deviation of the distribution of $x = l_3$).
From $\tilde{\sigma}_{i,N}(x)$ we obtain the \emph{standard error of the mean} $\overline{\sigma}_{i,N}(x) = \frac{\tilde{\sigma}_{i,N}(x)}{\sqrt{N}}$ (see Figure~\ref{fig:standard_error} for the variation in time of the standard error for the distribution of $x = l_3$).
Finally, we proceed to compute the $z$-\emph{score} of our sampled distribution as follows: $z_{i,N}(x) = \frac{\overline{E_i}(x) - \mathbb{E}(x)}{\overline{\sigma}_{i,N}(x)}$, where $\mathbb{E}(x)$ is the mean (or expected value) of the real distribution over $x$.
In Figure~\ref{fig:z_score} we report the variation in time of the $z$-score of the distribution over $x = l_3$: the dashed red lines correspond to $z = \pm 1.96$, namely the value of the $z$-score corresponding to a confidence interval of the $95\%$.
We can see that our results can already be given with a $95\%$ confidence in the case of $N = 1000$.
Please notice that the oscillation in time of the values of the $z$-scores is due to the perturbations introduced by the environment in the simulations and by the natural oscillation in the interval $[l_{goal}-\Delta_l,l_{goal}+\Delta_l]$ of the water levels in the considered experiment (see Figure~\ref{fig:simulation_tanks}).
A similar analysis, with analogous results, can be carried out for the distributions of $l_1$ and $l_2$.
In Figure~\ref{fig:total_z_score} we report the variation in time of the $z$-scores of the distributions of the three variables, in the case $N = 1000$.

\begin{figure}
\begin{subfigure}{0.47\textwidth}
\includegraphics[scale=0.35]{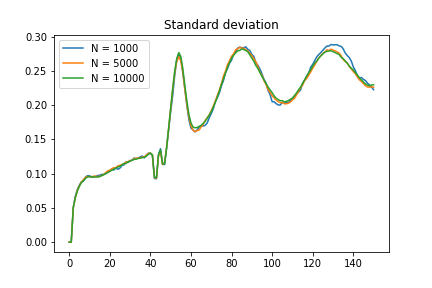}
\caption{Standard deviation for $l_3$.}%
\label{fig:standard_deviation}
\end{subfigure}
\hfill
\begin{subfigure}{0.47\textwidth}
\includegraphics[scale=0.37]{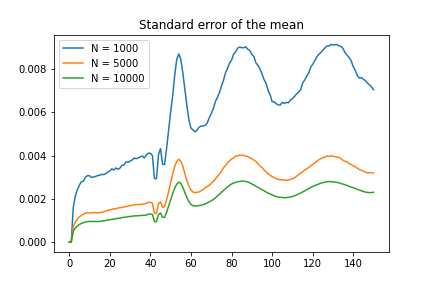}
\caption{Standard error for $l_3$.}%
\label{fig:standard_error}
\end{subfigure}
\begin{subfigure}{0.47\textwidth}
\includegraphics[scale=0.37]{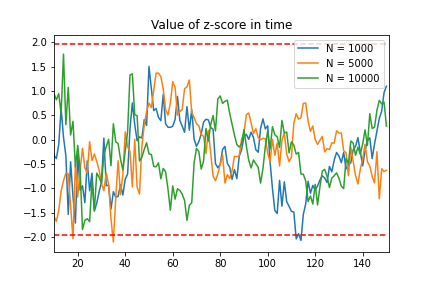}
\caption{$z$-score for $l_3$.}%
\label{fig:z_score}
\end{subfigure}
\hfill
\begin{subfigure}{0.47\textwidth}
\includegraphics[scale=0.35]{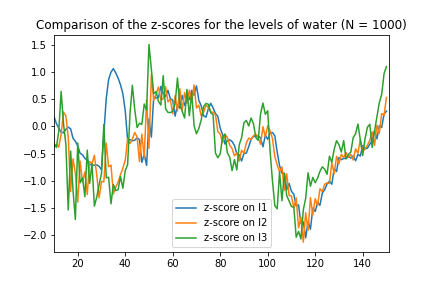}
\caption{$z$-scores for $l_1,l_2,l_3$, $N = 1000$.}%
\label{fig:total_z_score}
\end{subfigure}
\caption{Analysis of the approximation error, over time, of the distributions of $l_1,l_2,l_3$, in scenario 2, for $N = 1000,5000,10000$.}%
\label{fig:statistical_error}
\end{figure}

%=====================================================

\subsection{Computing distance between two configurations}%
\label{sec:computing_distance}

Function $\Call{Estimate}{}$ allows us to collect independent samples at each time step $i$ from $0$ to a deadline $k$. These samples can be used to estimate the distance between two configurations $c_1$ and $c_2$.
Following an approach similar to the one presented in~\cite{TK09}, to estimate the Wasserstein distance $\Wasserstein(m^{\D}_{\rho,i})$ between two (unknown) distributions $\ES^{\D}_{c_1,i}$ and $\ES^{\D}_{c_2,i}$ we can use $N$ independent samples $\{c^1_1,\ldots,c^N_1\}$ taken from $\ES^\C_{c_1,i}$ and $\ell\cdot N$ independent samples $\{c^1_2,\ldots,c_2^{\ell \cdot N}\}$ taken from $\ES^\C_{c_2,i}$.
After that, we exploit the $i$-penalty function $\rho_i$ and we consider the two sequences of values
$\omega_1,\dots,\omega_{N}$
and
$\nu_1,\dots,\nu_{\ell \cdot N}$
defined as follows:
\[
\{ \omega_j=\rho_i(\ds_1^j) \mid \config{\proc_1^j}{\ds_1^j}{\E_1}=c_1^j\}
\qquad
\{ \nu_h=\rho_i(\ds_2^h) \mid \config{\proc^h_2}{\ds^h_2}{\E_2}=c_2^h\}.
\]
We can assume, without loss of generality that these sequences are ordered, i.e., $\omega_j\leq \omega_{j+1}$ and $\nu_{h}\leq \nu_{h+1}$.
The value $\Wasserstein(m^{\D}_{\rho,i})(\ES^\D_{c_1,i},\ES^\D_{c_2,i})$ can be approximated as $\frac{1}{\ell N}\sum_{h=1}^{\ell N}\max\{\nu_{h} - \omega_{\lceil \frac{h}{\ell}\rceil},0\}$.
The next theorem, based on results in~\cite{TK09,Vil08,V74}, ensures that the larger the number of samplings the closer the gap between the estimated value and the exact one.

\begin{thm}%
\label{thm:estimate}
Let $\ES^\C_{c_1,i},\ES^\C_{c_2,i} \in \distrib(\C,\Sigma_\C)$ be unknown.
Let $\{c^1_1,\ldots,c^N_1\}$ be independent samples taken from $\ES^\C_{c_1,i}$, and $\{c^1_2,\ldots,c_2^{\ell \cdot N}\}$ independent samples taken from $\ES^\C_{c_2,i}$.
Let $\{ \omega_j = \rho_i(\ds^j_1)\}$ and $\{ \nu_h = \rho_i(\ds^h_2)\}$ be the ordered sequences obtained from the samples and the $i$-penalty function.
Then, the Wasserstein distance between $\ES^{\D}_{c_1,i}$ and $\ES^{\D}_{c_2,i}$ is equal, a.s., to
\[
\Wasserstein(m^{\D}_{\rho,i})(\ES^{\D}_{c_1,i},\ES^{\D}_{c_2,i})
=
\lim_{N\rightarrow \infty} \frac{1}{\ell N}\sum_{h=1}^{\ell N}\max\left\{\nu_{h} - \omega_{\lceil \frac{h}{\ell}\rceil},0\right\}.
\]
\end{thm}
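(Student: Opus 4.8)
The plan is to reduce the claim to a one-dimensional optimal transport problem, apply the classical closed form for the Wasserstein distance between distributions on the line, and conclude by the Glivenko--Cantelli theorem. As in the statement, I may assume the two sampled sequences sorted non-decreasingly, since this does not affect the empirical distributions $\hat\mu_N := \frac{1}{N}\sum_{j=1}^{N}\delta_{\omega_j}$ and $\hat\nu_{\ell N} := \frac{1}{\ell N}\sum_{h=1}^{\ell N}\delta_{\nu_h}$ on $[0,1]$, which are the only objects that matter. The first key observation is that, by Definition~\ref{def:metric_DS}, the metric on \datastates{} factors through the penalty function, $m^\D_{\rho,i}(\ds_1,\ds_2) = d(\rho_i(\ds_1),\rho_i(\ds_2))$, where $d(x,y) := \max\{y-x,0\}$ is the $1$-bounded hemimetric on $[0,1]$ of \cite{FR18} (cf.\ the discussion preceding Proposition~\ref{prop:W_metrica}); hence I would prove
\[
\Wasserstein(m^\D_{\rho,i})(\ES^\D_{c_1,i},\ES^\D_{c_2,i}) = \Wasserstein(d)(\mu_1,\mu_2),
\qquad \mu_j := (\rho_i)_{*}\ES^\D_{c_j,i}.
\]
Here ``$\ge$'' is immediate (push a coupling of $\ES^\D_{c_1,i},\ES^\D_{c_2,i}$ forward along $\rho_i\times\rho_i$), and for ``$\le$'' I would, given a coupling $\gamma$ of $\mu_1,\mu_2$, disintegrate each $\ES^\D_{c_j,i}$ over the fibres of $\rho_i$ (possible since $\D$ is Polish, hence standard Borel) and glue the conditional distributions along $\gamma$, obtaining a coupling of $\ES^\D_{c_1,i},\ES^\D_{c_2,i}$ whose $m^\D_{\rho,i}$-cost equals the $d$-cost of $\gamma$; taking the infimum over $\gamma$ gives ``$\le$''.

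Next, since $d(x,y) = h(y-x)$ with $h(z) := \max\{z,0\}$ convex, the classical one-dimensional transport results of \cite{V74,Vil08} yield that the monotone coupling $t\mapsto(F_\mu^{-1}(t),F_\nu^{-1}(t))$ is optimal, so $\Wasserstein(d)(\mu,\nu) = \int_0^1 \max\{F_\nu^{-1}(t)-F_\mu^{-1}(t),0\}\,\dd t$ for all $\mu,\nu$ on $[0,1]$, $F^{-1}$ being the quantile function. Applied to $\hat\mu_N$ and $\hat\nu_{\ell N}$, whose quantile functions are the step functions $t\mapsto\omega_{\lceil Nt\rceil}$ and $t\mapsto\nu_{\lceil \ell Nt\rceil}$, and splitting $(0,1]$ into the $\ell N$ subintervals $\big(\tfrac{h-1}{\ell N},\tfrac{h}{\ell N}\big]$ — on each of which $\lceil Nt\rceil = \lceil h/\ell\rceil$ — a direct computation gives
\[
\Wasserstein(d)(\hat\mu_N,\hat\nu_{\ell N}) = \frac{1}{\ell N}\sum_{h=1}^{\ell N}\max\big\{\nu_h-\omega_{\lceil h/\ell\rceil},0\big\},
\]
i.e.\ the right-hand side of the claimed identity is exactly the Wasserstein distance between the two empirical distributions.

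Finally I would pass to the limit. By Lemma~\ref{lem:SimStep} (and the discussion following Figure~\ref{fig:estimate}) the sampled configurations are i.i.d.\ with laws $\ES^\C_{c_1,i}$ and $\ES^\C_{c_2,i}$, hence the $\ds_1^j,\ds_2^h$ are i.i.d.\ with laws $\ES^\D_{c_1,i},\ES^\D_{c_2,i}$, and so $\omega_j=\rho_i(\ds_1^j)$ and $\nu_h=\rho_i(\ds_2^h)$ are i.i.d.\ with laws $\mu_1,\mu_2$, the two samples being independent. By Glivenko--Cantelli the empirical c.d.f.s converge uniformly to $F_{\mu_1},F_{\mu_2}$ almost surely, hence the empirical quantile functions converge to $F_{\mu_1}^{-1},F_{\mu_2}^{-1}$ at every continuity point, i.e.\ for Lebesgue-a.e.\ $t\in(0,1)$; since $\rho_i$ maps into $[0,1]$ all integrands are bounded by $1$, so dominated convergence gives $\lim_N \Wasserstein(d)(\hat\mu_N,\hat\nu_{\ell N}) = \Wasserstein(d)(\mu_1,\mu_2)$ almost surely, which together with the two previous steps is the assertion.

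The hard part will be the push-forward reduction of the first step: it requires setting up the disintegration/gluing argument and checking that it uses neither symmetry nor the identity of indiscernibles, so that it really applies to the \emph{hemi}metric $d$ — which it does, $d$ being continuous and bounded and the gluing relying only on the marginal conditions. Everything else is either a cited one-dimensional transport fact or a routine Glivenko--Cantelli plus dominated-convergence argument.
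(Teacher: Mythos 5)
Your proof is correct, and it reaches the result by a route that differs from the paper's in two places. First, you make explicit, via disintegration and gluing on the Polish space $\D$, the identity $\Wasserstein(m^\D_{\rho,i})(\ES^\D_{c_1,i},\ES^\D_{c_2,i})=\Wasserstein(d)\big((\rho_i)_*\ES^\D_{c_1,i},(\rho_i)_*\ES^\D_{c_2,i}\big)$ with $d(x,y)=\max\{y-x,0\}$; the paper never states this reduction as such, but uses it implicitly through the ``cdf'' $F_{\mu,\rho_i}(r)=\mu(\{\rho_i(\ds)<r\})$ and the citation of \cite[Proposition 6.2]{FR18}, applied only to the empirical measures, where it is essentially immediate. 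Second, for the passage to the limit the paper stays on $\D$: it invokes the weak convergence of the empirical evolution sequences, verifies the moment condition of \cite[Definition 6.8]{Vil08} by exhibiting a \datastate{} $\tilde{\ds}$ with $\rho_i(\tilde{\ds})=0$ (so that $m^\D_{\rho,i}(\tilde{\ds},\cdot)=\rho_i$), and concludes by \cite[Theorem 6.9]{Vil08}; you instead work entirely on $[0,1]$ after the pushforward, using Glivenko--Cantelli, a.e.\ convergence of the empirical quantile functions, and dominated convergence (legitimate since $\rho_i$ is $[0,1]$-valued). The middle step --- the closed form for the empirical distributions via step quantile functions and the partition of $(0,1]$ into $\ell N$ intervals --- is the same computation in both proofs. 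What your route buys is self-containedness and robustness: it does not need the existence of a zero-penalty \datastate{}, nor the transfer of Villani's metric-space statements to the hemimetric setting on $\D$, since after the reduction everything happens on the line with a bounded cost; what the paper's route buys is brevity, as it leans on cited results (\cite{Vil08,FR18,TK09}) and on the weak-convergence statement it has already established for the empirical evolution sequences. The only point you should spell out when writing this up is the standard fact that uniform (indeed, weak) convergence of cdfs yields convergence of quantile functions at every continuity point of the limit, hence Lebesgue-a.e., which is what licenses the dominated convergence step.
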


%--------------------------------------------------
\begin{proof}
Let $c_1^j =\config{\proc^j_1}{\ds^j_1}{\E}$, for all $j=1,\dots,N$,
and $c_2^j =\config{\proc^j_2}{\ds^j_2}{\E}$, for all $j=1,\dots,\ell \cdot N$.
We split the proof into two parts showing respectively:
\begin{equation}%
\label{proof:one}
\Wasserstein(m^\D_{\rho,i})(\ES^\D_{c_1,i},\ES^\D_{c_2,i})
=
\lim_{N \to \infty} \Wasserstein(m^\D_{\rho,i})(\hat{\ES}^{\D,N}_{c_1,i},\hat{\ES}^{\D,\ell N}_{c_2,i})
\enspace .
\end{equation}
\begin{equation}%
\label{proof:two}
\Wasserstein(m^\D_{\rho,i})(\hat{\ES}^{\D,N}_{c_1,i},\hat{\ES}^{\D,\ell N}_{c_2,i})
=
\frac{1}{\ell N} \sum_{h = 1}^{\ell N} \max\left\{ \nu_h - \omega_{\lceil \frac{h}{\ell} \rceil},0 \right\}
\enspace .
\end{equation}
\begin{itemize}
\item {\sc Proof of Equation~\eqref{proof:one}.}

We recall that the sequence $\{\hat{\ES}^{\D,N}_{c_l,i}\}$ converges weakly to $\ES^\D_{c_l,i}$ for $l \in \{1,2\}$ (see Equation~\eqref{eq:weak_convergence}).
Moreover, we can prove that these sequences converge weakly in $\distrib(\D,\borel_\D)$ in the sense of~\cite[Definition 6.8]{Vil08}.
In fact, given the $i$-ranking function $\rho_i$, the existence of a \datastate{} $\tilde{\ds}$ such that $\rho_i(\tilde{\ds}) = 0$ is guaranteed (remember that the constraints used to define $\rho_i$ are on the possible values of state variables and a \datastate{} fulfilling all the requirements is assigned value $0$).
Thus, for any $\ds \in \D$ we have that
\[
m^\D_{\rho,i}(\tilde{\ds},\ds) = \max\{\rho_i(\ds) - \rho_i(\tilde{\ds}),0\} = \rho_i(\ds)
\enspace .
\]
Since, moreover, by definition $\rho_i$ is continuous and bounded, the weak convergence of the distributions gives
\begin{align*}
& \int_\D \rho_i(\ds) \dd(\hat{\ES}^{\D,N}_{c_1,i}(\ds)) \to \int_\D \rho_i(\ds) \dd(\ES^\D_{c_1,i}(\ds)) \\
& \int_\D \rho_i(\ds) \dd(\hat{\ES}^{\D, \ell N}_{c_2,i}(\ds)) \to \int_\D \rho_i(\ds) \dd(\ES^\D_{c_2,i}(\ds))
\end{align*}
and thus Definition 6.8.(i) of~\cite{Vil08} is satisfied. % chktex 36
As $\D$ is a Polish space, by~\cite[Theorem 6.9]{Vil08} we obtain that
\[
\hat{\ES}^{\D,N}_{c_l,i} \to \ES^\D_{c_l,i} \text{ implies }
\Wasserstein(m^\D_{\rho,i})(\ES^\D_{c_1,i},\ES^\D_{c_2,i})
=
\lim_{N \to \infty} \Wasserstein(m^\D_{\rho,i})(\hat{\ES}^{\D,N}_{c_1,i},\hat{\ES}^{\D,\ell N}_{c_2,i})
\enspace .
\]

\item {\sc Proof of Equation~\eqref{proof:two}.}

For this part of the proof we follow~\cite{TK09}.
Since the ranking function is continuous, it is in particular $\borel_\D$ measurable and therefore for any distribution $\mu$ on $(\D,\borel_\D)$ we obtain that
\[
F_{\mu,\rho_i}(r) := \mu(\{\rho_i(\ds) < r\})
\]
is a well defined cumulative distribution function.
In particular, for $\mu = \hat{\ES}^{\D,N}_{c_1,i}$ we have that
\[
F_{\hat{\ES}^{\D,N}_{c_1,i}, \rho_i}(r) = \hat{\ES}^{\D,N}_{c_1,i}(\{\rho_i(\ds) < r\}) =
\frac{\left| \{ \config{\proc_1^j}{\ds_1^j}{\E_1} \in E_{1,i} \mid \rho_i(\ds_1^j) < r \}\right|}{N}
\enspace .
\]
Since, moreover, we can always assume that the values $\rho_i(\ds_1^j)$ are sorted, so that $\rho_i(\ds_i^j) \le \rho_i(\ds_1^{j+1})$ for each $j = 1,\dots,N-1$, we can express the counter image of the cumulative distribution function as
\begin{equation}%
\label{eq:cdf_inverse}
F^{-1}_{\hat{\ES}^{\D,N}_{c_1,i}, \rho_i}(r) = \rho_i(\ds_1^j)
\text{ whenever } \frac{j-1}{N} < r \le \frac{j}{N}
\enspace .
\end{equation}
A similar reasoning holds for $F^{-1}_{\hat{\ES}^{\D,\ell N}_{c_2,i}, \rho_i}(r)$.

Then, by~\cite[Proposition 6.2]{FR18}, for each $N$ we have that
\[
\Wasserstein(m^\D_{\rho,i})(\hat{\ES}^{\D,N}_{c_1,i}, \hat{\ES}^{\D,\ell N}_{c_2,i}) =
\int_0^1
\max
\left\{
F^{-1}_{\hat{\ES}^{\D,\ell N}_{c_2,i}, \rho_i}(r) -
F^{-1}_{\hat{\ES}^{\D,N}_{c_1,i}, \rho_i}(r),
0
\right\}
\dd r
\enspace .
\]
Let us now partition the interval $[0,1]$ into $\ell N$ intervals of size $\frac{1}{\ell N}$, thus obtaining
\[
\Wasserstein(m^\D_{\rho,i})(\hat{\ES}^{\D,N}_{c_1,i}, \hat{\ES}^{\D,\ell N}_{c_2,i}) =
\sum_{h=1}^{\ell N}
\left(
\int_{\frac{h-1}{\ell N}}^{\frac{h}{\ell N}}
\max\left\{
F^{-1}_{\hat{\ES}^{\D,\ell N}_{c_2,i}, \rho_i}(r) -
F^{-1}_{\hat{\ES}^{\D,N}_{c_1,i}, \rho_i}(r),
0
\right\}
\dd r
\right)
\enspace .
\]
From Equation~\eqref{eq:cdf_inverse}, on each interval $\lopen{\frac{h-1}{\ell N},\frac{h}{\ell N}}$ it holds that $F^{-1}_{\hat{\ES}^{\D,N}_{c_1,i}, \rho_i}(r) = \rho_i(\ds_1^{\lceil \frac{h}{\ell} \rceil})$ and $F^{-1}_{\hat{\ES}^{\D,\ell N}_{c_2,i}, \rho_i}(r) = \rho_i(\ds_2^h)$.
Moreover, both functions are constant on each the interval so that the value of the integral is given by the difference multiplied by the length of the interval:
\begin{align*}
\Wasserstein(m^\D_{\rho,i})(\hat{\ES}^{\D,N}_{c_1,i}, \hat{\ES}^{\D,\ell N}_{c_2,i}) ={} &
\sum_{h=1}^{\ell N} \frac{1}{\ell N}
\max\left\{\rho_i(\ds_2^h) - \rho_i(\ds_1^{\lceil \frac{h}{\ell} \rceil}), 0 \right\} \\
= & \sum_{h=1}^{\ell N} \frac{1}{\ell N} \max\left\{\nu_h - \omega_{\lceil \frac{h}{\ell} \rceil}, 0 \right\}
\enspace .
\end{align*}
By substituting the last equality into Equation~\eqref{proof:one} we obtain the thesis.
\qedhere
\end{itemize}
\end{proof}
%------------------------------------------------------

\begin{figure}[tbp]
\begin{minipage}[t]{0.5\textwidth}
\small
\begin{algorithmic}[1]
\Function{Distance}{$c_1,c_2,\rho,\lambda,\OT,N,\ell$}
\State $k \gets \max\{i \in \OT\}$
\State $E_{1,1},\ldots,E_{1,k}\gets \Call{Estimate}{c_1,k,N}$
\State $E_{2,1},\ldots,E_{2,k}\gets \Call{Estimate}{c_2,k,\ell N}$
\State $m \gets 0$
\ForAll{$i\in \OT$}
	\State $m_i \gets \Call{ComputeW}{E_{1,i},E_{2,i},\rho_i}$
	\State $m \gets \max\{ m, \lambda(i)\cdot m_i \}$
\EndFor
\State \Return $m$
\EndFunction
\end{algorithmic}
\end{minipage}\hfill
\small
\begin{minipage}[t]{0.5\textwidth}
\begin{algorithmic}[1]
\Function{ComputeW}{$E_1,E_2,\rho$}
\State $(\config{\proc^1_1}{\ds^1_1}{\E_1},\ldots,\config{\proc^N_1}{\ds^N_1}{\E_1})\gets E_1$
\State $(\config{\proc^1_2}{\ds^1_2}{\E_2},\ldots,\config{\proc_2^{\ell N}}{\ds_2^{\ell N}}{\E_2})\gets E_2$
\State $\forall j: (1\leq j\leq N): \omega_j\gets\rho(\ds^j_1)$
\State $\forall h: (1\leq h\leq \ell N): \nu_h\gets\rho(\ds^h_2)$
\State re index $\{\omega_j\}$ s.t.\ $\omega_j\leq \omega_{j+1}$
\State re index $\{\nu_h\}$ s.t.\ $\nu_h\leq \nu_{h+1}$
\State \Return $\frac{1}{\ell N}\sum_{h=1}^{\ell N}\max\{ \nu_h - \omega_{\lceil \frac{h}{\ell}\rceil}, 0\}$
\EndFunction
\end{algorithmic}
\end{minipage}
\caption{Functions used to estimate the \spell metric on systems.}%
\label{alg:computedistance}
\end{figure}

\noindent
The procedure outlined above is realised by functions $\Call{Distance}{}$ and $\Call{ComputeW}{}$ in Figure~\ref{alg:computedistance}.
The former takes as input the two configurations to compare, the penalty function (seen as the sequence of the $i$-penalty functions), the discount function $\lambda$, the set $\OT$ of observation times which we assume to be bounded, and the parameters $N$ and $\ell$ used to obtain the samplings of computation.

Function $\Call{Distance}{}$ collects the samples $E_i$ of possible computations during the observation period $[0,\max_{\OT}]$, where $\max_{\OT}$ denotes the last observation time.

Then, for each observation time $i \in \OT$, the distance at time $i$ is computed via the function $\Call{ComputeW}{E_{1,i},E_{2,i},\rho_i}$.

Since the penalty function allows us to reduce the evaluation of the Wasserstein distance in $\real^n$ to its evaluation on $\real$, due to the sorting of $\{\nu_h \mid h \in [1,\dots,\ell N]\}$ the complexity of function $\Call{ComputeW}{}$ is $O(\ell N \log(\ell N))$ (cf.~\cite{TK09}).
We refer the interested reader to~\cite[Corollary 3.5, Equation (3.10)]{SFGSL12} for an estimation of the approximation error given by the evaluation of the Wasserstein distance over $N$ samples.

%==============================================================
%===============================================================

\subsection{Analysis of the three-tanks experiment}

Our aim is now to use the \spell metric, and the algorithms introduced above, to study various properties of the behaviour of the three-tanks system.
In particular, we consider the following two objectives:
\begin{enumerate}
\item Comparison of the behaviour of two systems generated from the interaction of the same program with two different environments, under the same initial conditions.
\item Comparison of the behaviour of two systems generated from the interaction of two different programs with the same environment, under the same initial conditions.
\end{enumerate}

%====================================================

\subsubsection{Same program, different environments}

Consider the systems $\system_1$ and $\system_2$ introduced in Example~\ref{ex:simulation}.
We recall that the initial configuration of $\system_i$ is $c_i = \config{\mathsf{P_{Tanks}}}{\ds_0}{\E_i}$, for $i =1,2$.
Notice that since $\system_1$ and $\system_2$ are distinguished only by the environment function, a comparison of their behaviour by means of the \spell metric allows us to establish in which scenario the program $\mathsf{P_{Tanks}}$ performs better with respect to the aforementioned tasks.

\begin{figure}[t]
\begin{subfigure}{0.47\textwidth}
\includegraphics[scale=0.45]{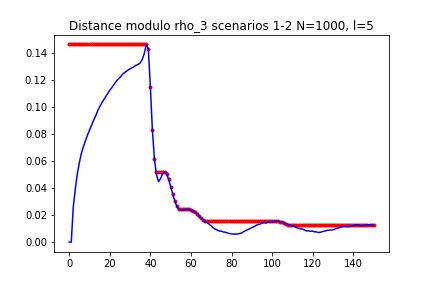}
\caption{Evaluation of $\m^{\uno}_{\rho^{l_3},\OT}(c_1,c_2) (\sim 0.15)$.}%
\label{fig:diff_env_12}
\end{subfigure}
\hfill
\begin{subfigure}{0.47\textwidth}
\includegraphics[scale=0.45]{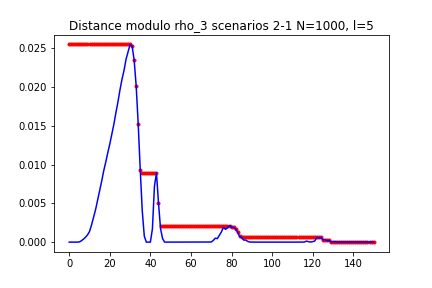}
\caption{Evaluation of $\m^{\uno}_{\rho^{l_3},\OT}(c_2,c_1) (\sim 0.02)$.}%
\label{fig:diff_env_21}
\end{subfigure}
\caption{Evaluation of the distance between $\system_1$ and $\system_2$ (and $\system_2,\system_1$) with respect to the target $l_3 = l_{goal}$.
A \textcolor{red}{red dot} at coordinates $(x,y)$ means that $\m^{\uno}_{\rho^{l_3},\OT_x}(c_1,c_2) = y$, where $\OT_x = \{\tau \in \OT \mid \tau \ge x\}$.
The \textcolor{blue}{blue line} is the pointwise Wasserstein distance between the distributions in the \tracciones{} of $c_i$ and $c_{3-i}$.}%
\label{fig:esperimento1}
\end{figure}

Firstly, we focus on a single tank by considering the target $l_3 = l_{goal}$, and thus the penalty function $\rho^{l_3}$.
We do not consider any discount (so $\lambda$ is the constant function $\uno$, i.e., $\lambda(\tau) = 1$ for all $\tau$) and we let $\OT = \nat \cap [0,150]$.
In Figure~\ref{fig:esperimento1} we show the evaluation of $\m^{\uno}_{\rho^{l_3},\OT}$ between $\system_1$ and $\system_2$ (Figure~\ref{fig:diff_env_12}) and between $\system_2$ and $\system_1$ (Figure~\ref{fig:diff_env_21}) over a simulation with $5000$ samples ($N=1000$, $l=5$).
Notice that the first distance is $\m^{\uno}_{\rho^{l_3},\OT}(c_1,c_2) \sim 0.15$, whereas the second one is $\m^{\uno}_{\rho^{l_3},\OT}(c_2,c_1) \sim 0.02$.
The different scale of the two distances (the latter is less than 1/7 of the former) already gives us an intuition of the advantages the use of a hemimetric gives us over a (pseudo)metric in terms of expressiveness. % chktex 36

Consider Figure~\ref{fig:diff_env_12}.
Recall that, given the definition of our hemimetric on \datastates, the value of $\m^{\uno}_{\rho^{l_3},\OT}(c_1,c_2)$ expresses how much the probabilistic evolution of the data in the \traccione{} of $c_2$ is worse than that of $c_1$, with respect to the task identified by $\rho^3$.
In other words, the higher the value of $\m^{\uno}_{\rho^{l_3},\OT}(c_1,c_2)$, the worse the performance of $c_2$ with respect to $c_1$.
Yet, for those who are not familiar with the Wasserstein distance, the plot alone can be a little foggy, so let us shed some light.
Notice that, although after the first $40$ steps the distance decreases considerably, it never reaches $0$.
This means that at each time step, the distribution in the \traccione{} of $c_2$ assigns positive probability to (at least) one measurable set of \datastates{} in which $l_3$ is farther from $l_{goal}$ than in all the measurable sets of \datastates{} in the support of the distribution reached at the same time by $c_1$.
This is perfectly in line with our observations on the differences between the variances of the distributions in Figure~\ref{fig:probestimation} (cf.\ the bottom row pertaining $l_3$).

A natural question is then why the distance $\m^{\uno}_{\rho^{l_3},\OT}(c_2,c_1)$, given in Figure~\ref{fig:diff_env_21}, is not equal to the constant $0$.
This is due to the combination of the Wasserstein distance with the hemimetric $m^{\D_{3T}}_{\rho^{l_3}}$.
Since $m^{\D_{3T}}_{\rho^{l_3}}(\ds_1,\ds_2) = 0$ whenever $\rho^{l_3}(\ds_2) < \rho^{l_3}(\ds_1)$ and the Wasserstein lifting is defined as the infimum distance over the couplings, the (measurable sets of) \datastates{} reached by $c_2$ that are better, with respect to $\rho^{l_3}$, than those reached by $c_1$, are ``hidden'' in the evaluation of $\m^{\uno}_{\rho^{l_3},\OT}(c_1,c_2)$.
However, they do contribute to the evaluation of $\m^{\uno}_{\rho^{l_3},\OT}(c_2,c_1)$.
Clearly, once we compute both $\m^1_{\rho^{l_3},\OT}(c_1,c_2)$ and $\m^{\uno}_{\rho^{l_3},\OT}(c_2,c_1)$, we can observe that the latter distance is less than $1/7$ of the former and we can conclude thus that $\system_1$ shows a better behaviour than $\system_2$.

As a final observation on Figure~\ref{fig:diff_env_21}, we notice that, for instance, in the time interval $I = [50,70]$ the pointwise distance between the \tracciones{} is $0$.
This means that for each (measurable set) \datastate{} $\ds_2$ in the support of $\ES^{\D_{3T}}_{c_2,\tau}$, $\tau \in I$, there is one (measurable set) \datastate{} $\ds_1$ in the support of $\ES^{\D_{3T}}_{c_1,\tau}$, such that $m^{\D_{3T}}_{\rho^{l_3}}(\ds_2,\ds_1)=0$.

\begin{figure}
\begin{subfigure}{0.47\textwidth}
\includegraphics[scale=0.45]{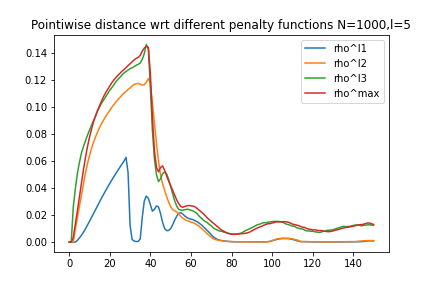}
\caption{Evaluation of $\Wasserstein(m^{\D_{3T}}_{\rho})(\ES^{\D_{3T}}_{c_1,\tau},\ES^{\D_{3T}}_{c_2,\tau})$ for $\rho \in \{\rho^{l_1},\rho^{l_2},\rho^{l_3},\rho^{\max}\}$.}%
\label{fig:diff_rho_12}
\end{subfigure}
\hfill
\begin{subfigure}{0.47\textwidth}
\includegraphics[scale=0.45]{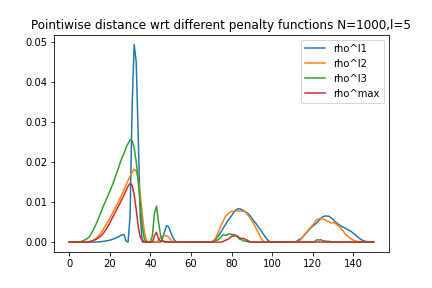}
\caption{Evaluation of $\Wasserstein(m^{\D_{3T}}_{\rho})(\ES^{\D_{3T}}_{c_2,\tau},\ES^{\D_{3T}}_{c_1,\tau})$ for $\rho \in \{\rho^{l_1},\rho^{l_2},\rho^{l_3},\rho^{\max}\}$.}%
\label{fig:diff_rho_21}
\end{subfigure}
\caption{Evaluation of the \spell metric with respect to different targets.}%
\label{fig:diff_rho}
\end{figure}

Clearly, one can repeat a similar analysis for the other tasks of the system.
To give a general idea of the difference in the behaviour of the system in the two scenarios, in Figure~\ref{fig:diff_rho} we report the evaluation of the pointwise distance between the \tracciones{} of $\system_1$ and $\system_2$ (Figure~\ref{fig:diff_rho_12}), and viceversa (Figure~\ref{fig:diff_rho_21}).

%=======================================================

\subsubsection{Different programs, same environment}

We introduce two new programs: $\mathsf{P_{Tanks}^+}$ and $\mathsf{P_{Tanks}^-}$.
They are defined exactly as $\mathsf{P_{Tanks}}$, the only difference being the value of $\Delta_l$ used in the if/else guards in $\mathsf{P_{in}}$ and $\mathsf{P_{out}}$.
For $\mathsf{P_{Tanks}}$ we used $\Delta_l=0.5$; we use $\Delta_l^+ = 0.7$ for $\mathsf{P_{Tanks}^+}$, and $\Delta_l^-=0.3$ for $\mathsf{P_{Tanks}^-}$.
Our aim is to compare the behaviour of $\system_1$ with that of systems $\system^+$ and $\system^-$ having as initial configurations, respectively, $c^+=\config{\mathsf{P_{Tanks}^+}}{\ds_0}{\E_1}$ and $c^-=\config{\mathsf{P_{Tanks}^-}}{\ds_0}{\E_1}$.
Notice that the three systems are characterised by the same environment evolution $\E_1$ and the same initial \datastate{} $\ds_0$.
Moreover, as outlined above, we see the value of $\Delta_l$ as an inherent property of the program.

So, while the outcome of the experiment will be exactly the one the reader expects, this example shows that with our framework we can compare the ability of different programs to fulfil their objectives when operating on equal footing, thus offering us the possibility of choosing the most suitable one according to (possibly) external factors.
For instance, we can imagine that while $\Delta_l^-$ allows for a finer control, it also causes more frequent modifications to the flow rate $q_1$ than $\Delta_l$ does.
These may in turn entail a higher risk of a breakdown of the pump attached to the first tank.
In this case, we can use the \spell metric to decide whether the difference in the performance of the programs justifies the higher risk, or if $\mathsf{P_{Tanks}}$ is still ``good enough'' for our purposes.

\begin{figure}
\begin{subfigure}{0.47\textwidth}
\includegraphics[scale=0.45]{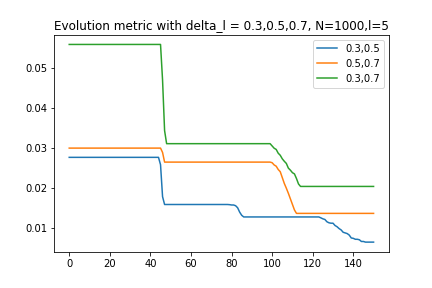}
\caption{Evolution metrics.}%
\label{fig:diff_prog_ev}
\end{subfigure}
\hfill
\begin{subfigure}{0.47\textwidth}
\includegraphics[scale=0.45]{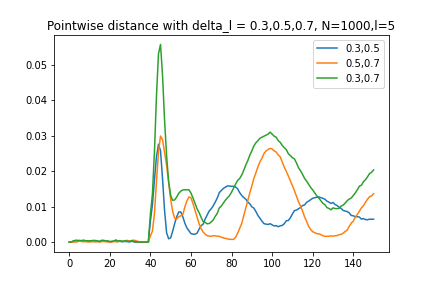}
\caption{Pointwise distances.}%
\label{fig:diff_prog_pt}
\end{subfigure}
\caption{Evaluation of the distances between $\mathsf{P_{Tanks}^-}, \mathsf{P_{Tanks}}, \mathsf{P_{Tanks}^+}$ with respect to $\rho^{l_3}$.}%
\label{fig:diff_prog}
\end{figure}

In Figure~\ref{fig:diff_prog} we present the results of the evaluations of $\m^{\uno}_{\rho^{l_3},\OT}(c^-,c_1)$, $\m^{\uno}_{\rho^{l_3},\OT}(c_1,c^+)$, and $\m^{\uno}_{\rho^{l_3},\OT}(c^-,c^+)$, where the target of the programs is $l_3 = l_{goal}$ (i.e., the penalty function $\rho^{l_3}$ is considered).
In detail, in Figure~\ref{fig:diff_prog_ev} we report the values of the three \spell metrics (see Figure~\ref{fig:esperimento1} for an explanation of how to read the graph), while in Figure~\ref{fig:diff_prog_pt} we report the pointwise (with respect to time) Wasserstein distances between the distributions in the \tracciones{} of the systems.

%==============================================
% sec - properties
%===============================================

\section{Robustness of programs}%
\label{sec:properties}

We devote this section to the analysis of the \emph{robustness} of programs with respect to a \datastate{} and an environment.
We also introduce two sensitivity properties of programs, which we call \emph{adaptability} and \emph{reliability}, entailing the ability of the program to induce a similar behaviour in systems that start operating from similar initial conditions.

%==================================================
%====================================================

\subsection{Robustness}

A system is said to be \emph{robust} if it is able to function correctly even in the presence of uncertainties.
In our setting, we formalise the \emph{robustness} of programs by \emph{measuring} their capability to \emph{tolerate} perturbations in the environmental conditions.

First of all, we characterise the perturbations with respect to which we want the program to be robust.
Given a program $\proc$, a \datastate{} $\ds$, and an environment $\E$, we consider all \datastates{} $\ds'$ such that the behaviour of the original configuration $\config{\proc}{\ds}{\E}$ is at distance at most $\eta_1$, for a chosen $\eta_1 \in \ropen{0,1}$, from the behaviour of $\config{\proc}{\ds'}{\E}$.
The distance is evaluated as the \spell metric with respect to a chosen penalty function $\rho$ and a time interval $I$.
The idea is that $\rho$ and $I$ can be used to characterise any particular situation we might want to study.
For instance, if we think of a drone autonomously setting its trajectory, then $I$ can be the time window within which a gust of wind occurs, and $\rho$ can capture the distance from the intended trajectory.

We can then proceed to measure the robustness.
Let $\rho'$ be the penalty function related to the (principal) target of the system, and let $\tilde{\tau}$ be an observable time instant.
We evaluate the \spell metric, with respect to $\rho'$, between the behaviour of $\config{\proc}{\ds}{\E}$ shown after time $\tilde{\tau}$, and that of $\config{\proc}{\ds'}{\E}$ for all the perturbations $\ds'$ obtained above.
We say that the robustness of $\proc$ with respect to $\ds$ and $\E$ is $\eta_2 \in \ropen{0,1}$, if $\eta_2$ is an upper bound for all those distances.

Notice that we use two penalty functions, one to identify the perturbations ($\rho$), and one to measure the robustness ($\rho'$), which can be potentially related to different targets of the system.
For instance, in the aforementioned case of the drone, $\rho'$ can be related to the distance from the final location, or to battery consumption.
This allows for an analysis of programs robustness in a general setting, where the perturbations and the (long term) system behaviour take into account different sets of variables.
Moreover, the time instant $\tilde{\tau}$ plays the role of a \emph{time bonus} given to the program to counter the effects of perturbations: differences detectable within time $\tilde{\tau}$ are not considered in the evaluation of the robustness threshold.

Definition~\ref{def:robustness} formalises the intuitions given above.

\begin{defi}
[Robustness]%
\label{def:robustness}
Let $\rho,\rho'$ be two penalty functions,
$I$ a time interval, $\tilde{\tau} \in \OT$, and $\eta_1,\eta_2 \in \ropen{0,1}$.
We say that $\proc$ is $(\rho,\rho',I,\tilde{\tau},\eta_1,\eta_2)$-\emph{robust} with respect to the \datastate{} $\ds$ and the environment evolution $\E$ if
\[
\forall\, \ds' \in \D \text{ with }
\m^{\lambda}_{\rho, \OT \cap I}(\config{\proc}{\ds}{\E},\config{\proc}{\ds'}{\E}) \le \eta_1
\]
it holds that
\[
\m^{\lambda}_{\rho',\{\tau \in \OT \mid \tau \ge \tilde{\tau}\}}(\config{\proc}{\ds}{\E},\config{\proc}{\ds'}{\E}) \le \eta_2
\enspace .
\]
\end{defi}

We can use our algorithm to measure the robustness of a given program.
Given a configuration $\config{\proc}{\ds}{\E}$, a set $\OT$ of observation times, a given threshold $\eta_1 \ge 0$, a penalty function $\rho$, and a time interval $I$, we can sample $M$ variations $\{ \ds_1,\ldots,\ds_M \}$ of $\ds$, such that for any $i \in \{1,\dots, M\}$, $\m^{\lambda}_{\rho,\OT\cap I}(\ds,\ds_i)\leq \eta_1$.
Then, for each sampled \datastate{} $\ds_i$ we can estimate the distance between $c = \config{\proc}{\ds}{\E}$ and $c_i = \config{\proc}{\ds_i}{\E}$, with respect to $\rho'$, after time $\tilde{\tau}$, namely $\m^{\lambda}_{\rho',\{\tau \in \OT \mid \tau \ge \tilde{\tau}\}}(c,c_i)$ for any $\tilde{\tau}\in \OT$.
Finally, for each $\tilde{\tau}\in \OT$, we let
\[
\xi_{\tilde{\tau}}=
\max_{i \in \{1,\dots, M\}} \{ \m^{\lambda}_{\rho',\{\tau \in \OT \mid \tau \ge \tilde{\tau}\}}(c,c_i) \}
\enspace .
\]
For the chosen $\eta_1,\rho,I,\rho'$ and $\tilde{\tau}$, each $\xi_{\tilde{\tau}}$ gives us a lower bound to $\eta_2$, and thus the robustness of the program.

\begin{figure}
\includegraphics[scale=0.5]{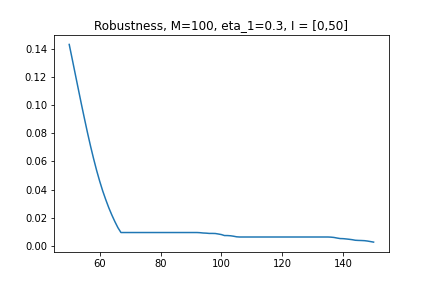}
\caption{Robustness of $\mathsf{P_{Tanks}}$ with respect to $\ds_s$ and $\E_1$, for $M = 100$, $\eta_1 = 0.3$, and $I = [0,50]$.}%
\label{fig:tanks_robust_scen1}
\end{figure}

\begin{exa}
We can use our definition of robustness to show that whenever the program is able to (initially) keep the difference with respect to the target $l_3 = l_{goal}$ bounded, then also the distance with respect to the target $l_1=l_{goal}=l_2$ will be bounded.
Formally, we instantiate the parameters of Definition~\ref{def:robustness} as follows: $\rho = \rho^{l_3}$, $\rho' = \frac{1}{2}\rho^{l_1} + \frac{1}{2}\rho^{l_2}$, $I = [0,50]$, and $\eta_1 = 0.3$.
Figure~\ref{fig:tanks_robust_scen1} reports the evaluation of $\xi_{50}$ with respect to $M = 100$ variations $\ds'$ satisfying $\m^{\uno}_{\rho,\OT\cap I}(\config{\mathsf{P_{Tanks}}}{\ds_s}{\E_1}, \config{\mathsf{P_{Tanks}}}{\ds'}{\E_1}) \le 0.3$.
\hfill\scalebox{0.85}{$\LHD$}
\end{exa}

%=======================================================
%=========================================================

\subsection{Adaptability and reliability}

We define the sensitivity properties of \emph{adaptability} and \emph{reliability} as particular instances of the notion of robustness.
Briefly, adaptability considers only one penalty function, and reduces the time interval $I$ to the sole initial time step $0$.
Then, reliability also fixes $\tilde{\tau} = 0$.
Let us present them in detail.

The notion of adaptability imposes some constraints on the \emph{long term} behaviour of systems, disregarding their possible initial dissimilarities.
Given the thresholds $\eta_1,\eta_2 \in \ropen{0,1}$ and an observable time $\tilde{\tau}$, we say that a program $\proc$ is adaptable with respect to a \datastate{} $\ds$ and an environment evolution $\E$ if whenever $\proc$ starts its computation from a \datastate{} $\ds'$ that differs from $\ds$ for at most $\eta_1$, then we are guaranteed that the distance between the \tracciones{} of the two systems after time $\tilde{\tau}$ is bounded by $\eta_2$.

\begin{defi}
[Adaptability]%
\label{def:adaptability}
Let $\rho$ be a penalty function, $\tilde{\tau} \in \OT$ and $\eta_1,\eta_2 \in \ropen{0,1}$.
We say that $\proc$ is $(\tilde{\tau},\eta_1,\eta_2)$-\emph{adaptable} with respect to the \datastate{} $\ds$ and the environment evolution $\E$ if
$\forall\, \ds' \in \D$ with $m^\D_{\rho,0}(\ds,\ds') \le \eta_1$ it holds $\m^{\lambda}_{\rho,\{\tau \in \OT \mid \tau \ge \tilde{\tau}\}}(\config{\proc}{\ds}{\E},\config{\proc}{\ds'}{\E}) \le \eta_2$.
\end{defi}

We remark that one can always consider the \datastate{} $\ds$ as the ideal model of the world used for the specification of $\proc$, and the \datastate{} $\ds'$ as the real world in which $\proc$ has to execute.
Hence, the idea behind adaptability is that even if the initial behaviour of the two systems is quite different, $\proc$ is able to reduce the gap between the real evolution and the desired one within the time threshold $\tilde{\tau}$.
Notice that being $(\tilde{\tau},\eta_1,\eta_2)$-adaptable for $\tilde{\tau} = \min \{\tau \mid \tau \in \OT\}$ is equivalent to being $(\tau,\eta_1,\eta_2)$-adaptable for all $\tau \in \OT$.

The notion of reliability strengthens that of adaptability by bounding the distance on the \tracciones{} from the beginning.
A program is reliable if it guarantees that small variations in the initial conditions cause only bounded variations in its evolution.

\begin{defi}
[Reliability]%
\label{def:reliability}
Let $\rho$ be a penalty function and $\eta_1,\eta_2 \in \ropen{0,1}$.
We say that $\proc$ is $(\eta_1,\eta_2)$-\emph{reliable} with respect to the \datastate{} $\ds$ and the environment evolution $\E$ if
$\forall\, \ds' \in \D$ with $m^\D_{\rho,0}(\ds,\ds') \le \eta_1$ it holds $\m^{\lambda}_{\rho,\OT}(\config{\proc}{\ds}{\E},\config{\proc}{\ds'}{\E}) \le \eta_2$.
\end{defi}

The same strategy depicted above for the evaluation of the robustness can be applied also in the case of adaptability and reliability.
Briefly, we simulate the behaviour of $M$ variations satisfying the requirement on the initial distance between \datastates{} $m^{\D}_{\rho,0}(\ds,\ds_i)\leq \eta_1$, and use the bounds $\xi_{\tilde{\tau}}$ to estimate the adaptability bound $\eta_2$.
Similarly, for $\tau_{\min}= \min_{\OT} \tau$, $\xi_{\tau_{\min}}$ gives a lower bound for its reliability.

\begin{figure}
\begin{subfigure}{0.47\textwidth}
\includegraphics[scale=0.45]{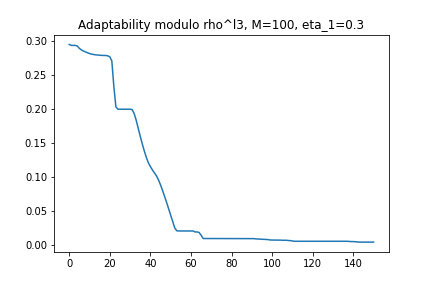}
\caption{Adaptability modulo $\rho^{l_3}$.}%
\label{fig:tanks_adapt_r3}
\end{subfigure}
\begin{subfigure}{0.47\textwidth}
\includegraphics[scale=0.45]{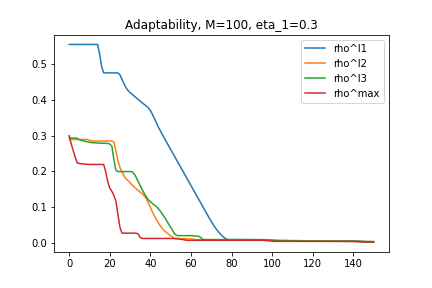}
\caption{Comparison of adaptability modulo different penalty functions.}%
\label{fig:tanks_adapt_all}
\end{subfigure}
\caption{Adaptability of $\mathsf{P_{Tanks}}$ with respect to $\ds_s$ and $\E_1$, for $M=100$ and $\eta_1 = 0.3$.}%
\label{fig:tanks_adaptability}
\end{figure}

\begin{exa}%
\label{ex:tanks_adaptability}
We provide an example of the evaluation of the adaptability of $\mathsf{P_{Tanks}}$.
We consider the environment evolution $\E_1$, i.e., the one corresponding to scenario 1, and, as initial \datastate, the \datastate{} $\ds_s$ such that $\ds_s(q_i) = 0$ and $\ds_s(l_i) = 5$, for $i = 1,2,3$.
The rationale to consider $\ds_s$ in place of $\ds_0$ is that $\rho(\ds_0) = 1$ and thus $m^{\D_{3T}}_{\rho}(\ds_0,\ds') = 0$ for all \datastates{} $\ds'$, for any penalty function $\rho \in \{\rho^{l_1},\rho^{l_2},\rho^{l_3},\rho^{\max}\}$.
Conversely, there are several \datastates{} $\ds'$ such that $m^{\D_{3T}}_{\rho}(\ds_s,\ds') > 0$, so that we can take into account the behaviour of the system when starting from ``worse'' initial conditions.
Let us stress that since the penalty functions $\rho$ are evaluated only on (some of) the values $l_i$, for $i = 1,2,3$, the \datastates{} $\ds'$ that are within distance $\eta_1$ from $\ds_s$ can have any value in $\D_q$ initially assigned to $q_j$, $j =1,2,3$ (and any value in $\D_l$ initially assigned to the $l_j$ that are possibly not taken into account in the evaluation of $\rho$).

To obtain the results reported in Figure~\ref{fig:tanks_adaptability} we considered, for each $\rho \in \{\rho^{l_1},\rho^{l_2},\rho^{l_3},\rho^{\max}\}$, $M = 100$ variations of the \datastate{} $\ds_s$ within the (closed) ball of radius $\eta_1 = 0.3$ with respect to $m^{\D_{3T}}_{\rho}$, and performed each simulation over $5000$ samples ($N=1000$, $\ell=5$), keeping $k=150$ as time horizon.
Figure~\ref{fig:tanks_adapt_r3} represents the adaptability of $\mathsf{P_{Tanks}}$ with respect to $\ds$ and $\E_1$ for $\rho = \rho^{l_3}$.
We can then infer that $\mathsf{P_{Tanks}}$ is $(30,0.3,0.2)$-adaptable and $(50,0.3,0.05)$-adaptable with respect to $\ds_s$ and $\E_1$, when considering the target $l_3 = l_{goal}$.
Please notice that one should always take into consideration the approximation error related to the computation of the Wasserstein distance~\cite[Corollary 3.5, Equation (3.10)]{SFGSL12} for an exact evaluation.
Yet, in our presentation we are more interested in showing the method rather than making exact calculations.
Figure~\ref{fig:tanks_adapt_all} offers a comparison of the adaptability modulo the various penalty functions defined for the three-tanks system.
\hfill\scalebox{0.85}{$\LHD$}
\end{exa}

%====================================================================
% case study: motore
%===================================================================

\section{Case-study: engine system}%
\label{sec:engine}

In this section we analyse a case study proposed in~\cite{LMMT21} and consisting in two supervised self-coordinating refrigerated engine systems that are subject to cyber-physical attacks aiming to inflict \emph{overstress of equipment}~\cite{GGIKLW2015}.

Since the impact of the attacks can be viewed as a flaw on system behaviour, in our context it will be quantified by employing suitable penalty functions.
In particular, the impact of attacks aiming to overstress the equipment can be quantified by adopting two different approaches: by simply measuring the level of equipment's stress or by comparing such a value with the alarm signals generated by the system when attempting to perform attack detection.
In the former case one focuses only on damages inflicted by attacks, in the latter case one is mainly interested in false positives and false negatives raised by the detection system.
In the former case we will use penalty functions quantifying the level of stress of system's equipment, in the latter case penalty functions will quantify false negatives and false positives that are produced when attempting to detect attacks and to raise alarm events.
In both cases, our simulations will allow us to \emph{estimate the distance between the genuine system and the system under attack}.
Essentially, these simulations will allow us to estimate the impact of attacks, thus quantifying how worse the system under attack behaves with respect to the genuine one.

%======================================================
%========================================================

\subsection{Modelling of the engine system}

We start by describing a single engine.
Then, we will describe the system obtained by combining two identical engines with a supervisor.
Finally, we will introduce cyber-physical attacks tampering with a single engine or with the whole system.
Interestingly, our choice of modelling the program and the environment separately well supports the specifications of these attacks:
since the attacks consist in pieces of malicious code that are somehow injected in the logic of the system, an attack specified by a process $A$ can be easily injected in a genuine system
$\config{\proc}{\ds}{\E}$
by composing $\proc$ and $A$ in parallel, thus giving
$\config{\proc \cmerge A}{\ds}{\E}$.

The engine we consider is a CPS whose logic aims to perform three tasks:
\begin{enumerate*}[(i)]
\item regulate the speed,
\item maintain the temperature within a specific range by means of a cooling system, and
\item detect anomalies.
\end{enumerate*}
The first two tasks are on charge of a controller, the last one is took over by an intrusion detection system, henceforth IDS\@.

The \dataspace{} consists of the following variables:
\begin{enumerate}[i.]
\item $\mathit{temp}$, representing a sensor detecting the temperature of the engine;
\item $\mathit{cool}$, representing an actuator to turn $\mathsf{on}$/$\mathsf{off}$ the cooling system;
\item $\mathit{speed}$, representing an actuator to regulate the engine speed, with values in the set $\{ \mathsf{slow}, \mathsf{half}, \mathsf{full}\}$, where $\mathsf{half}$ is the normal setting;
\item $\mathit{ch\_speed}$, representing a channel used by the IDS to command the controller to set the actuator $\mathit{speed}$ at $\mathsf{slow}$, when an anomaly is detected, or at $\mathsf{half}$, otherwise;
\item $\mathit{ch\_in}$ and $\mathit{ch\_out}$, representing two channels used for communication between the two engines:
when an IDS commands to the controller of its own engine to operate at $\mathsf{slow}$ speed, in order to compensate the consequent lack of performance it
asks to the controller of the other engine to operate at $\mathsf{full}$ speed;
\item $\mathit{stress}$, denoting the level of stress of the engine, due to its operating conditions;
\item six variables $\mathit{p_k}$, for $ 1 \leq k \leq 6$, recording the temperatures in the last six time instants;
\item $\mathit{temp\_fake}$, representing a noise introduced by malicious activity of attackers tempering with sensor $\mathit{temp}$:
we assume that the IDS can access the genuine value $\mathit{temp}$, whereas the controller receives the value of $\mathit{temp}$ through an insecure channel $\mathit{ch\_temp}$ that may be compromised and whose value is obtained by summing up the noise $\mathit{temp\_fake}$ and $\mathit{temp}$;
\item
$\mathit{ch\_warning}$, representing a channel used by the IDS to raise anomalies.
\end{enumerate}

\noindent
The environment evolution $\E$ affects these variables as follows:
\begin{enumerate}
\item $\mathit{temp}$ changes by a value determined probabilistically according to a uniform distribution ${\mathcal U}(I)$, where $I$ is the interval $[-1.2,-0.8]$ if the cooling system is active ($\mathit{cool} = \mathsf{on}$), whereas if the cooling system is inactive ($\mathit{cool} = \mathsf{off}$) then $I$ depends on the speed:
if $\mathit{speed}$ is $\mathsf{slow}$ then $I=[0.1,0.3]$,
if $\mathit{speed}$ is $\mathsf{half}$ then $I=[0.3,0.7]$,
if $\mathit{speed}$ is $\mathsf{full}$ then $I=[0.7,1.2]$;
\item the variables $\mathit{p_k}$, for $k \in 1 \dots 6$, are updated as expected to record the last six temperatures detected by sensor $\mathit{temp}$;
\item $\mathit{stress}$ remains unchanged if the temperature was below a constant $\mathsf{max}$, set to 100 in our experiments, for at least $3$ of the last $6$ time instants; otherwise, the variable is increased (reaching at most the value $1$) by a constant $\mathsf{stressincr}$ that we set at $0.02$;
\item all variables representing channels or actuators are not modified by $\E$, since they are under the control of the program.
\end{enumerate}
Summarising, the dynamics of $\mathit{p_k}$ and $\mathit{stress}$ can be modelled via the following set of stochastic difference equations, with sampling time interval $\Delta\tau =1$:
\begin{equation}%
\label{eq:enginedynamics}
\begin{array}{rcl}
\mathit{p_k}(\tau + 1)
& = &
\begin{cases}
\begin{array}{cl}
\mathit{temp}(\tau)
&
\text{ if } k=1
\\
\mathit{p_{k-1}}(\tau) &
\text{ if }
k=2,\dots,6
\end{array}
\end{cases}
%%%
\\
%%%
\mathit{stress}(\tau + 1)
& = &
\begin{cases}
\begin{array}{cl}
\mathit{stress}(\tau) + \mathsf{stressincr}
&
\text{ if }
|\{k \mid \mathit{p_k}(\tau) \ge \mathsf{max}\} | > 3
\\
\mathit{stress}(\tau)
&
\text{ otherwise}
\end{array}
\end{cases}
\end{array}
\end{equation}
and $\mathit{temp}$ varies by a value that is uniformly distributed in an interval depending on the state of actuators $\mathit{cool}$ and $\mathit{speed}$:
\begin{equation}%
\label{eq:temp}
\begin{array}{rcl}
\mathit{temp}(\tau + 1)
& = &
\mathit{temp}(\tau ) + v
\\
v & \sim &
\begin{cases}
\begin{array}{cl}
{\mathcal U}[-1.2,-0.8]
&
\text{ if }
\mathit{cool}(\tau) = \mathsf{on}
\\
{\mathcal U}[0.1,0.3]
&
\text{ if }
\mathit{cool}(\tau) = \mathsf{off}
\text{ and }
\mathit{speed}(\tau) = \mathsf{slow}
\\
{\mathcal U}[0.3,0.7]
&
\text{ if }
\mathit{cool}(\tau) = \mathsf{off}
\text{ and }
\mathit{speed}(\tau) = \mathsf{half}
\\
{\mathcal U}[0.7,1.2]
&
\text{ if }
\mathit{cool}(\tau) = \mathsf{off}
\text{ and }
\mathit{speed}(\tau) = \mathsf{full}.
\end{array}
\end{cases}
\end{array}
\end{equation}

The whole engine, the controller and the IDS are modelled by the following processes $\mathsf{Eng}$, $\mathsf{Ctrl}$ and $\mathsf{IDS}$, respectively:
\[
\begin{array}{lll}
\mathsf{Eng}
&
\stackrel{\mathit{def}}{=}
&
\mathsf{Ctrl} \cmerge \mathsf{IDS}
\\
\mathsf{Ctrl}
&
\stackrel{\mathit{def}}{=}
&
\mathrm{if}\;
[\mathit{ch\_temp} < \mathsf{threshold}]\;
\surd.
\mathsf{Check}
\;
\mathrm{else}
\;
\wact{\mathsf{on}}{\mathit{cool}}.
\mathsf{Cooling}
\\
\mathsf{Cooling}
&
\stackrel{\mathit{def}}{=}
&
\surd . \surd . \surd . \surd .
\mathsf{Check}
\\
\mathsf{Check}
&
\stackrel{\mathit{def}}{=}
&
\mathrm{if}\;
[\mathit{ch\_speed} = \mathsf{slow}]\;
(\wact{\mathsf{slow}}{\mathit{speed}},
\wact{\mathsf{off}}{\mathit{cool}}).
\mathsf{Ctrl} \;
\\ & &
\mathrm{else}\;
(\wact{\mathit{ch\_in}}{\mathit{speed}},
\wact{\mathsf{off}}{\mathit{cool}}).
\mathsf{Ctrl} \;
\\
\mathsf{IDS}
&
\stackrel{\mathit{def}}{=}
&
\mathrm{if}\;
[\mathit{temp} > \mathsf{max} \wedge \mathsf{cool} = \mathsf{off}]\;
\\ & &
(\wact{\mathsf{hot}}{\mathit{ch\_warning}},
\wact{\mathsf{low}}{\mathit{ch\_speed}},
\wact{\mathsf{full}}{\mathit{ch\_out}}).
\mathsf{IDS}
\\ & &
\mathrm{else}\;
(\wact{\mathsf{ok}}{\mathit{ch\_warning}},
\wact{\mathsf{half}}{\mathit{ch\_speed}},
\wact{\mathsf{half}}{\mathit{ch\_out}}).
\mathsf{IDS}
\end{array}
\]
where we write $((e_1 \to x_1),\dots,(e_n \to x_n))$ to denote the prefix $(e_1\dots e_n \to x_1\dots x_n)$.

At each scan cycle $\mathsf{Ctrl}$ checks if the value of temperature as carried by channel $\mathit{ch\_temp}$ is too high, namely above $\mathsf{threshold}$, which is a constant $\le \mathsf{max}$ and set to 99.8 in our experiments.
If this is the case, then $\mathsf{Ctrl}$ activates the coolant for $5$ consecutive time instants, otherwise the coolant remains off.
In both cases, before re-starting  its scan cycle, $\mathsf{Ctrl}$ waits for instructions/requests to change the speed: if it receives instructions through channel $\mathit{ch\_speed}$ from the IDS to slow down the engine  then it  commands so through actuator $\mathit{speed}$. Otherwise, if the IDS of the other engine requests through channel $\mathit{ch\_in}$ to work at $\mathsf{full}$ of $\mathsf{half}$ power, then $\mathsf{Ctrl}$ sets $\mathit{speed}$ accordingly.

The process $\mathsf{IDS}$ checks whether the cooling system is  active when the temperature is above $\mathsf{max}$.
If this safety condition is violated then:
\begin{enumerate*}[(i)]
\item it raises the presence of an anomaly, via channel $\mathit{ch\_warning}$;
\item it commands $\mathsf{Ctrl}$ to slow down the engine, via  $\mathit{ch\_speed}$;
\item it requests to the other engine to run at full power, via channel $\mathit{\mathit{ch}\_{\mathit{out}}}$.
\end{enumerate*}
Otherwise, $\mathsf{IDS}$ asks (both local and external) controllers to set their engines at $\mathsf{half}$ power.

An engine system can be built by combining two engines, the ``left'' and the ``right'' one, interacting with a supervisory component that checks their correct functioning.
The composite system can be defined as
\begin{displaymath}
\mathsf{Eng\_Sys} \; = \; \mathsf{Eng}\_{\mathsf{L}} \cmerge \mathsf{Eng}\_{\mathsf{R}} \cmerge \mathsf{SV}
\end{displaymath}
\noindent
where processes $\mathsf{Eng}\_{\mathsf{L}}$ and $\mathsf{Eng}\_{\mathsf{R}}$ are obtained from $\mathsf{Eng}$ by renaming the variables as follows:
\begin{enumerate*}[(i)]
\item both $\mathit{ch\_in}$ in $\mathsf{Eng\_L}$ and $\mathit{ch\_out}$ in $\mathsf{Eng\_R}$ are renamed as $\mathit{ch\_speed\_R\_to\_L}$;
\item both $\mathit{ch\_in}$ in $\mathsf{Eng\_R}$ and $\mathit{ch\_out}$ in $\mathsf{Eng\_L}$ are renamed as $\mathit{ch\_speed\_L\_to\_R}$;
\item all remaining variables $\mathit{x}$ used in $\mathit{Eng}$ are renamed as $\mathit{x\_L}$ in $\mathsf{Eng\_L}$ and as $\mathit{x\_R}$ in $\mathsf{Eng\_R}$.
\end{enumerate*}
Then, we introduce a new channel $\mathit{ch\_alarm}$ which is used by $\mathsf{SV}$ to forward to the external environment
the information obtained from the IDSs through $\mathit{ch\_warning\_L}$ and $\mathit{ch\_warning\_R}$.
In detail, $\mathit{ch\_alarm}$ carries a value in the set $\{\mathsf{none}, \mathsf{left}, \mathsf{right}, \mathsf{both}\}$ depending on the anomaly warnings received from the left and right IDS:\@
\[
\begin{array}{lll}
\mathsf{SV}
&
\stackrel{\mathit{def}}{=}
&
\mathrm{if}\;
[\mathit{ch\_warning\_L} = \mathsf{hot} \wedge \mathit{ch\_warning\_R} = \mathsf{hot}]\;
\wact{\mathsf{both}}{\mathit{ch\_alarm}}. \mathsf{SV}
\\ & &
\mathrm{else}\;
\mathrm{if}\;
[\mathit{ch\_warning\_L} = \mathsf{hot}]\;
\wact{\mathsf{left}}{\mathit{ch\_alarm}}. \mathsf{SV}
\\ &&
\; \; \; \; \; \; \; \mathrm{else}\;
\mathrm{if}\;
[\mathit{ch\_warning\_R} = \mathsf{hot}]\;
\wact{\mathsf{right}}{\mathit{ch\_alarm}}. \mathsf{SV} \;
\\ &&
\; \; \; \; \; \; \; \; \; \; \; \; \; \;  \mathrm{else}\;
\wact{\mathsf{none}}{\mathit{ch\_alarm}}. \mathsf{SV}\;
\end{array}
\]
Following~\cite{LMMT21}, cyber-physical attacks can be modelled as processes that run in parallel with systems under attack.
Intuitively, they represent malicious code that has been injected in the logic of the system.
Three examples of attack to $\mathsf{Eng\_Sys}$ are the following, where $\mathsf{X} \in \{\mathsf{L},\mathsf{R}\}$:
\[
\begin{array}{lll}
\mathsf{Att\_Act\_X}
&
\stackrel{\mathit{def}}{=}
&
\mathrm{if}\;
[\mathit{temp\_X} < \mathsf{max} - \mathsf{AC}]\;
\wact{\mathsf{off}}{\mathit{cool\_X}}. \mathsf{Att\_Act\_X}\;
\mathrm{else}\;
\surd. \mathsf{Att\_Act\_X}
\\
\mathsf{Att\_Sen\_X}
&
\stackrel{\mathit{def}}{=}
&
\wact{-\mathsf{TF}}{\mathit{temp\_fake\_X}}. \mathsf{Att\_Sen\_X}
\\
\mathsf{Att\_Saw\_X}
&
\stackrel{\mathit{def}}{=}
&
(\wact{\mathit{left}}{\mathit{AW\_l}},
\wact{\mathit{right}}{\mathit{AW\_r}}).
\mathsf{Att\_Saw\_X'}
\\
\mathsf{Att\_Saw\_X'}
&
\stackrel{\mathit{def}}{=}
&
\mathrm{if}\;
[\mathit{cs} \in [\mathit{AW_l},\mathit{AW}_r]]\;
\wact{-\mathsf{TF}}{\mathit{temp\_fake\_X}}. \mathsf{Att\_Saw\_X'}\;
\\
& &
\mathrm{else}\;
\wact{0}{\mathit{temp\_fake\_X}}. \mathsf{Att\_Saw\_X'}
\end{array}
\]
Process $\mathsf{Att\_Act\_X}$ models an integrity attack on actuator $\mathit{cool\_X}$.
The five-instants cooling cycle started by $\mathsf{Ctrl\_X}$ is maliciously interrupted as soon as $\mathit{temp\_X}$ goes below $\mathsf{max} - \mathsf{AC}$, with $\mathsf{AC}$ a constant with value in $[1.0,3.0]$, in order to let the temperature of the engine rise quickly above $\mathsf{max}$, accumulating stress in the system, and requiring continuous activation of the cooling system.
We recall that the quantification of stress accumulated by engine $\mathit{Eng\_X}$ is recorded in variable $\mathit{stress\_X}$, which is incremented by $\E$ when $|\{k \mid 1 \le k \le 6 \wedge p_k \ge \mathsf{max}\} | > 3$.
In~\cite{LMMT21} this attack is classified as \emph{stealth}, since the condition $\mathit{temp\_X} > \mathsf{max} \wedge \mathit{cool\_X} = \mathsf{off}$ monitored by the IDS never holds.
Notice that $\mathsf{AC}$ is a parameter of the attack.

Then, $\mathsf{Att\_Sen\_X}$ is an integrity attack to sensor $\mathit{temp\_X}$.
The attack adds a negative offset $-\mathsf{TF}$, with $\mathsf{TF}$ a positive constant, to the temperature value carried by $\mathit{ch\_temp\_X}$, namely $ch\_temp\_X$ becomes $\mathit{temp\_X} - \mathsf{TF}$.
This attack aims to prevent some activation of the cooling system by $\mathsf{Ctrl\_X}$.
Since the IDS raises a warning on $\mathit{ch\_warning\_X}$ each time we have $\mathsf{max} < \mathit{temp\_X} \leq \mathsf{max} + \mathsf{TF}$ and $\mathit{cool\_X}=\mathsf{off}$, this attack is not stealth.
In this case, $\mathsf{TF}$ is a parameter of the attack.

Finally, $\mathsf{Att\_Saw\_X}$  performs the same attack of $\mathsf{Att\_Sen\_X}$ but only in a precise attack window.
We assume that $\mathit{cs}$ is a variable that simply counts the number of computation steps (the initial value is $0$ and $\mathit{cs}(\tau+1) = \mathit{cs}(\tau)+1$ is added to Equation~\ref{eq:enginedynamics}).
Then, $[\mathit{AW_l},\mathit{AW_r}]$ is an interval, where $\mathit{AW_l}=0=\mathit{AW_r}$ in the initial data state and we assume that the variables $\mathit{left}$ and $\mathit{right}$ take random non-negative integer values satisfying the property $0 \le \mathit{right} - \mathit{left} \le \mathsf{AWML}$, where $\mathsf{AWML}$ is a parameter of the attack representing the maximal length of the attack window.

%===================================================
%======================================================

\subsection{Simulation of the engine system}

We have applied our simulation strategy to the engine system.
Figure~\ref{fig:average_temp} reports the results of six simulations of the single engine $\mathsf{Eng\_L}$, where each simulation is obtained by simulating $100$ runs consisting in $10000$ steps, with the following initial \datastate{} $\ds_0$:
\begin{enumerate*}[(i)]
\item $\mathit{temp\_L}=\mathit{p}_1\_L= \cdots = \mathit{p}_6\_L = 95$;
\item $\mathit{cool\_L} = \mathsf{off}$;
\item $\mathit{speed\_L} = \mathsf{half}$;
\item $\mathit{stress\_L} = 0$;
\item $\mathit{ch\_speed\_L}$ = $\mathit{ch\_speed\_L\_to\_R}$= $\mathit{ch\_speed\_R\_to\_L}$= $\mathsf{half}$;
\item $\mathit{temp\_fake\_L} = 0$; and
\item $\mathit{ch\_warning\_L} = \mathsf{ok}$.
\end{enumerate*}
We considered three different scenarios:
\begin{enumerate*}[(i)]
\item no attack, modelled by processes $\mathsf{Eng\_L}$,
\item attack on actuator $\mathit{cool\_L}$, modelled by $\mathsf{Eng\_L} \cmerge \mathsf{Att\_Act\_L}$, with constant $\mathsf{AC}$ set to $1.8$, and
\item attack on sensor $\mathit{temp\_L}$, modelled by $\mathsf{Eng\_L} \cmerge \mathsf{Att\_Sen\_L}$, with constant $\mathsf{TF}$ set to $0.4$.
\end{enumerate*}
The six pictures report the value of average temperature detected by sensor $\mathit{temp\_L}$ and the average stress level carried by variable $\mathit{stress\_L}$ in these three scenarios.

Clearly, by comparing the plots in Figure~\ref{fig:average_temp} it turns out that both attacks are able to cause an average size-up of the temperature of the engine and to take the engine to the maximal level of stress after 10000 steps.

\begin{figure}[tbp]
\centering
\includegraphics[width=0.32\textwidth]{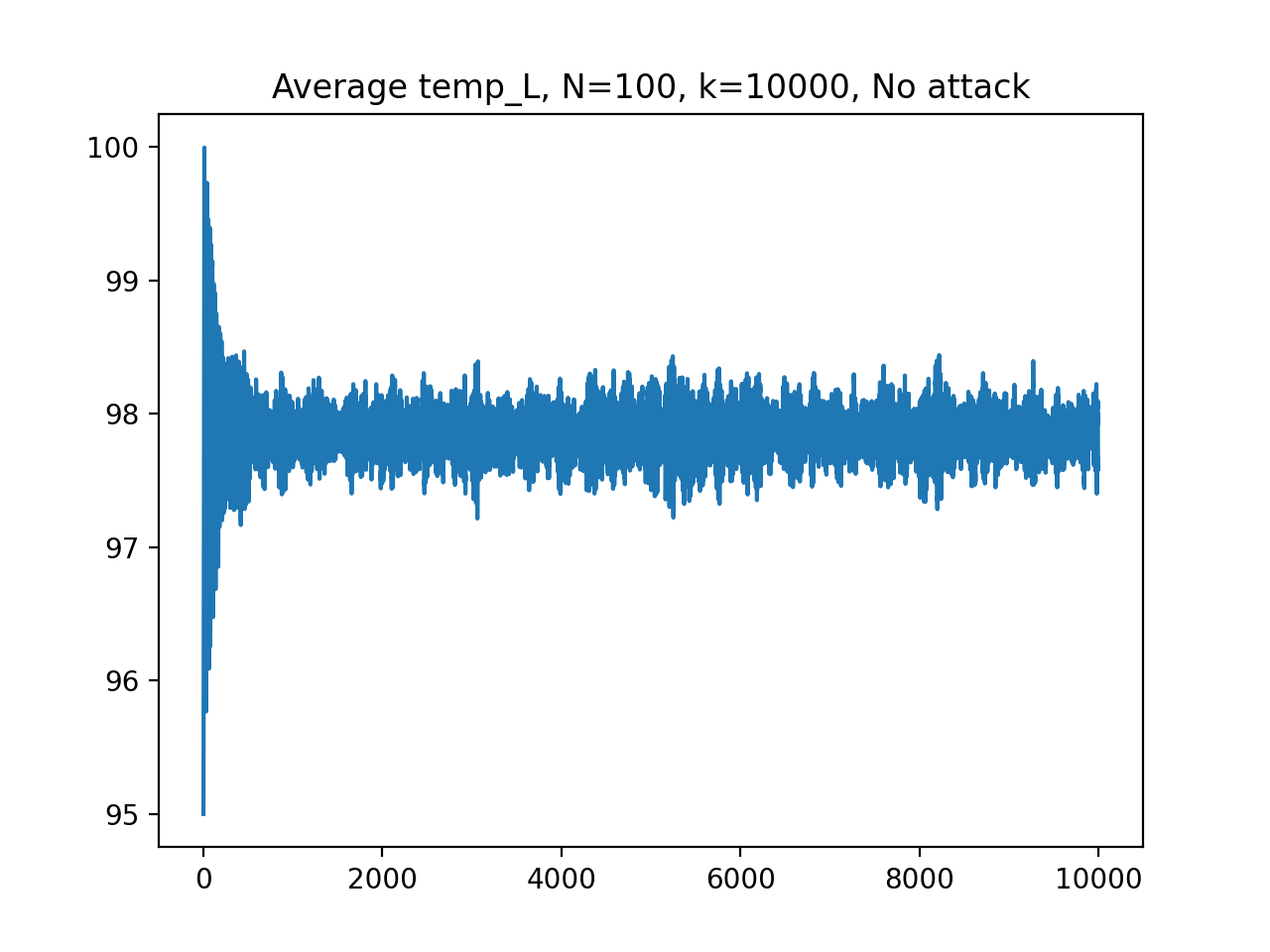}
\includegraphics[width=0.32\textwidth]{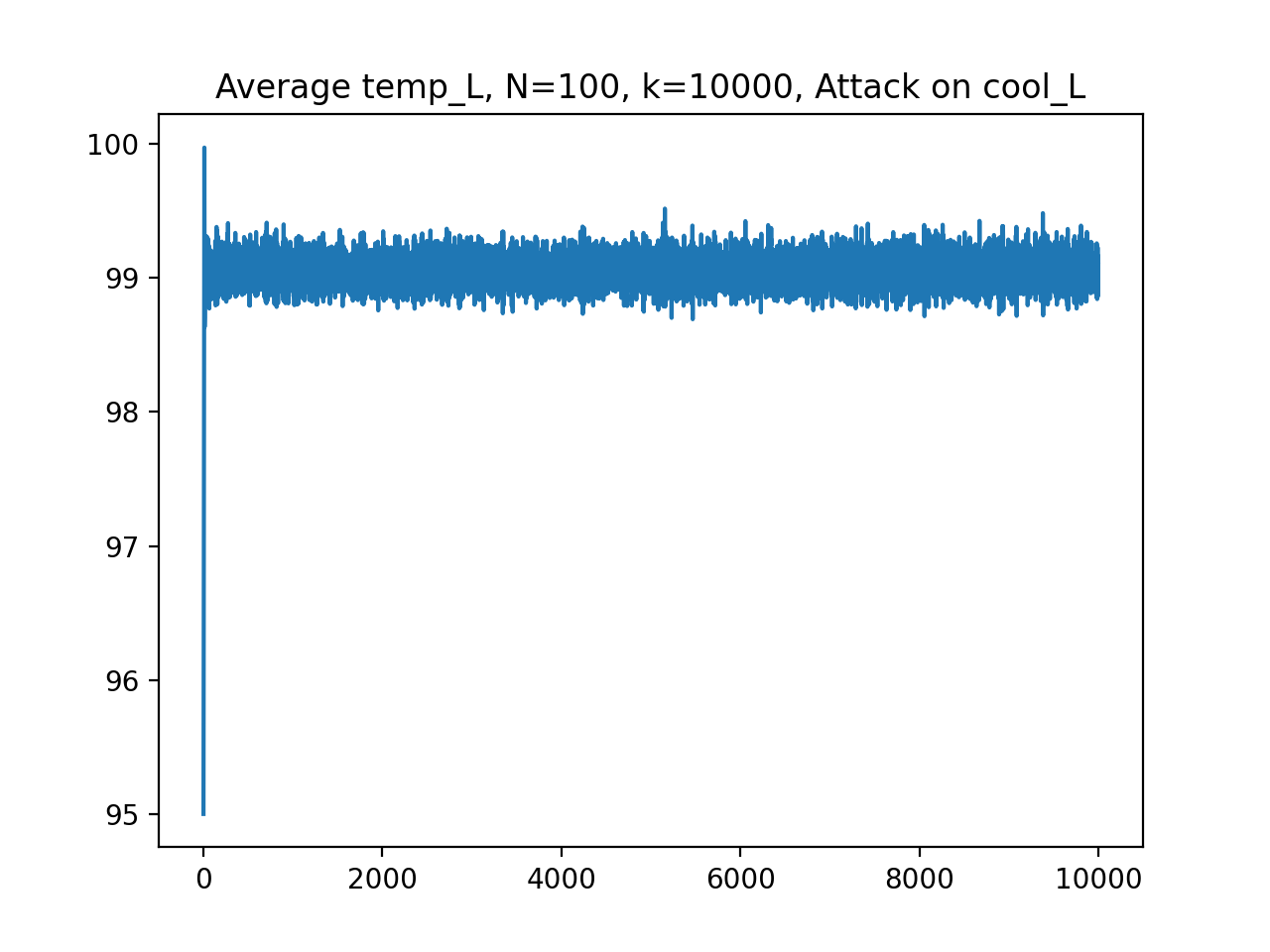}
\includegraphics[width=0.32\textwidth]{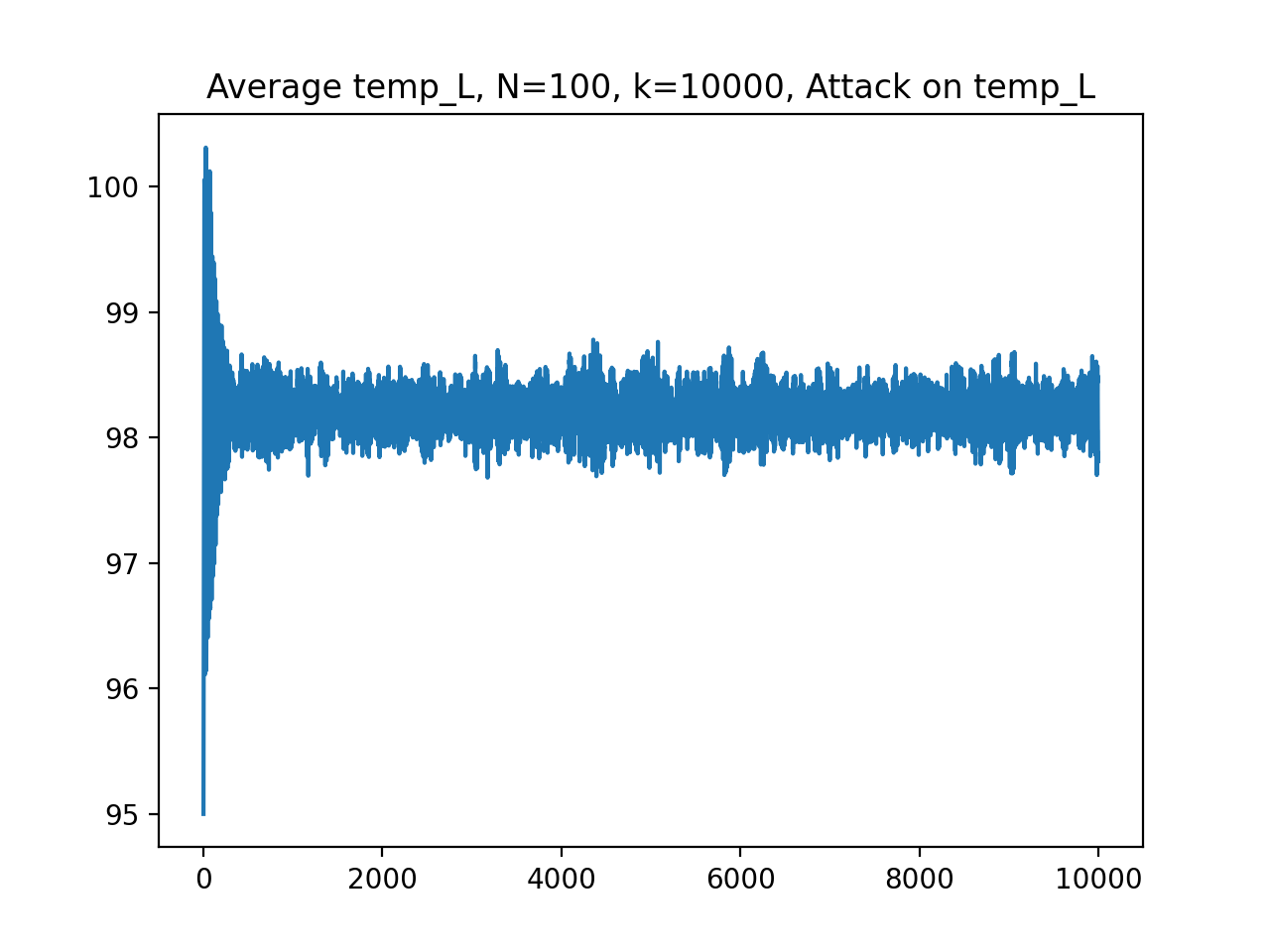}
\includegraphics[width=0.32\textwidth]{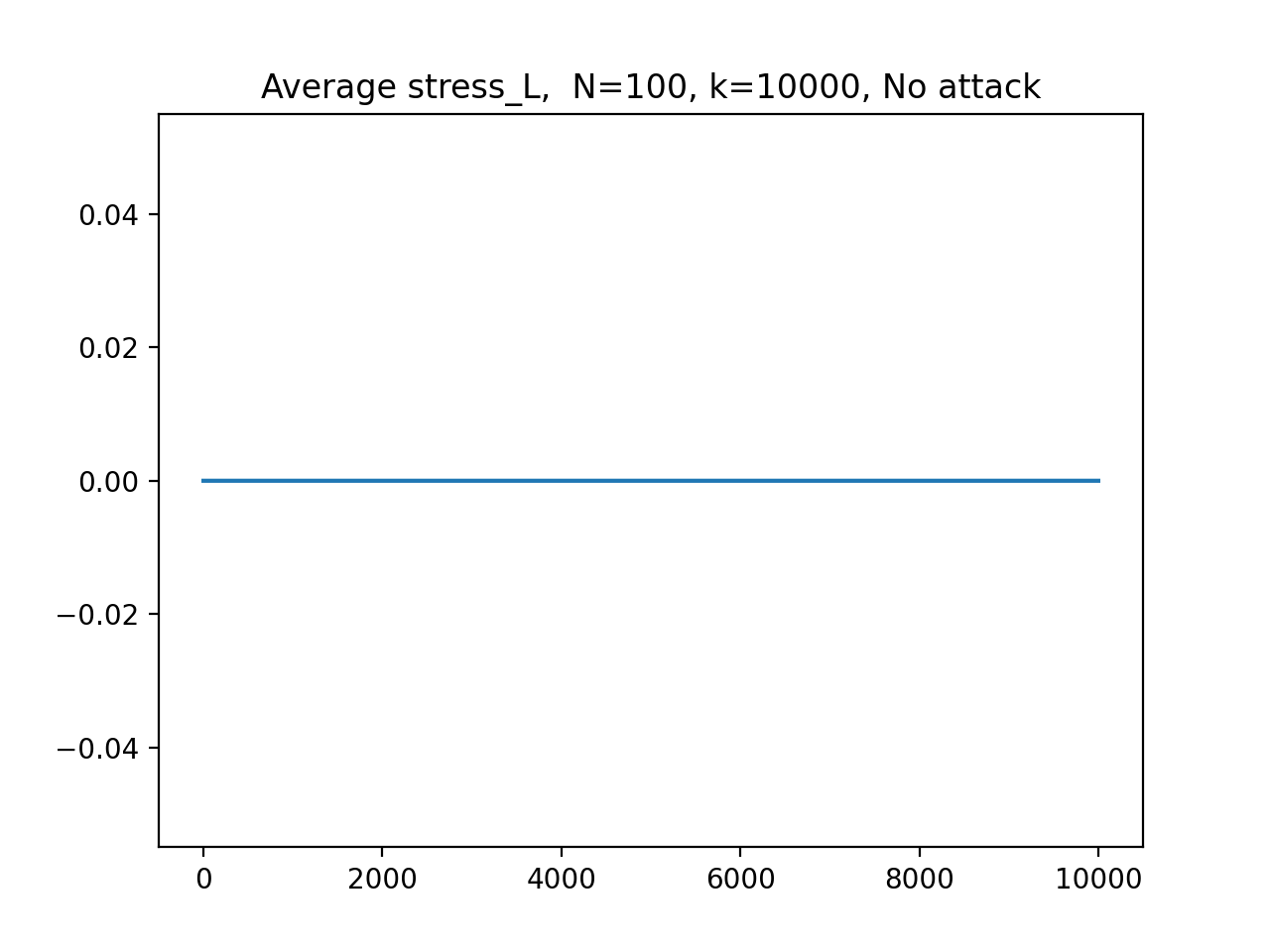}
\includegraphics[width=0.32\textwidth]{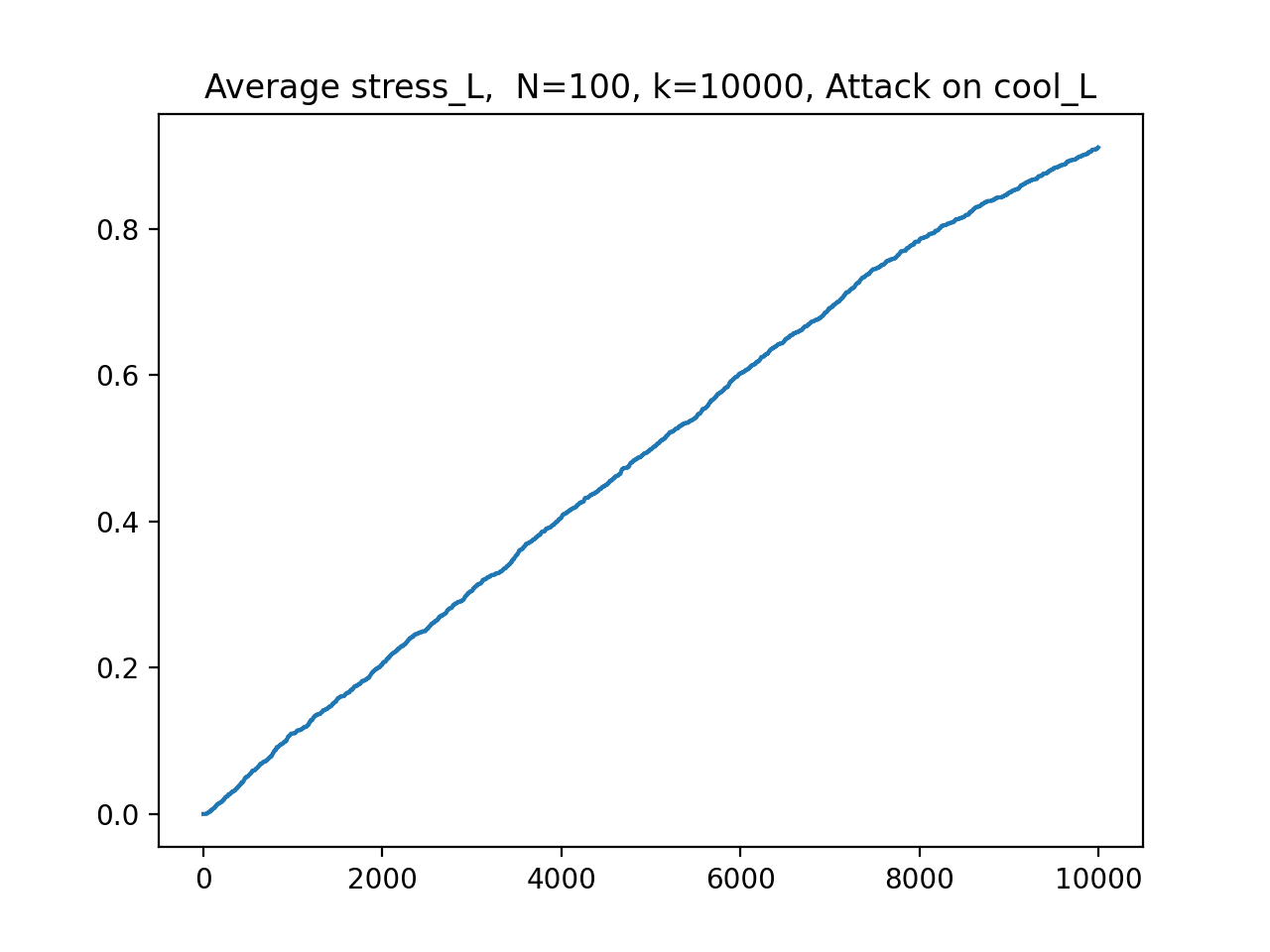}
\includegraphics[width=0.32\textwidth]{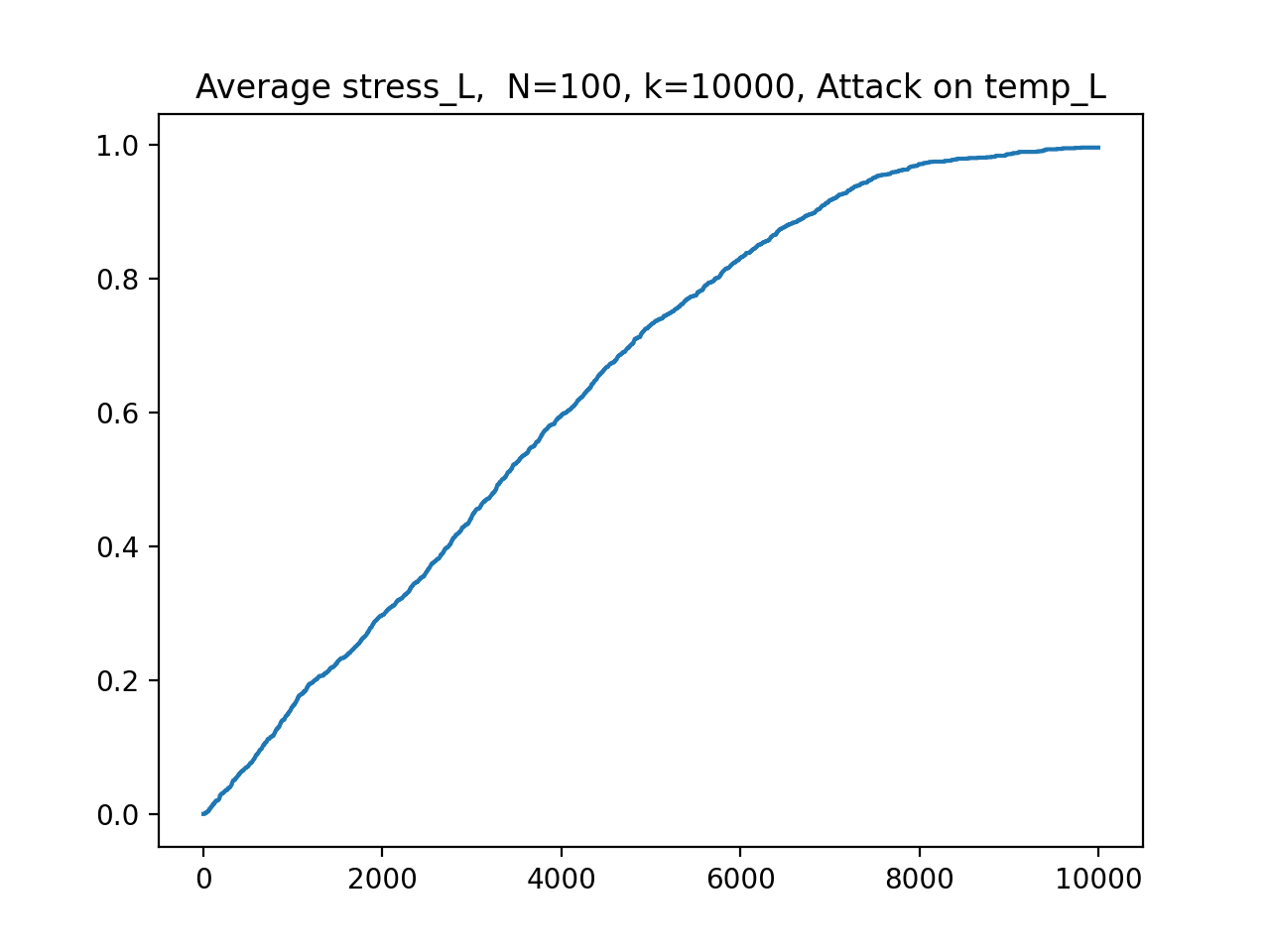}
\caption{Average temperature (sensor $\mathit{temp\_L}$) and stress (variable $\mathit{stress\_L}$) for left engine running in three scenarios: no attack, attack on actuator $\mathit{cool\_L}$, attack on sensor $\mathit{temp\_L}$.\\
Simulation settings: $k=10000$ steps, $N=100$ runs.\\
Initial sensor/actuator settings:
$\mathit{temp\_L}=95$,
$\mathit{cool\_L}=\mathsf{off}$,
$\mathit{speed\_L}=\mathsf{half}$.\\
Values assigned to constants:
$\mathsf{AC}=1.8$,
$\mathsf{TF}=0.4$.}%
\label{fig:average_temp}
\end{figure}

%==================================================
%=================================================

\subsection{Analysis of the engine system}

Our aim is to use evolution metrics in order to study the impact of the attacks.
This requires to compare the evolution sequences of the genuine system and those of the system under attack.
The two approaches to the quantification of impact of attacks outlined above, consisting in evaluating only damages caused by attacks or both damages and alert signals, will be realised by employing different penalty functions.

We start with quantifying the impact of cyber-physical attacks by focusing on the runtime values of \emph{false negatives} and \emph{false positives}.
Intuitively, if we focus on a single engine $\mathsf{Eng\_X}$, false negatives and false positives represent the average \emph{effectiveness}, and the average \emph{precision}, respectively, of the IDS modelled by $\mathsf{IDS\_X}$ to signal through channel $\mathit{ch\_warning\_X}$ that the engine is under stress.
In our setting, we can add two variables $\mathit{fn\_X}$ and $\mathit{fp\_X}$ that quantify false negatives and false positives depending on $\mathit{stress\_X}$ and
$\mathit{ch\_warning\_X}$.
Both these variables are updated by $\E$.
In detail, $\E$ acts on $\mathit{fn\_X}$ and $\mathit{fp\_X}$ as follows:
\[
\begin{array}{lll}
\mathit{fn\_X}(\tau+1) & ={} & \frac{\tau * \mathit{fn\_X}(\tau) +
\max(0,\mathit{stress\_X(\tau)}-\mathit{ch\_warning\_X(\tau)})}{\tau+1}
\\[1.0 ex]
\\
\mathit{fp\_X}(\tau+1) & ={} & \frac{\tau * \mathit{fp\_X}(\tau) +
\max(0,\mathit{ch\_warning\_X(\tau)}-\mathit{stress\_X}(\tau))}{\tau+1}
\end{array}
\]
where $\mathit{ch\_warning\_X} \in \{\mathsf{hot},\mathsf{ok}\}$ is viewed as a real by setting $\mathsf{hot}=1$ and $\mathsf{ok}=0$.
In this way, as reported in~\cite{LMMT21}, $\mathit{fn\_X}$ and $\mathit{fp\_X}$ carry the average value, in probability and time, of the difference between the value of the stress of the system and of the warning raised by the IDS\@.

False negatives and false positives for the same simulations described in Figure~\ref{fig:average_temp} are reported in
Figure~\ref{fig:average_fp_fn_left}.

\begin{figure}[tbp]
\centering
\includegraphics[width=0.32\textwidth]{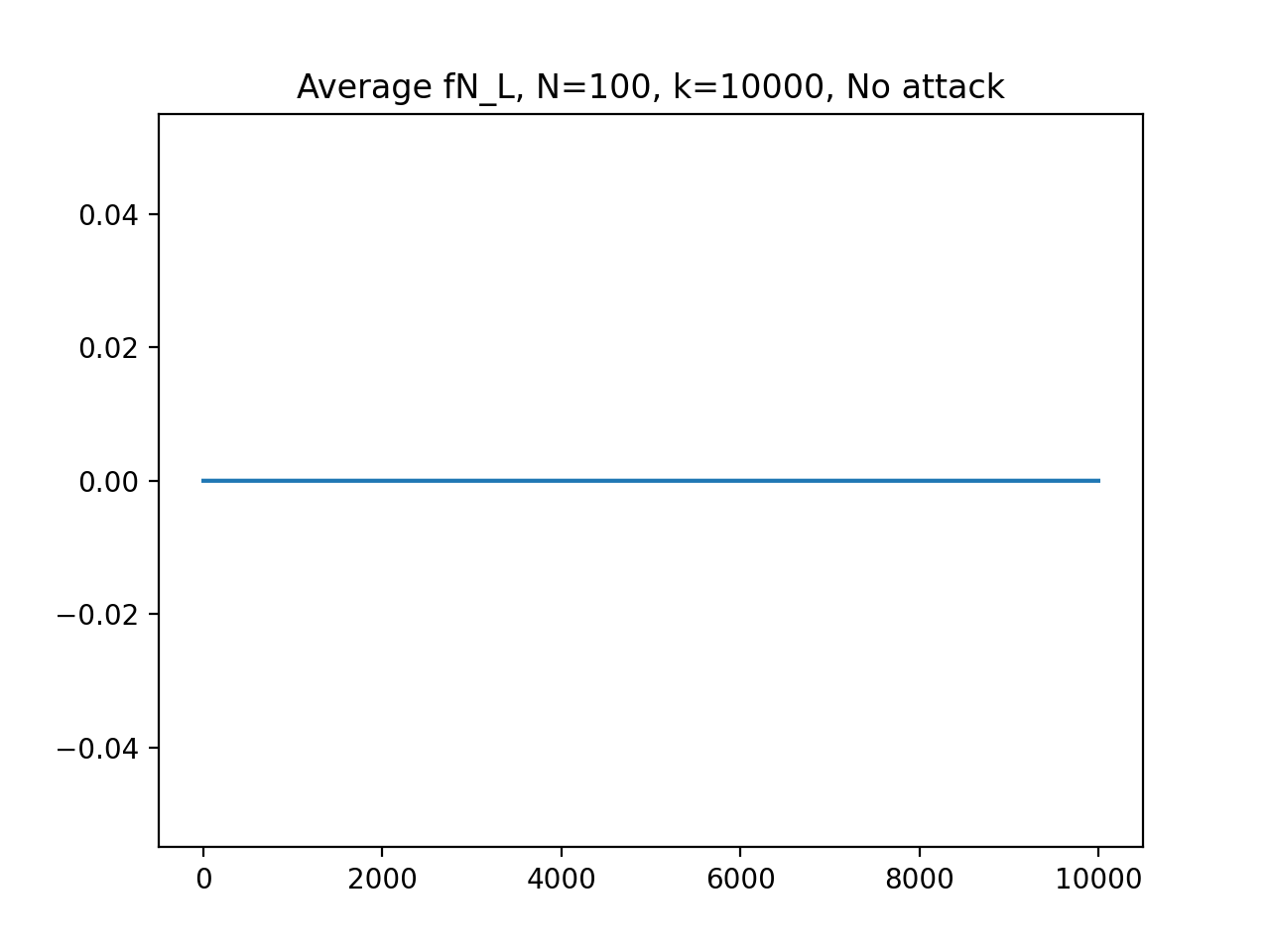}
\includegraphics[width=0.32\textwidth]{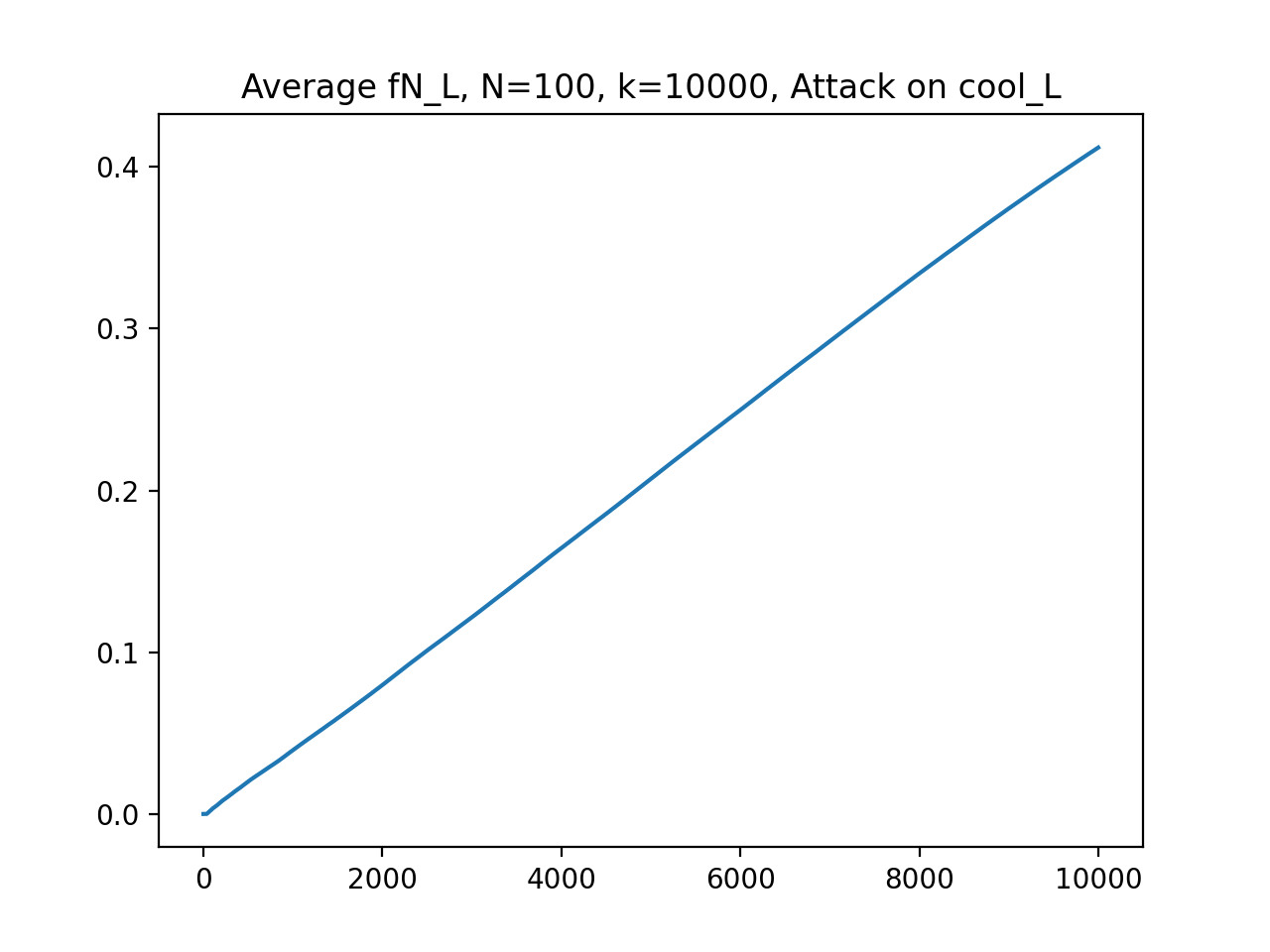}
\includegraphics[width=0.32\textwidth]{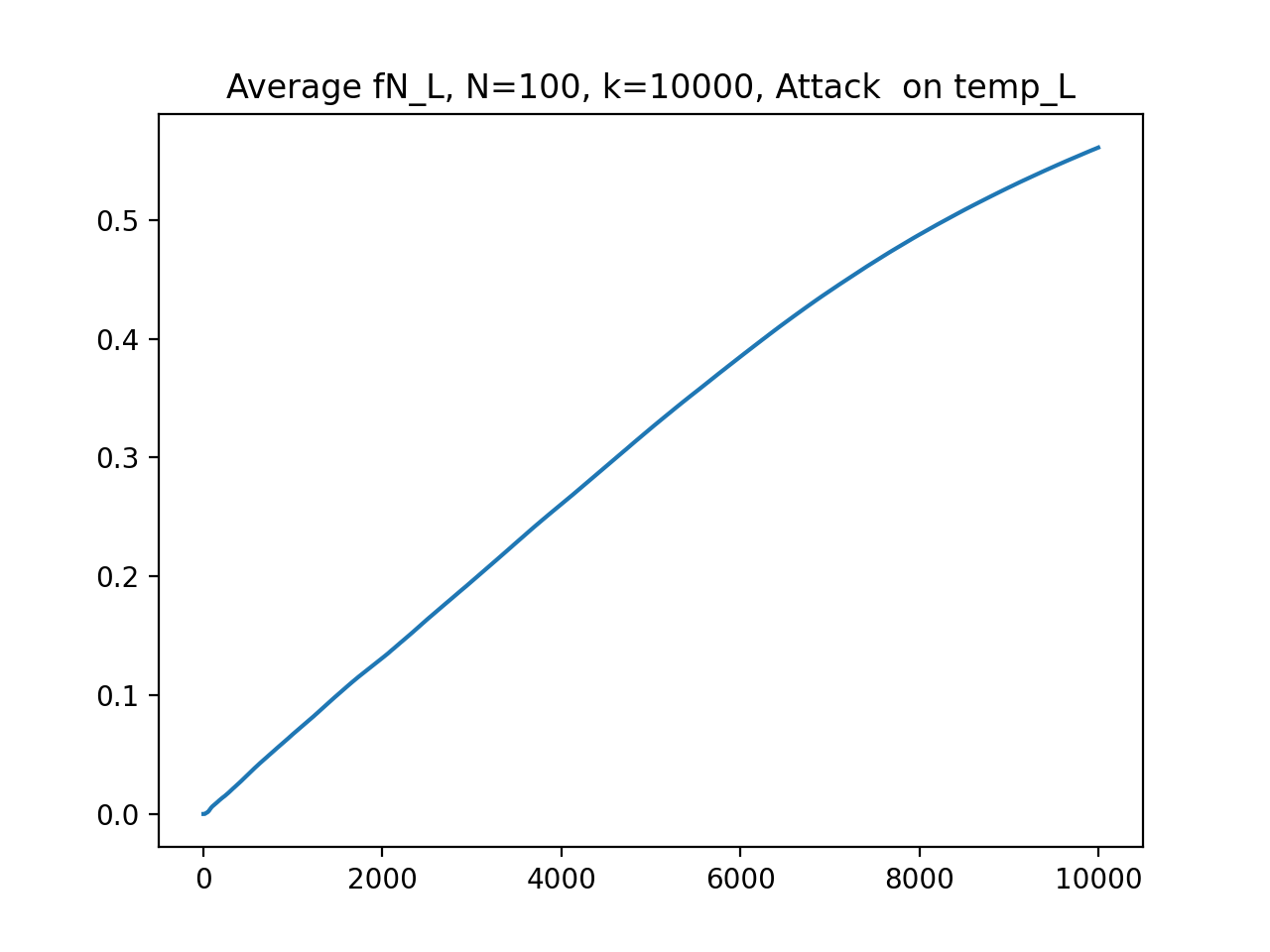}
\includegraphics[width=0.32\textwidth]{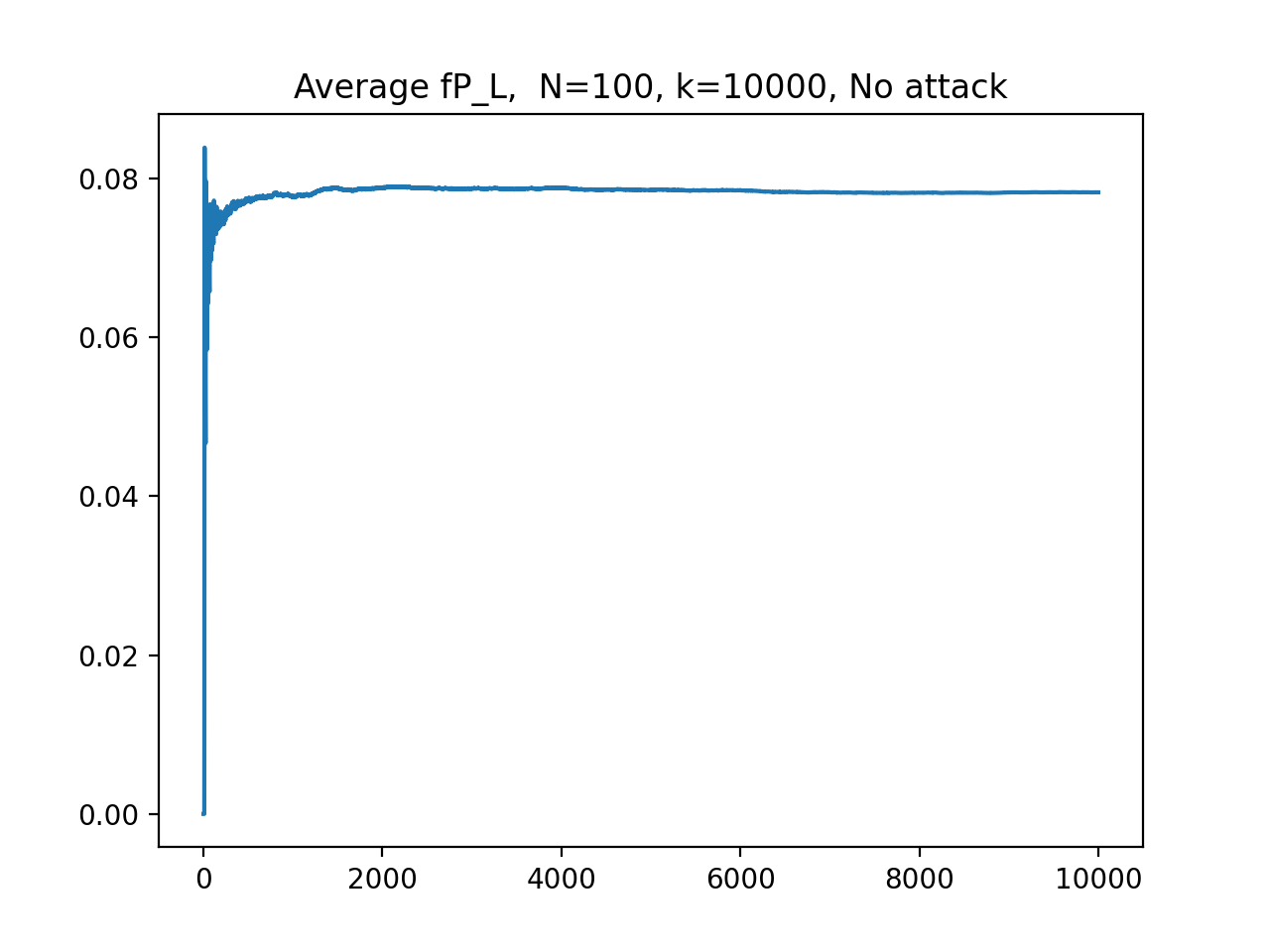}
\includegraphics[width=0.32\textwidth]{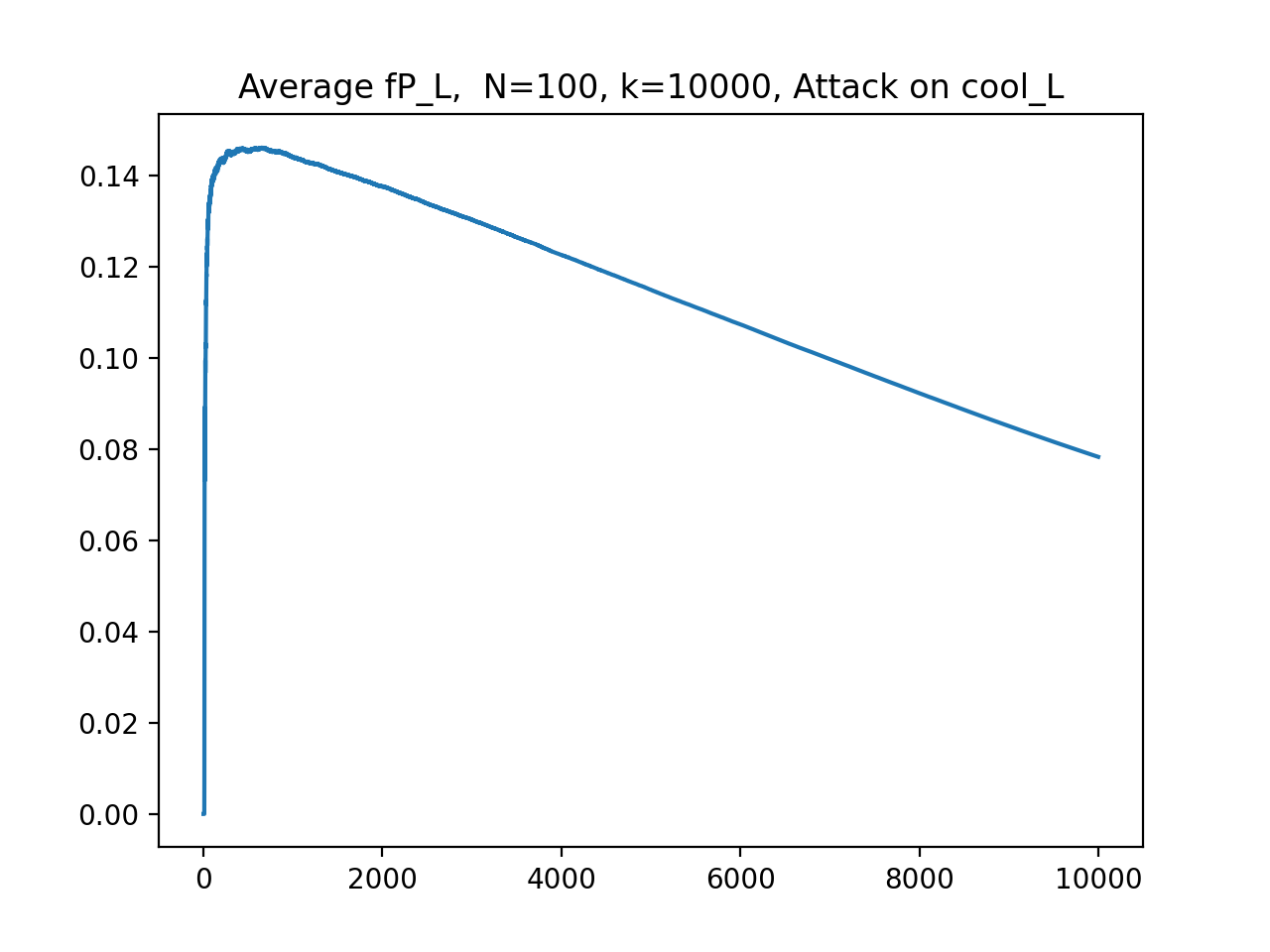}
\includegraphics[width=0.32\textwidth]{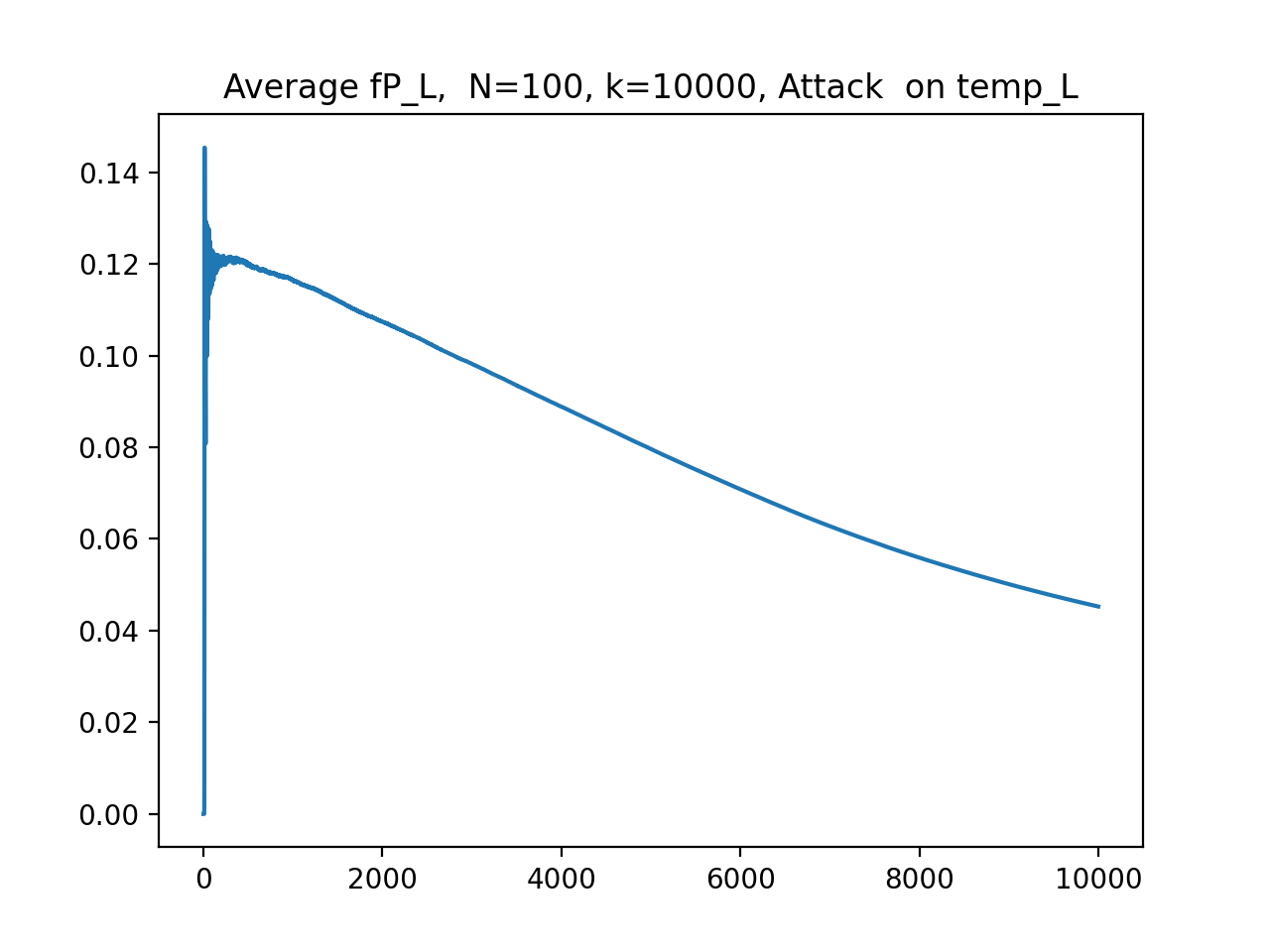}
\caption{Average false negatives (variable $\mathit{fn\_L}$) and false positives (variable $\mathit{fp\_L}$) for the simulations of Figure~\ref{fig:average_temp}.}%
\label{fig:average_fp_fn_left}
\end{figure}

Clearly, if we take the value of variable $\mathit{fn\_X}$  (resp. $\mathit{fp\_X}$) as the value returned by our penalty function, we can quantify how worse $\mathit{Eng\_X}$ under attack behaves with respect to the genuine version of $\mathit{Eng\_X}$, where the evaluation of these systems is done by taking into account their false negatives (resp.\ false positives).
Notice that other approaches are possible.
For instance, by defining a penalty function returning a convex combination of the values of $\mathit{fn\_X}$ and $\mathit{fp\_X}$, systems are evaluated by suitable weighting their false negatives and false positives.

In Figure~\ref{fig:distance_attack_left} we report the distances between the genuine system $\mathsf{Eng\_L}$ and the compromised systems $\mathsf{Eng\_L} \cmerge \mathsf{Att\_Act\_L}$ and $\mathsf{Eng\_L} \cmerge \mathsf{Att\_Sen\_L}$ in terms of false negatives  and false positives.
The settings for these simulations are the same of those used for the simulations described in Figure~\ref{fig:average_temp}.

\begin{figure}[tbp]
\centering
\begin{subfigure}{0.42\textwidth}
\includegraphics[width=1.00\textwidth]{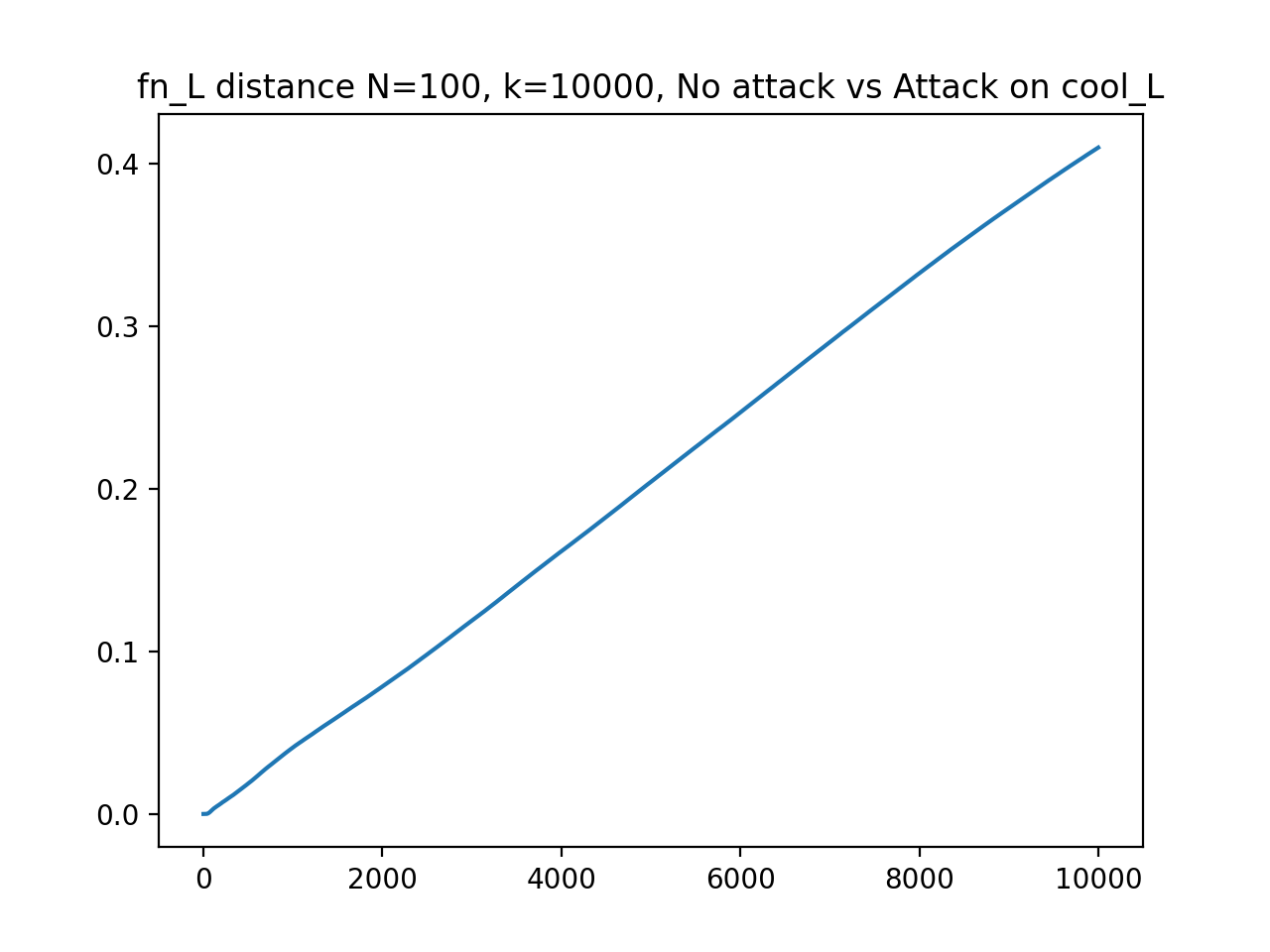}
\caption{$\mathit{fn\_L}$ distance, attack on $\mathsf{cool\_L}$}%
\label{fig_fn_fp_engL_01}
\end{subfigure}
\hfill
\begin{subfigure}{0.42\textwidth}
\includegraphics[width=1.00\textwidth]{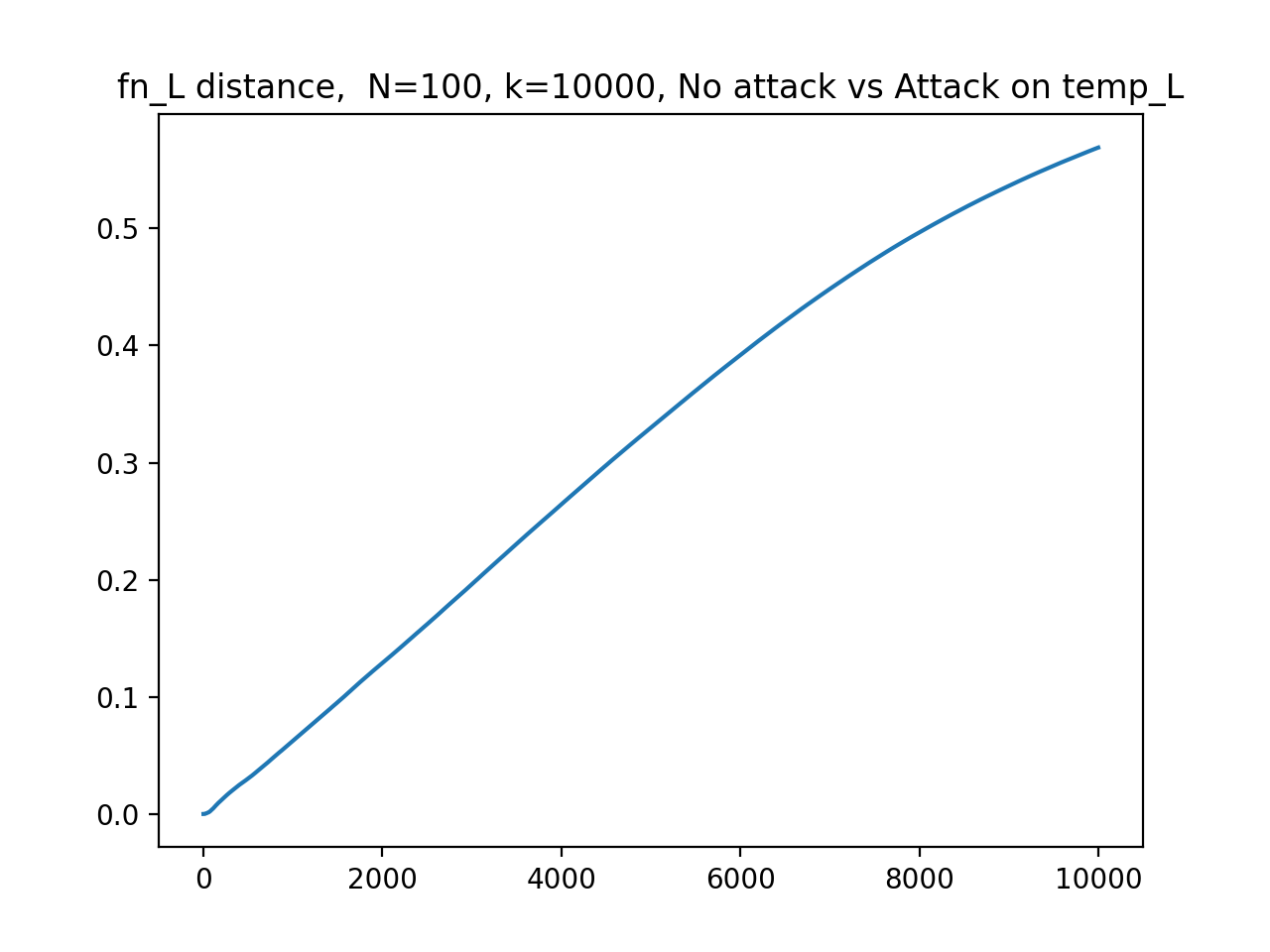}
\caption{$\mathit{fn\_L}$ distance, attack on $\mathsf{temp\_L}$}%
\label{fig_fn_fp_engL_02}
\end{subfigure}
\\
\begin{subfigure}{0.42\textwidth}	\includegraphics[width=1.0\textwidth]{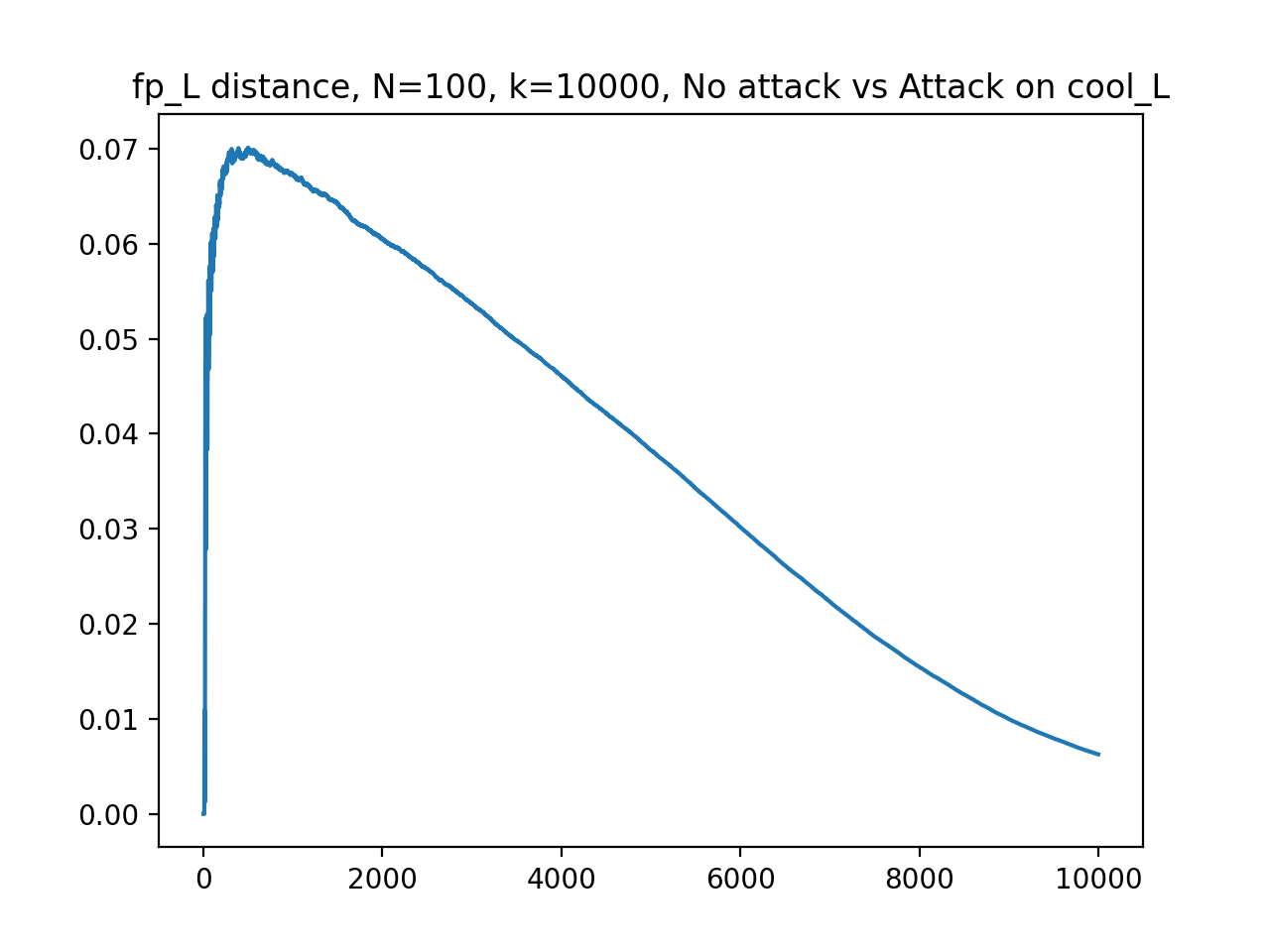}
\caption{$\mathit{fp\_L}$ distance, attack on $\mathsf{cool\_L}$}%
\label{fig_fn_fp_engL_03}
\end{subfigure}
\hfill
\begin{subfigure}{0.42\textwidth}	\includegraphics[width=1.00\textwidth]{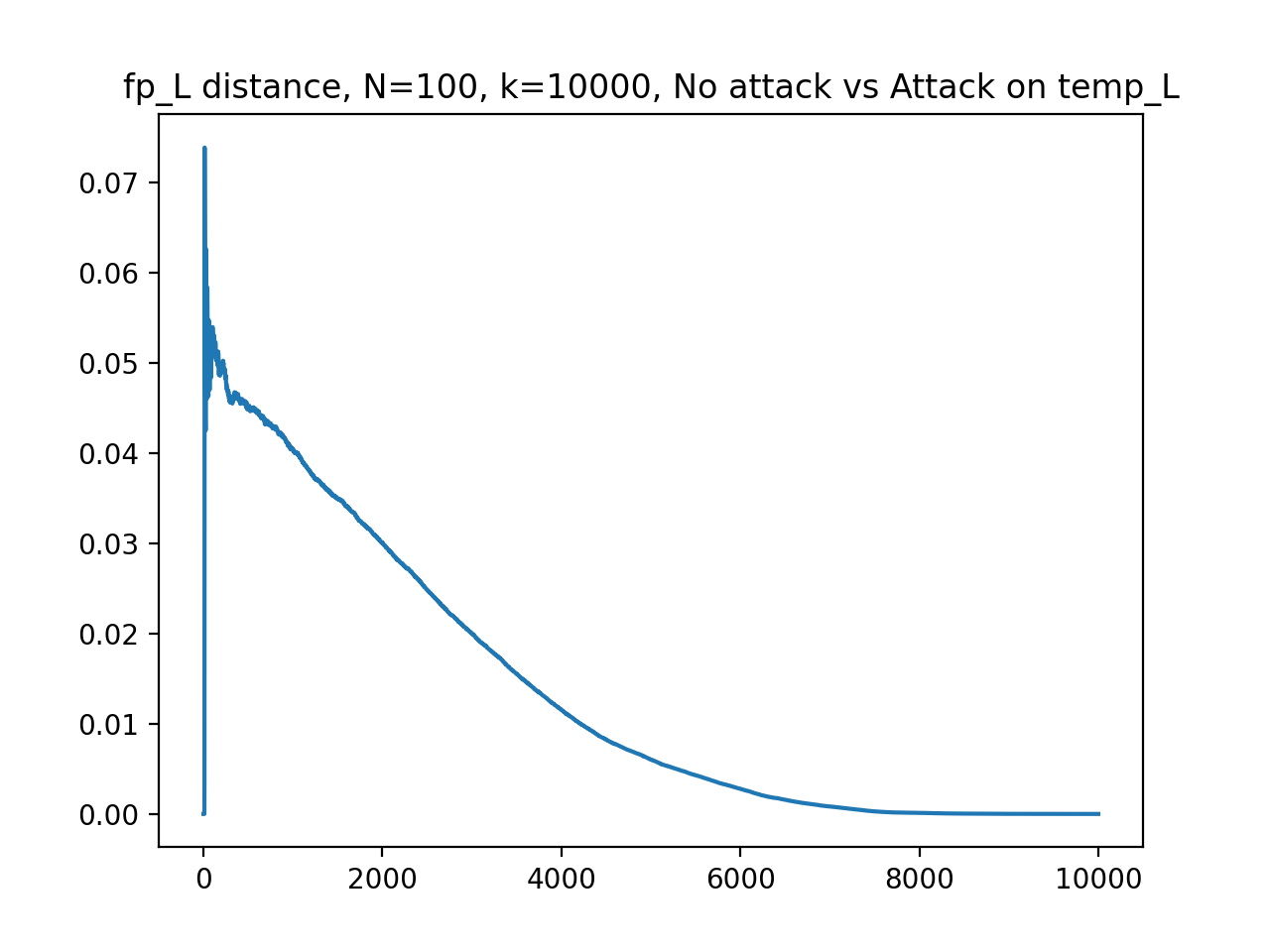}
\caption{$\mathit{fp\_L}$ distance, attack on $\mathsf{temp\_L}$}%
\label{fig_fn_fp_engL_04}
\end{subfigure}
\caption{False negative and false positive distance between genuine $\mathsf{Eng\_L}$ and $\mathsf{Eng\_L}$ under actuator or sensor attack, for the simulations of Figure~\ref{fig:average_temp}.}%
\label{fig:distance_attack_left}
\end{figure}

If we move to the whole system, the false negatives and false positives are carried by variables $\mathit{fn}$ and $\mathit{fp}$, that are updated by $\E$ as follows:
\[
\begin{array}{lll}
\mathit{fn}(\tau +1) & = &\frac{\tau * \mathit{fn}(\tau)  +
\min(1,
(0.7 * \max(0,\mathit{stress\_L(\tau)}-\mathit{l})
+
0.7 * \max(0,\mathit{stress\_R(\tau)}-\mathit{r})))
}{\tau+1}
\\[0.5 ex]
\mathit{fp}(\tau +1) & = &\frac{\tau * \mathit{fp}(\tau)   +
\min(1,
(0.7 * \max(0,\mathit{l}-\mathit{stress\_L}(\tau))
+
0.7 * \max(0,\mathit{r}-\mathit{stress\_R}(\tau))))}{\tau+1}
\end{array}
\]
where
$l = \begin{cases} 1 & \text{if } \mathit{ch\_alarm}(\tau) \in \{\mathsf{both},\mathsf{left}\}
\\
0 & \text{if } \mathit{ch\_alarm}(\tau) \in \{\mathsf{none},\mathsf{right}\}
\end{cases}$
and
$r = \begin{cases} 1 & \text{if } \mathit{ch\_alarm}(\tau) \in \{\mathsf{both},\mathsf{right}\}
\\
0 & \text{if } \mathit{ch\_alarm}(\tau) \in \{\mathsf{none},\mathsf{left}\}
\end{cases}$.

Here, false negatives and false positives represent the average effectiveness, and the average precision, respectively, of the supervisors modelled by $\mathsf{SV}$ to signal through channel $\mathit{ch\_alarm}$ that the system is under stress.
We remark that the values 0.7 are the values chosen in~\cite{LMMT21} to quantify the \emph{security clearance} of the two engines, where, in general, the clearance of a CPS component quantifies the importance of that component from a security point of view.
In this case, the two engines have the same criticality and, consequently, the same security clearance.

\begin{figure}[tbp]
\centering
\includegraphics[width=0.42\textwidth]{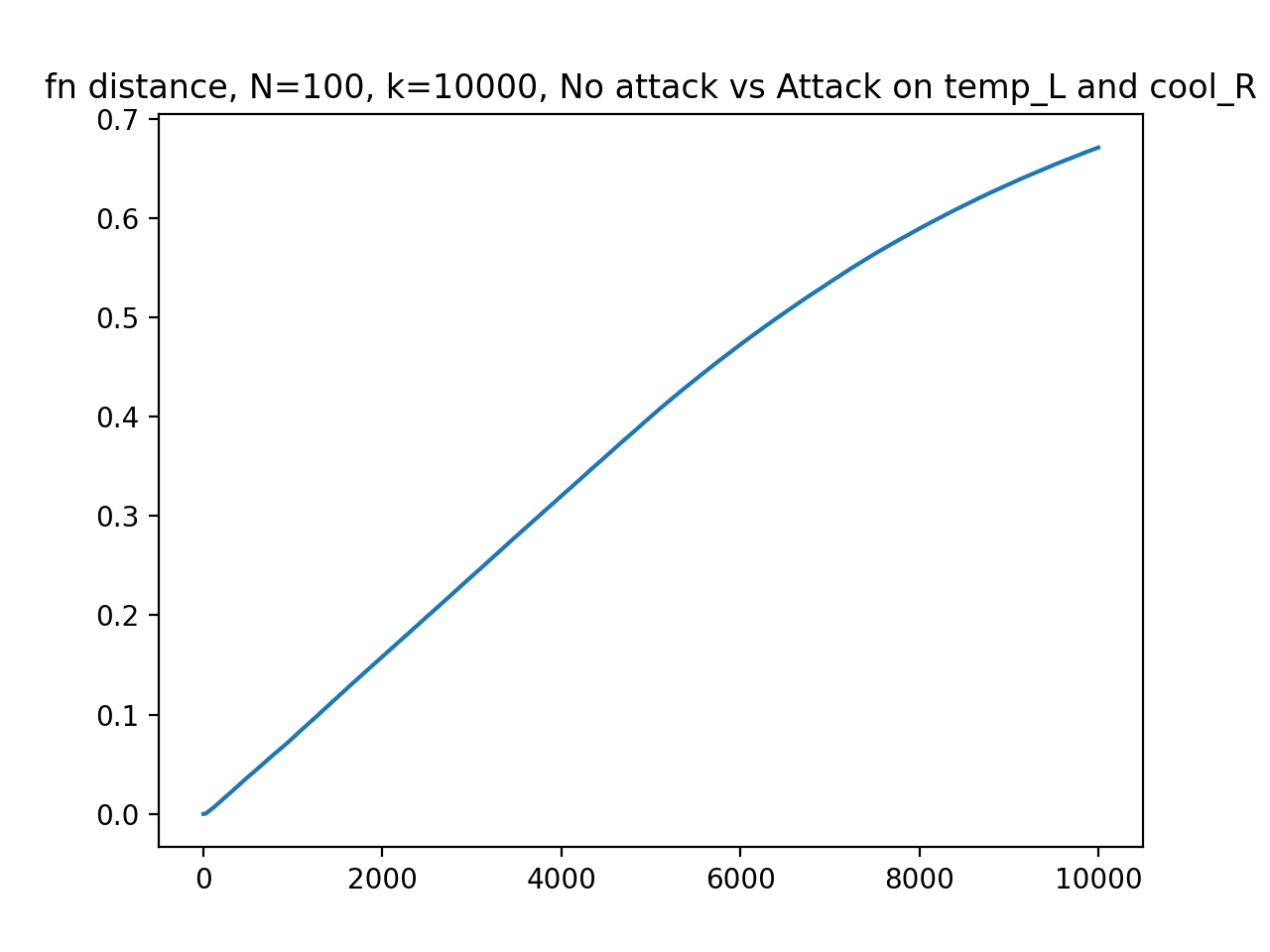}
\hfill
\includegraphics[width=0.42\textwidth]{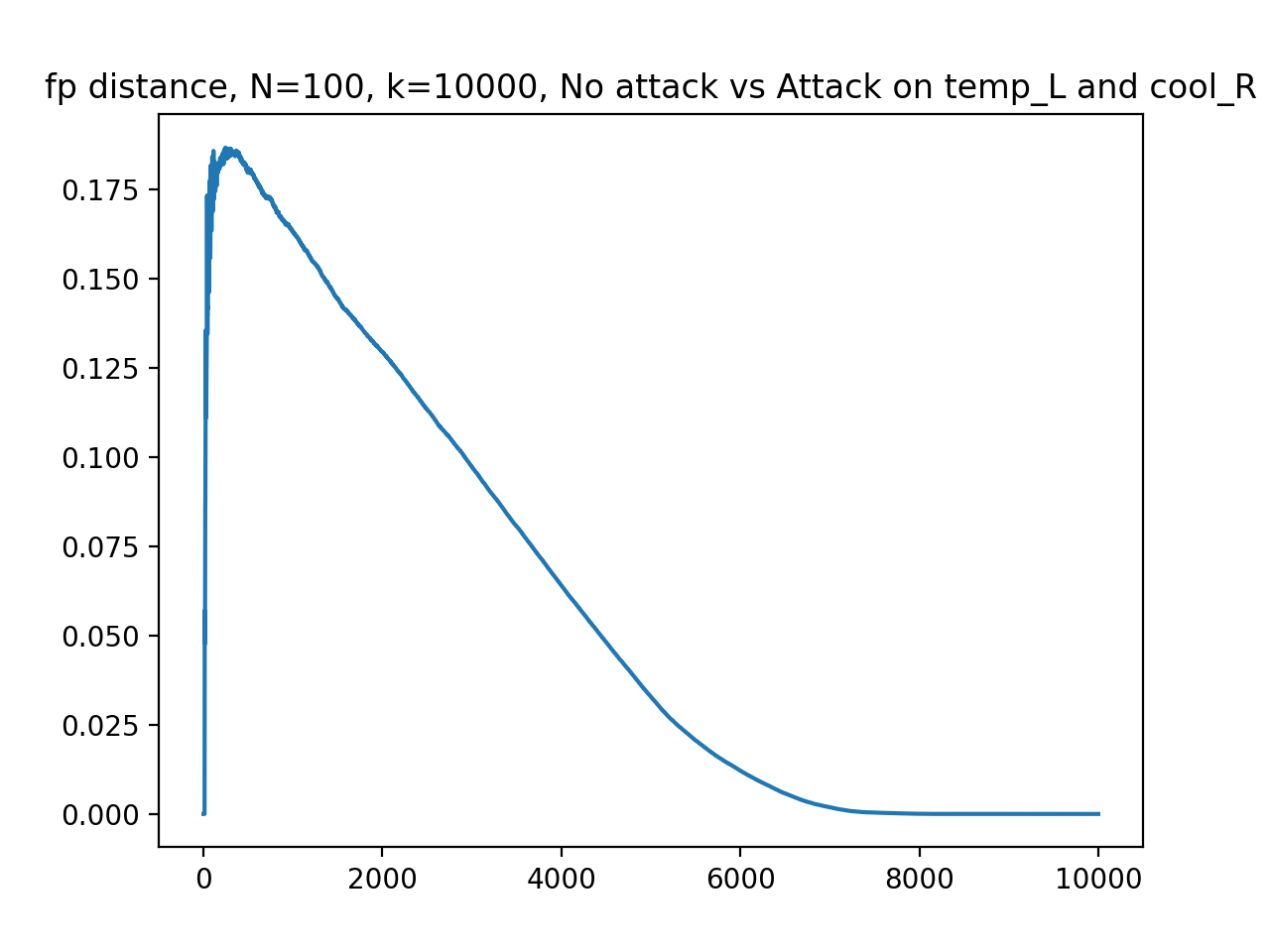}
\caption{False negative and false positive distance between genuine $\mathsf{Eng\_Sys}$ and $\mathsf{Eng\_Sys}$ under attack on left sensor and right actuator.}%
\label{fig:distance_attack_mix}
\end{figure}

In Figure~\ref{fig:distance_attack_mix}  we report the distance, in terms of false negatives and false positives, between the system $\mathsf{Eng\_Sys}$ and the compromised system $\mathsf{Eng\_Sys} \cmerge \mathsf{Att\_Act\_L} \cmerge \mathsf{Att\_Sen\_R}$ that is subject to both an attack on the actuator $\mathit{cool\_L}$ of the left engine and an attack on the sensor $\mathit{temp\_R}$ of the right engine.
Again, the settings for these simulations are the same of those used for simulations described in Figure~\ref{fig:average_temp}.

Let us focus now on the third attack, namely the one implemented by process $\mathsf{Att\_Saw\_X}$.
Intuitively, we expect a correlation between the length of the attack window and the impact of the attack.
In order to study such a relation, we may proceed as follows:
\begin{enumerate*}[(i)]
\item We define a penalty function $\rho$ so that its value depends on the length of the attack window $[\mathit{AW_l},\mathit{AW_r}]$.
Actually, we may define $\rho$ to return $\sfrac{(\mathit{AW_r} - \mathit{AW_l})}{\mathsf{AWML}}$.
In particular, if we consider the genuine system then $\rho$ returns $0$, since $\mathit{AW_l}=0=\mathit{AW_r}$.
\item We consider a penalty function $\rho'$ expressing the impact of the attack.
\item We estimate the $(\rho,\rho',[0,0],0,\eta_1,\eta_2)$-robustness of the genuine system.
\end{enumerate*}
More in detail, we may sample $M$ variations of the system at $\rho$-distance from the genuine one bounded by $\eta_1$.
This means sampling $M$ versions of $\mathsf{Eng\_X \cmerge Att\_Saw\_X}$ characterised by attack windows of length bounded by $\eta_1 \cdot \mathsf{AWML}$.
Then, by applying our simulations, we can estimate how worse these compromised systems behave with respect to the genuine system $\mathsf{Eng\_X}$, according to the impact expressed by $\rho'$.

In Figure~\ref{fig:robustness_tempfake_vs_fn} we report the results of such an estimation for $\rho'$ coinciding with $\mathit{fn\_L}$, $[\mathit{AW_l},\mathit{AW_r}] = [0,aw]$ an attack window of length $aw$ starting at first instant, $\mathsf{AWML}$ set to 1000 and $\eta_1=0.1$, which means that $aw$ is chosen probabilistically according to a uniform distribution on interval $[0,100]$.

Here we consider three different offset values assigned to $\mathsf{TF}$: 0.4, 0.6, and 0.8.
In all cases, we estimate to have robustness, for values for $\eta_2$ that are $0.012$, $0.05$, and $0.11$, respectively.

\begin{figure}[tbp]
\centering
\begin{subfigure}{0.32\textwidth}
\includegraphics[width=1.00\textwidth]{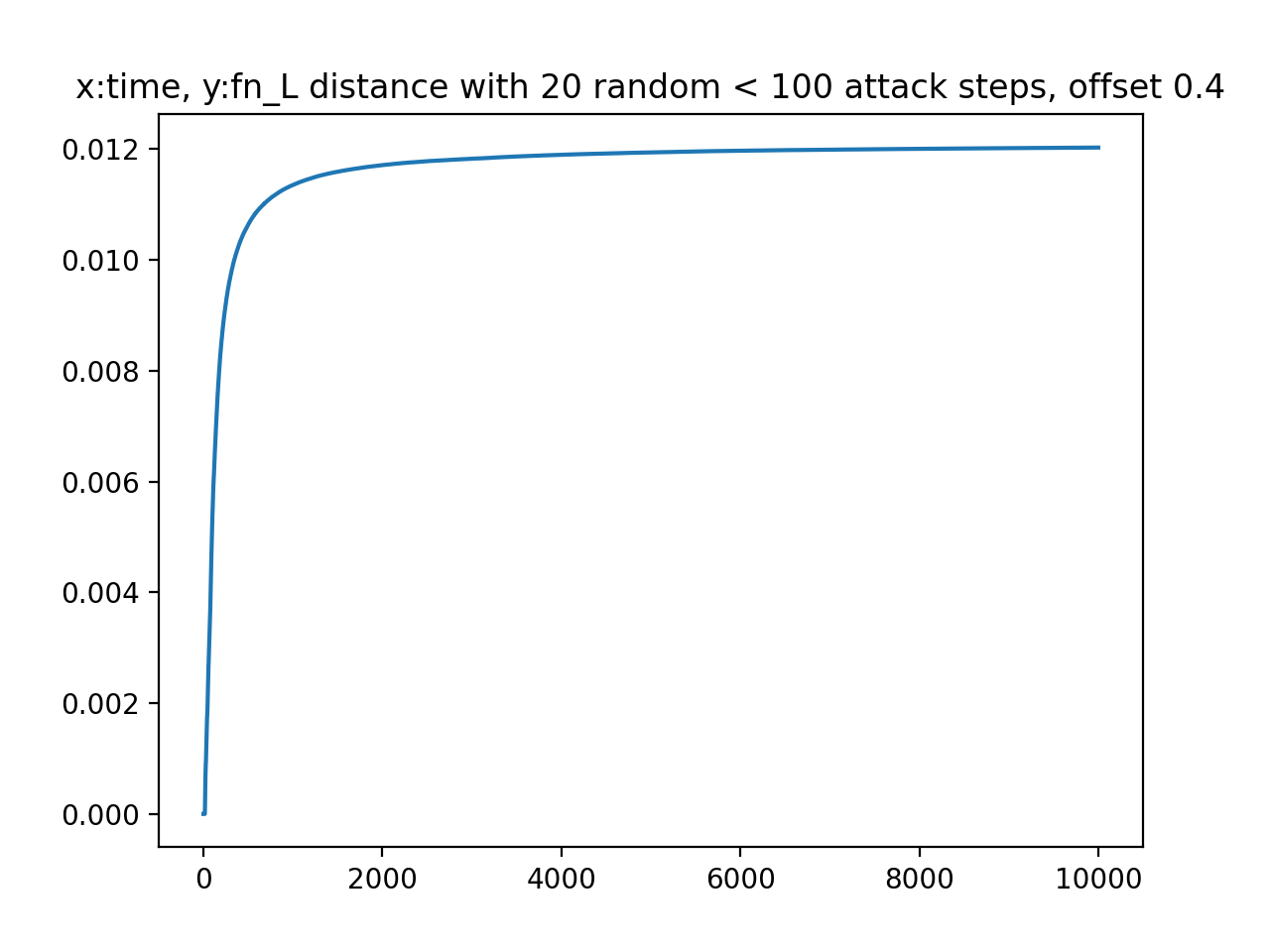}
\caption{$\mathit{TF}=0.4$}%
\label{fig04}
\end{subfigure}
\begin{subfigure}{0.32\textwidth}
\centering
\includegraphics[width=1.00\textwidth]{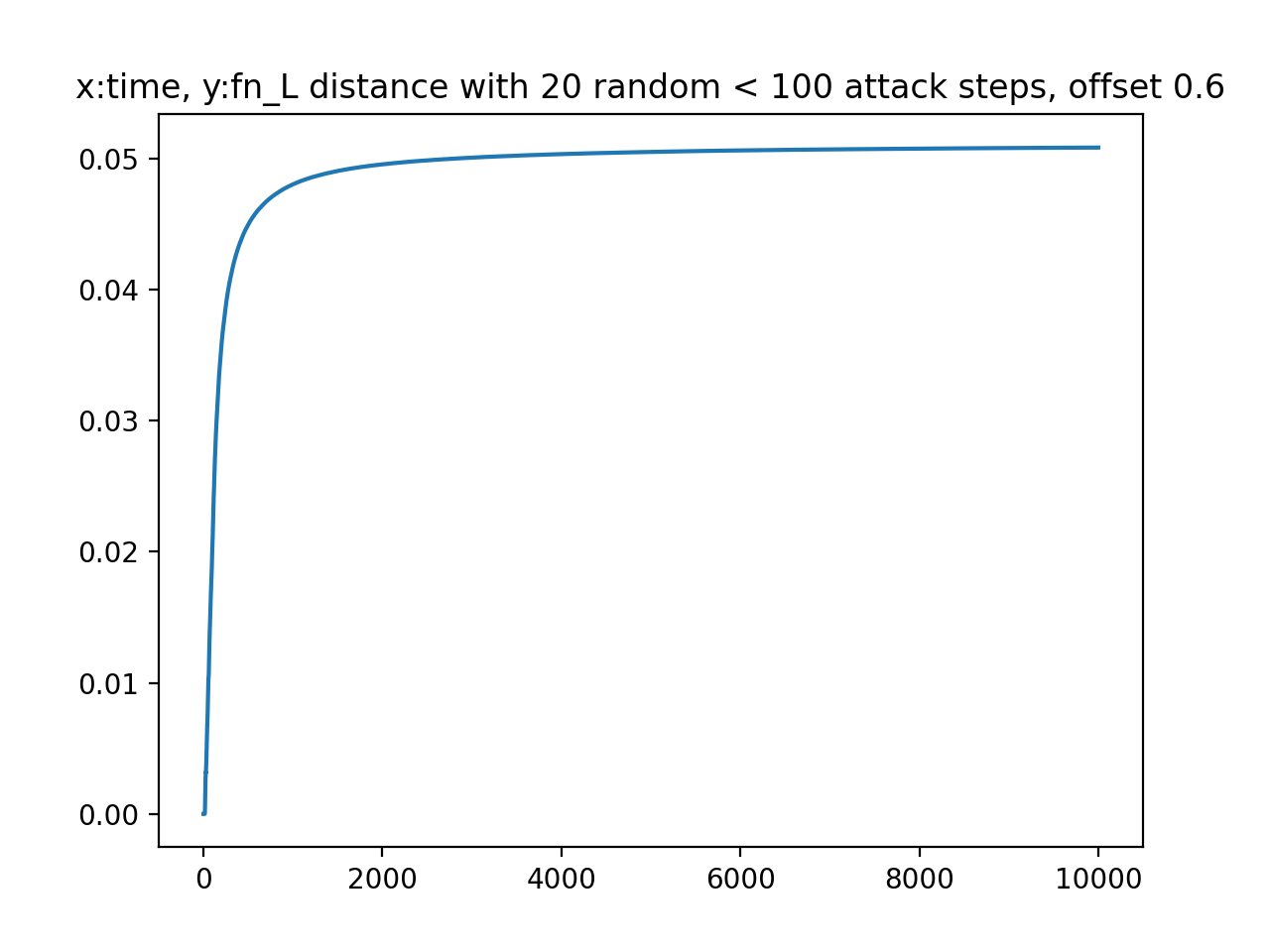}
\caption{$\mathit{TF}=0.6$}%
\label{fig06}
\end{subfigure}
\begin{subfigure}{0.32\textwidth}
\centering
\includegraphics[width=1.00\textwidth]{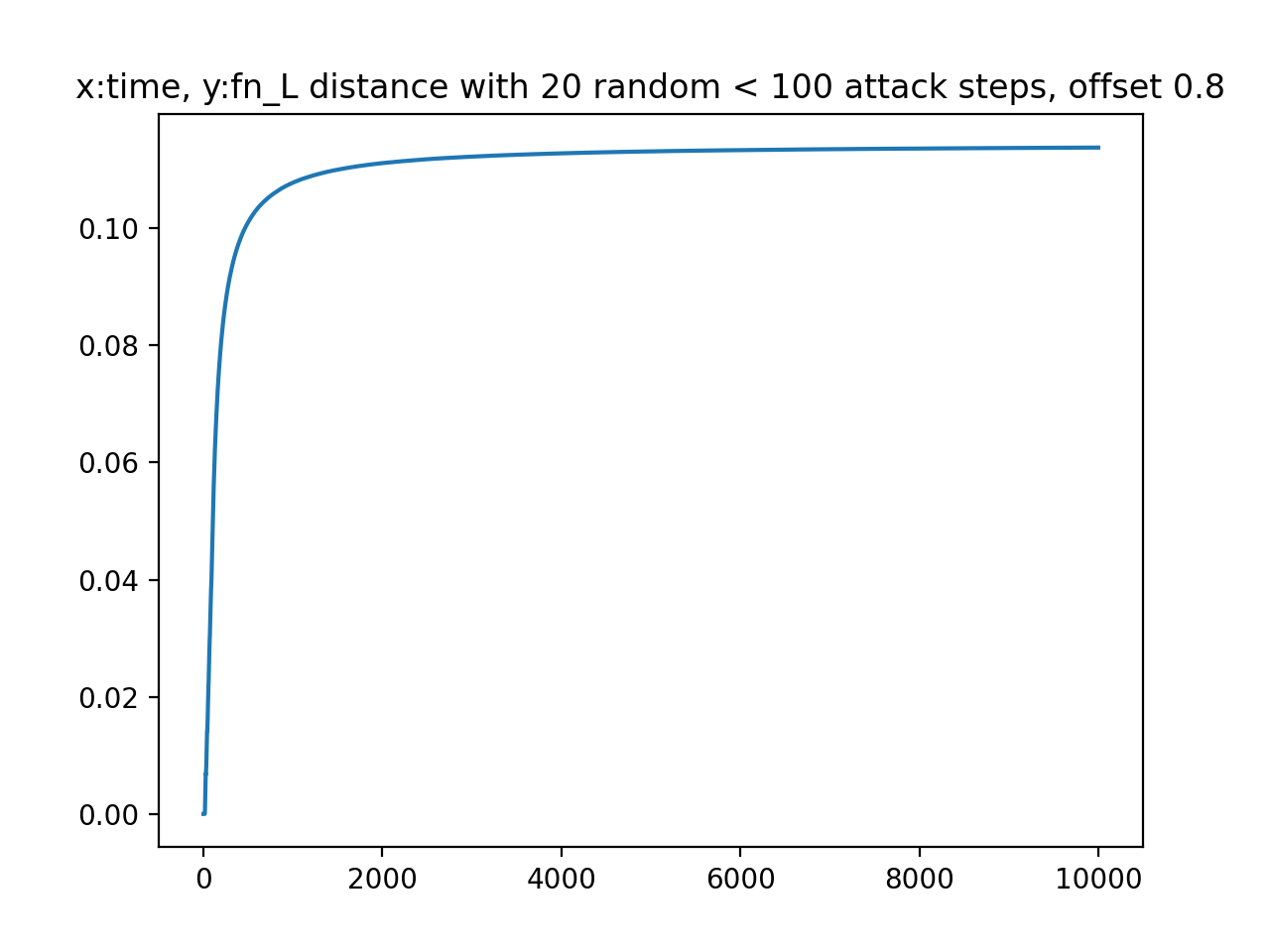}
\caption{$\mathit{TF}=0.8$}%
\label{fig08}
\end{subfigure}
\caption{Estimation of false negative distance between a genuine engine and an engine subject to tampering with sensor $\mathit{temp\_L}$.
The attacker adds a negative offset $-0.4$ (Fig.~\ref{fig04}), or $-0.6$ (Fig.~\ref{fig06}), or $-0.8$ (Fig.~\ref{fig08}) to sensor, on the attack window $[0,aw]$, with $aw$ an integer uniformly distributed in $[0,100]$.
Simulation settings for each offset:
$M=20$ different sampled values for each $aw$,
$N=100$ runs by the genuine engine and
$N*l=1000$ runs by the compromised system for each $aw$,
$k=10000$ steps per run.}%
\label{fig:robustness_tempfake_vs_fn}
\end{figure}

Now, let us change the approach for attack quantification and assume that only the inflicted overstress is considered, without taking into account the raised alarm signals.
In particular, this overstress is quantified at a given computation step $n$ as the ratio between the number of computation steps where the stress condition $|\{k \mid \mathit{p_k} \ge \mathsf{max}\} | > 3$ holds and $n$.
Therefore, the penalty function $\rho'$ is defined now as the ratio between the values of two variables, one counting the number of instants with the stress condition and one counting the total number of steps.
Robustness can be estimated as in the earlier case.
In Figure~\ref{fig:average_stressL} we report the robustness of the left engine $\mathsf{Eng\_L}$.
Here we use the same simulation settings used in Figure~\ref{fig:robustness_tempfake_vs_fn} and the same offset values.
The figure shows that we have robustness for low values of $\eta_2$, which, in essence, tells us that attacks in a given window are unable to generate overstress in the long run.

\begin{figure}[tbp]
\centering
\begin{subfigure}{0.32\textwidth}
\includegraphics[width=1.00\textwidth]{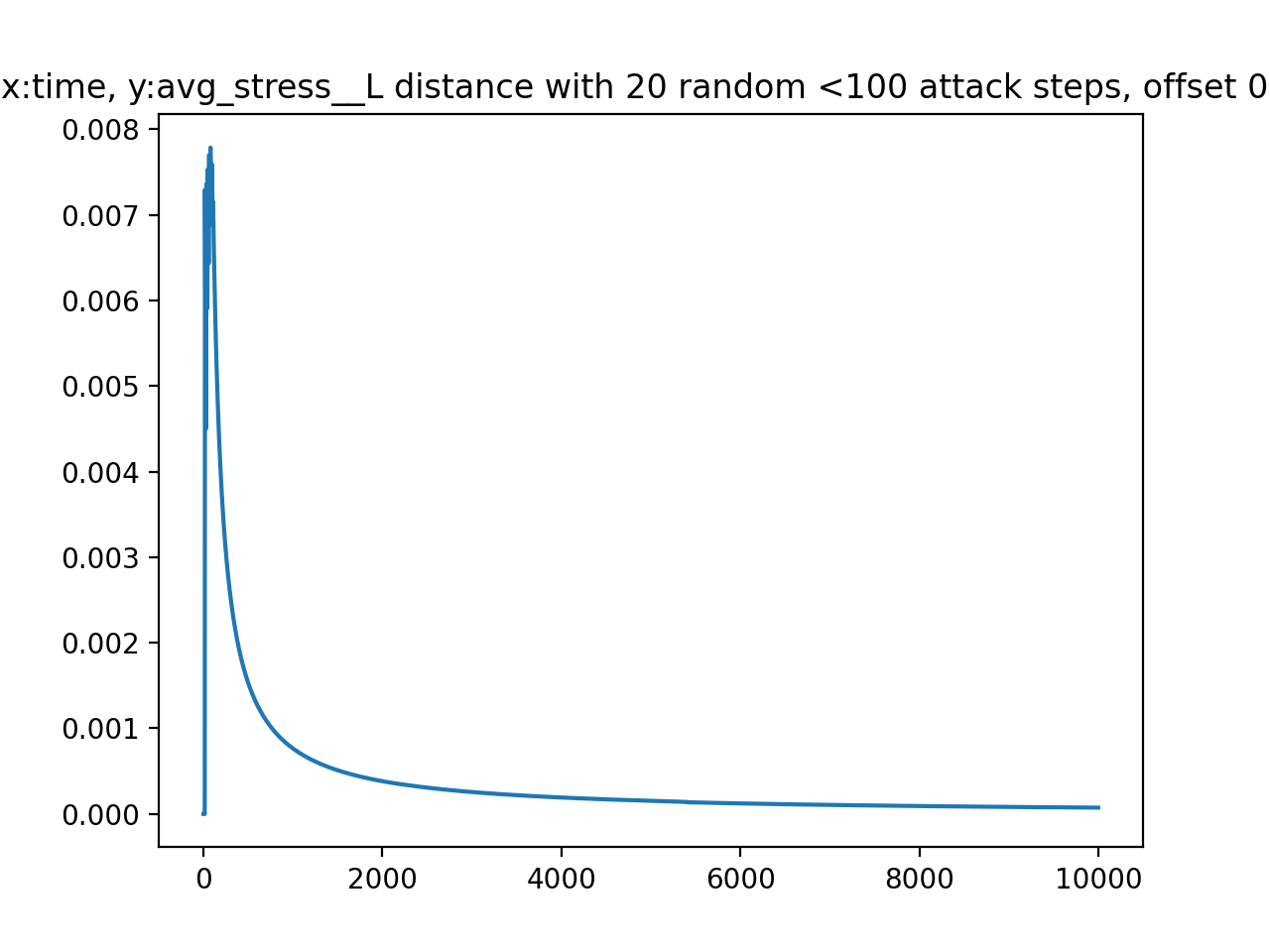}
\caption{$\mathit{TF}=0.4$}%
\label{fig04s}
\end{subfigure}
\begin{subfigure}{0.32\textwidth}
\centering
\includegraphics[width=1.00\textwidth]{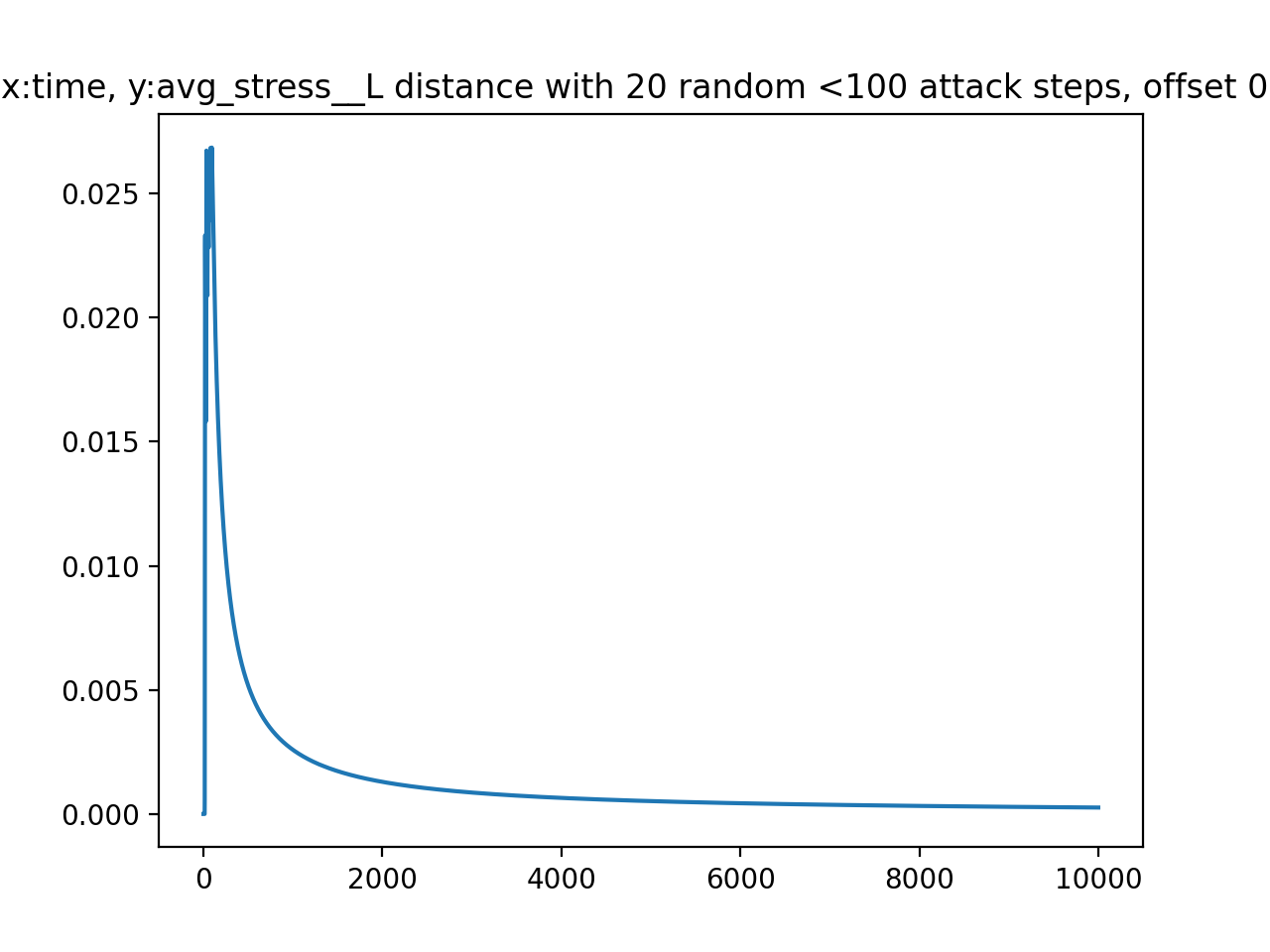}
\caption{$\mathit{TF}=0.6$}%
\label{fig06s}
\end{subfigure}
\begin{subfigure}{0.35\textwidth}
\centering
\includegraphics[width=1.00\textwidth]{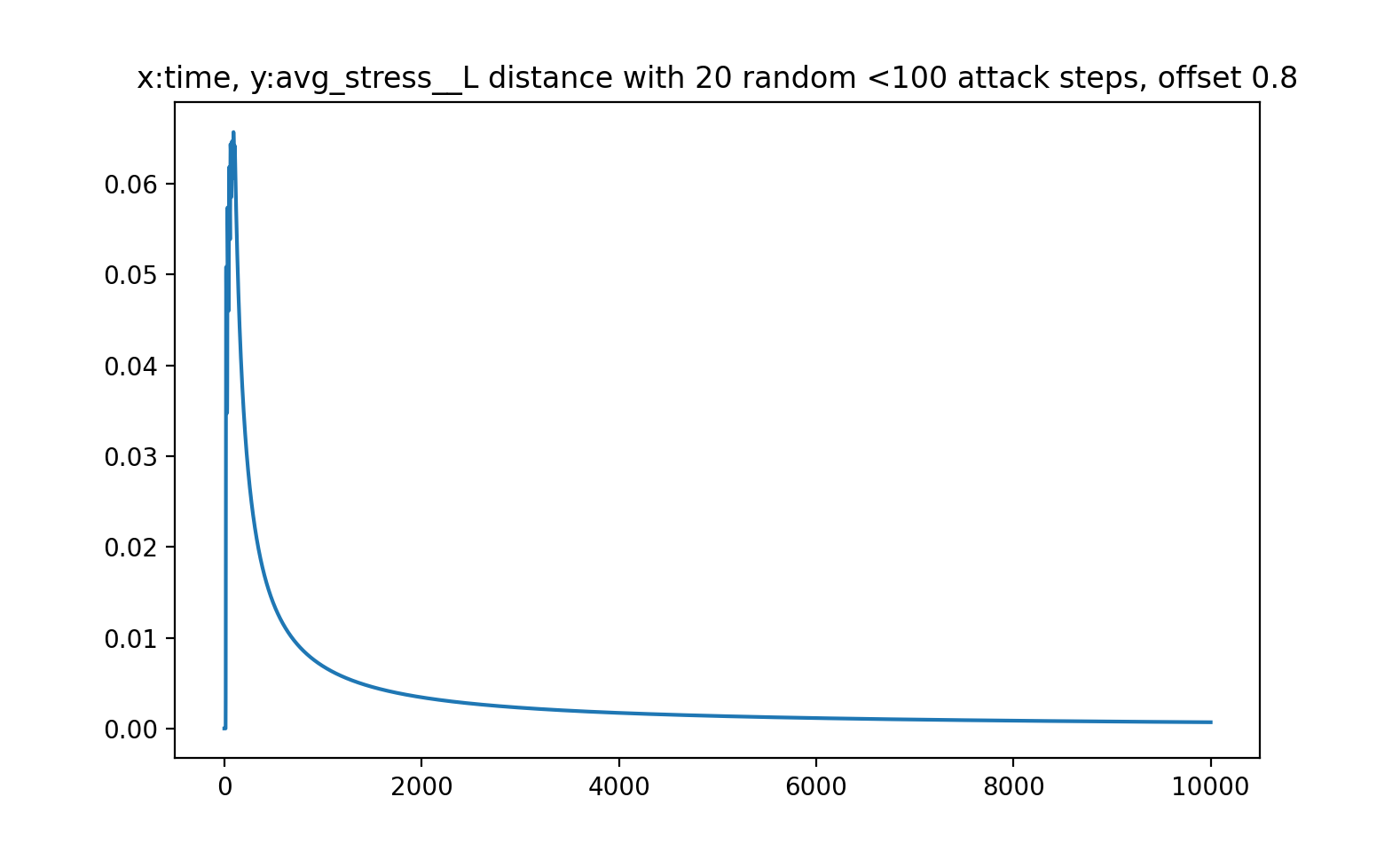}
\caption{$\mathit{TF}=0.8$}%
\label{fig08s}
\end{subfigure}
\caption{Estimation of average stress distance between a genuine engine and an engine subject to tampering with sensor $\mathit{temp\_L}$.
The simulation settings are the same of Figure~\ref{fig:robustness_tempfake_vs_fn}.}%
\label{fig:average_stressL}
\end{figure}

%===============================================
% sec - conclusion
%================================================

\section{Concluding Remarks}%
\label{sec:conclusion}

The notion of robustness is used in different contexts, from control
theory~\cite{ZD98} to biology~\cite{Ki07}.
Commonly, it is meant as the ability of a system to maintain its functionalities against external and internal perturbations.
In the context of CPSs,~\cite{FKP16} individuates five kinds of robustness:
\begin{enumerate*}[(i)]
\item\label{iorob} input/output robustness;
\item robustness with respect to system parameters;
\item robustness in real-time system implementation;
\item\label{unpred-env} robustness due to unpredictable environment;
\item robustness to faults.
\end{enumerate*}
The notions of adaptability and reliability considered in this paper fall in category~\ref{unpred-env}.
In particular, in the example of the engine system the attacks are the source of environment's unpredictability.

We notice that our notions of robustness differ from classical ones like those in~\cite{FP09,DM10}.
Our notions require to compare all behaviours of two different systems, whereas~\cite{FP09,DM10} compare a single behaviour of a single system with the set of the behaviours that satisfy a given property, which is specified by means of a formula expressed in a suitable temporal logic.

Moreover, the generality of our notions of robustness allows us to capture classical properties of linear systems.
For instance, our notion of adaptability can be used to encode the overshoot~\cite{ZNTH19} and settling time~\cite{Do08} of signals.
Indeed, there are substantial differences between them.
Specifically:
\begin{enumerate*}[(i)]
\item adaptability is a property of CPSs in general, whereas overshoot and settling time are referred to signals;
\item adaptability is defined over the \traccione{} of a system, i.e., over all possible trajectories, whereas the other notions are dependency measures of a single signal;
\item using the notion of Definition~\ref{def:adaptability}, the overshoot can be expressed as the parameter $\eta_1$, and the settling time as $\tilde{\tau}$.
\end{enumerate*}
This means that not only our notion of adaptability captures both properties, but it can be used to study their correlation.

As a first step for future research we will provide a simple logic, defined in the vein of \emph{Signal Temporal Logic} (STL)~\cite{MN04}, that can be used to specify requirements on the \tracciones{} of a system.
Our intuition is that we can exploit the \spell metric, and the algorithm we have proposed, to develop a quantitative model checking tool for this type of systems.
Recently we have developed the \emph{Robustness Temporal Logic} (\emph{RobTL})~\cite{CLT23arXiv}, together with its model checker that is integrated in our \emph{Software Tool for the Analysis of Robustness in the unKnown environment} (\emph{\textsc{Stark}})~\cite{CLT23stark}.
This logic allows us to specify temporal requirements on the evolution of distances between the nominal behaviour of a system and its perturbed version, thus including properties of robustness against uncertainties of programs.
However, we plan to propose a temporal logic allowing us to deal with the presence of uncertainties from another point of view: our aim is to express requirements on the \emph{expected} behaviour of the system in the presence of uncertainties and perturbations, by using probability measures over \datastates{} as atomic propositions.

Moreover, we would like to enhance the modelling of time in our framework.
Firstly we could relax the timing constraints on the \spell metric by introducing a \emph{time tolerance} and defining a \emph{stretched \spell metric} as a Skorokhod-like metric~\cite{Sk56}, as those used for conformance testing~\cite{DMP17}.
Then, we could provide an extension of our techniques to the case in which also the program shows a continuous time behaviour.

The use of metrics for the analysis of systems stems from~\cite{GJS90,DGJP04,KN96} where, in a process algebraic setting, it is argued that metrics are indeed more informative than behavioural equivalences when quantitative information on the behaviour is taken into account.
The Wasserstein lifting has then found several successful applications: from the definition of \emph{behavioural metrics} (e.g.,~\cite{B05,CLT20,GLT16,GT18}), to privacy~\cite{CGPX14,CCP18,CCP20} and machine learning (e.g.,~\cite{ACB17,GAADC17,TBGS18}).
Usually, one can use behavioural metrics to quantify how well an implementation ($I$) meets its specification ($S$).
In~\cite{CHR12} the authors do so by setting a two players game with weighted choices, and the cost of the game is interpreted as the distance between $I$ and $S$.
Hence the authors propose three distance functions: \emph{correctness}, \emph{coverage}, and \emph{robustness}.
Correctness expresses how often $I$ violates $S$, coverage is its dual,and robustness measures how often $I$ can make an unexpected error with the resulting behaviour still meeting $S$.
A similar game-based approach is used in~\cite{CDDP19} to define a \emph{masking fault-tolerant} distance.
Briefly, a system is masking fault-tolerant if faults do not induce any observable behaviour in the system.
Hence, the proposed distance measures how many faults are tolerated by $I$ while being masked by the states of the system.
Notice that the notion of robustness from~\cite{CHR12} and the masking fault-tolerant distance from~\cite{CDDP19} are quite different from our notions of reliability and robustness.
In fact, we are not interested in counting how many times an error occurs, but in checking whether the system is able to regain the desired behaviour after the occurrence of an error.

One of the main objectives in the control theory research area is the synthesis of controllers satisfying some desired property, such as safety, stability, robustness, etc.
Conversely, our purpose in this paper was to provide some tools for the \emph{analysis} of the interaction of a \emph{given} program with an environment.
The idea is that these tools can be used to test a synthesised controller against its deployment in various environments.
While a direct application of our framework to the synthesis of programs does not seem feasible, it would be interesting to investigate whether a combination of it with learning techniques can be used for the design and synthesis of robust controllers.
Our confidence in a potential application relies on some recently proposed metric-based approaches to controllers synthesis~\cite{GBSST19,AP11}, in which, however, the environment is only modelled deterministically.

To conclude, we remark that our \tracciones{} are not a rewriting of Markov processes as \emph{transformers of distributions}~\cite{KA04,KVAK10}.
Roughly, in the latter approach, one can interpret state-to-state transition probabilities as a single distribution over the state space, so that the behaviour of the system is given by the sequence of the so obtained distributions.
While the two approaches may seem similar, there are some substantial differences.
Firstly, the state space in~\cite{KA04,KVAK10} is finite and discrete, whereas here we are in the continuous setting (cf. Remark~\ref{rmk:continuity}).
Secondly, the transformers of distributions consider the behaviour of the system as a whole, i.e., it is not possible to separate the logical component from the environment.

%===============================================
% ACK
%================================================

\section*{Acknowledgements}

This work is supported by the project \emph{Programs in the wild: uncertainties, adaptability and verification} (ULTRON) of the Icelandic Research Fund (grant No.~228376-051).
Moreover, V. Castiglioni has been supported by the project \emph{Open Problems in the Equational Logic of Processes} (OPEL) of the Icelandic Research Fund (grant No.~196050-051).

\bibliographystyle{alphaurl}
\bibliography{Traccione}

\end{document}